\documentclass[letterpaper, 10 pt, conference]{ieeeconf}
\IEEEoverridecommandlockouts
\IEEEoverridecommandlockouts

\usepackage{stfloats}   

\usepackage{times}
\usepackage{mathptmx}
\usepackage{mathtools}
\usepackage{amssymb}
\mathtoolsset{showonlyrefs}
\usepackage{graphicx}
\usepackage{cite}
\usepackage{tikz}
\usetikzlibrary{arrows.meta, decorations.markings}
\definecolor{C10B981}{HTML}{10B981}
\definecolor{CECFDF5}{HTML}{ECFDF5}
\definecolor{C065F46}{HTML}{065F46}
\definecolor{C6B7280}{HTML}{6B7280}
\definecolor{C374151}{HTML}{374151}
\definecolor{C3B82F6}{HTML}{3B82F6}
\definecolor{CEFF6FF}{HTML}{EFF6FF}
\definecolor{C1E3A8A}{HTML}{1E3A8A}
\definecolor{C4B5563}{HTML}{4B5563}
\definecolor{C1D4ED8}{HTML}{1D4ED8}
\definecolor{C0EA5E9}{HTML}{0EA5E9}
\definecolor{CF0F9FF}{HTML}{F0F9FF}
\definecolor{C0C4A6E}{HTML}{0C4A6E}
\definecolor{C64748B}{HTML}{64748B}
\definecolor{C06B6D4}{HTML}{06B6D4}
\definecolor{CECFEFF}{HTML}{ECFEFF}
\definecolor{C164E63}{HTML}{164E63}
\definecolor{C0E7490}{HTML}{0E7490}
\definecolor{C075985}{HTML}{075985}
\definecolor{CA855F7}{HTML}{A855F7}
\definecolor{CFAF5FF}{HTML}{FAF5FF}
\definecolor{C581C87}{HTML}{581C87}
\definecolor{C7E22CE}{HTML}{7E22CE}
\definecolor{C260AF5}{HTML}{260AF5}
\definecolor{CEBEBFF}{HTML}{EBEBFF}
\definecolor{C78350F}{HTML}{78350F}
\definecolor{C160D91}{HTML}{160D91}
\definecolor{C1D0D91}{HTML}{1D0D91}
\definecolor{CF43F5E}{HTML}{F43F5E}
\definecolor{CFFF1F2}{HTML}{FFF1F2}
\definecolor{C9F1239}{HTML}{9F1239}
\definecolor{CBE123C}{HTML}{BE123C}
\definecolor{CF97316}{HTML}{F97316}
\definecolor{CFFFBEB}{HTML}{FFFBEB}
\definecolor{C7C2D12}{HTML}{7C2D12}
\definecolor{CC2410C}{HTML}{C2410C}
\definecolor{C0E0AF5}{HTML}{0E0AF5}

\usepackage{tikz}
\usetikzlibrary{arrows.meta}   

\usepackage{xcolor}

\usepackage{adjustbox}

\usetikzlibrary{arrows.meta}
\newtheorem{assumption}{\textbf{Assumption}}

\newtheorem{lemma}{\textbf{Lemma}}
\newtheorem{theorem}{\textbf{Theorem}}
\newtheorem{corollary}{\textbf{Corollary}}
\newtheorem{remark}{\textbf{Remark}}
\newtheorem{problem}{\textbf{Problem}}
\makeatletter
\@ifundefined{proof}{\newenvironment{proof}{\par\noindent{\it Proof: }}{\hfill$\blacksquare$\par}}{}
\makeatother
\DeclareMathOperator{\co}{co}
\DeclareMathOperator{\dist}{dist}

\title{\LARGE \bf
Resilient Output Containment under Undisclosed Leader Dynamics and Actuator Attacks
}
\author{Mohammadreza Nematollahi$^{1}$, Khashayar Khorasani$^{1}$, and Nader Meskin$^{2}$\thanks{$^{1}$Department of Electrical and Computer Engineering, Concordia University, Montreal, Canada. Emails: m\_nemato@encs.concordia.ca; kash@ece.concordia.ca.}%
\thanks{$^{2}$Department of Electrical Engineering, Qatar University, Doha, Qatar. Email: nader.meskin@qu.edu.qa.}%
\thanks{Supported by NATO (Emerging Security Challenges Division), NSERC Discovery, DND Supplemental, and DND IDEaS. The views expressed are those of the authors and not necessarily those of the sponsors or the Government of Canada.}%
}

\begin{document}
\maketitle
\thispagestyle{empty}
\pagestyle{empty}
\pagestyle{plain}        
\pagenumbering{arabic}   

\begin{abstract}
This work studies resilient output containment for heterogeneous linear multi-agent systems with actuator cyber-attacks over directed network topologies. The leaders generate bounded locally absolutely continuous trajectories; however, their dynamics, velocity bounds, and motion envelopes are undisclosed to the followers. The cyber-attack model includes state- and input-correlated, as well as bounded exogenous actuator false-data terms. A continuous two-layer adaptive control architecture is proposed. The first layer is a virtual-actuator reconfiguration layer that uses partial state measurements to compensate for actuator attacks in the local tracking-error dynamics. The second layer is a network interface that generates task-space commands via an adaptive interaction protocol. This protocol uses only neighbor-exchanged network-interface states whose dimensions match those of the plant output, and it does not require global graph knowledge for parameter tuning. For directed graphs, under a leader-rooted united spanning-tree condition, a nonsmooth Lyapunov analysis yields asymptotic containment at the command level. The physical outputs then converge to the leader convex hull up to a residual determined by the command-tracking local controllers. Simulation results using a network of quadrotors with damped suspended loads illustrate the performance of attack recovery and containment tracking.
\end{abstract}

\section{INTRODUCTION}
In multi-agent systems (MAS), containment tracking requires the outputs of a follower network to converge to the time-varying convex hull generated by multiple leaders \cite{cao2012distributed}. In security-critical cyber-physical deployments, two cybersecurity constraints complicate this objective. First, the leaders' dynamics often encode sensitive mission objectives or maneuvering patterns, and if exposed, this model knowledge equips adversaries with the information needed to synthesize stealthy cyber-attacks \cite{mustafa2020resilient}, necessitating strict nondisclosure policies that, in turn, restrict designers' access to such information. Second, such confidentiality does not, by itself, rule out local actuator-channel compromises, and adversaries can still corrupt the execution of containment commands. These disruptions may propagate through the network and undermine its collective objective. Therefore, local control architectures must compensate for admissible actuator effects while preventing their propagation into the coordination layer. 

These two considerations motivate a two-layer control design, namely a network layer that generates containment commands without relying on the models of leaders, paired with a local layer that guarantees execution resilience.

Distributed observer-based containment frameworks and their adaptive variants \cite{qin2018output, huang2022distributed, wang2022adaptive, zuo2018adaptive} typically introduce an estimation layer that requires followers to know or reconstruct, under suitable structural assumptions, the leaders' exosystem dynamic parameters, thereby conflicting with nondisclosure constraints. Sliding mode protocols \cite{lv2020adaptive, zhou2015containment, li2012distributed} can avoid this requirement by treating leader trajectories as bounded exogenous signals whose generator dynamics are unknown. However, these approaches have two main shortcomings: first, they often introduce discontinuous terms, leading to undesired chattering; and second, their reliance on \textit{a priori} knowledge of velocity bounds or motion envelopes results in conservative design. 

To alleviate chattering, the boundary-layer approximation of the discontinuous terms is used at the cost of demoting asymptotic convergence to mere ultimate boundedness. Continuous approximations with integrable residuals instead recover asymptotic convergence \cite{qu1994asymptotic, wu2004adaptive} and have recently been extended to multi-agent settings \cite{wang2023adaptive}. On the other hand, relaxing the requirement for known leaders' bounds requires adaptive mechanisms and has received comparatively little attention. Although the approach in \cite{wang2019tracking} addresses this issue using higher-order differentiators, the resulting protocol remains discontinuous and susceptible to chattering.

Continuous adaptive protocols, in turn, have been largely confined to undirected or detailed-balanced directed topologies \cite{xiao2020distributed, tian2022fully}, or are conditioned on local command approximability by adaptive PIDs \cite{lui2022leader}. Consequently, continuous asymptotic containment over general directed leader-rooted graphs, without leader-model reconstruction or known bounds on leaders' motion, remains underexplored.  

Observer-based containment designs separate network-level signal generation from local tracking through an explicit leader model or exosystem layer. This separation ensures that actuator attacks affect only the plant-level execution of a command, not the task-space protocol that generates it, so their effects remain local, non-propagating, and compensable. Under the undisclosed leader models considered here, this separation therefore cannot be inherited from a distributed observer layer and must instead be built directly into the architecture. 

Virtual-actuator and fault-hiding methods provide a natural mechanism for this purpose by inserting a recovery block between the nominal controller and the compromised actuator channel \cite{YADEGAR2021109514}. 

Existing Virtual-actuator results, however, typically provide ultimate boundedness under exogenous or fault-like attacks \cite{zuo2023resilient,jin2017adaptiveIII}, whereas asymptotic recovery results commonly require full state measurements \cite{xiao2020distributed}. Thus, continuous recovery with asymptotic tracking-error convergence under partial state measurements, while allowing combined state-correlated, input-correlated, and bounded exogenous actuator false-data terms, needs to be addressed.

To address the above gaps, this work develops a continuous two-layer architecture for resilient output containment of heterogeneous linear followers under actuator attacks and undisclosed leader dynamics. The contributions are as follows. First, a continuous adaptive protocol is proposed to generate local commands using only neighbor-exchanged network interface states, whose dimension matches the plant-output dimension. The protocol does not require global graph knowledge for tuning, nor does it require \textit{a priori} leader-velocity bounds, motion envelopes, or exosystem models. Second, a novel nonsmooth Lyapunov analysis is used to establish asymptotic command-containment results for the proposed protocol over directed graphs with a leader-rooted united spanning tree, without imposing symmetry restrictions on the follower subgraph. Third, a continuous adaptive virtual-actuator layer is proposed to asymptotically compensate for the actuator attacks, comprising state-correlated, input-correlated, and bounded false-data terms, while using only partial state measurements. Finally, the interconnection of the network-interface and virtual-actuator layers is shown to yield asymptotic convergence of the task-space command errors, while the physical outputs of the heterogeneous followers achieve practical containment.

\section{Problem Formulation}\label{sec:problem}
\subsection{Network Topology}
Consider a network of followers $\mathcal{F}$ and leaders $\mathcal{T}$, with $|\mathcal{F}|=M$ and $|\mathcal{T}|=N>1$, interacting over a fixed weighted digraph $\mathcal{G}(\mathcal{V},\mathcal{E})$, where $\mathcal{V}=\{1,\dots,M,M+1,\dots,M+N\}$ and $\mathcal{E}\subseteq \mathcal V\times \mathcal V$. Followers are indexed by \(i\in\mathcal F=\{1,\ldots,M\}\), while leaders are indexed by \(\ell\in\mathcal T=\{1,\ldots,N\}\). Let $\mathcal{A}_{\mathcal G}=[a_{ij}\geq 0]$ denote the adjacency matrix, with $a_{ij}>0$ whenever there is a directed link from agent $j$ to agent $i$, and let $\mathcal{L}_{\mathcal G}$ denote the graph Laplacian. Partitioning the agents into followers and leaders, after possibly permuting the agent indices, gives
\begin{equation}\label{eq: Laplacian_partition}
\mathcal{L}_{\mathcal G}=\begin{bmatrix}H_F & L_{FL}\\ \mathbf{0}_{N\times M} & \mathbf{0}_{N\times N}\end{bmatrix}.
\end{equation}
Here, $H_F\in\mathbb{R}^{M\times M}$ denotes the information-exchange matrix \cite{abdessameud2016distributed}, defined by $(H_F)_{ii}=\sum_{j\in\mathcal F}a_{ij}+\sum_{\ell\in\mathcal T}a_{i\ell}$ and $(H_F)_{ij}=-a_{ij}$ for $i\ne j$, while $(L_{FL})_{i\ell}=-a_{i\ell}$. Since each row of $\mathcal{L}_{\mathcal G}$ sums to zero, $H_F\mathbf 1_M+L_{FL}\mathbf 1_N=0$. The digraph \(\mathcal G\) is said to have a united spanning tree rooted at the leaders if every follower is reachable from at least one leader.

\begin{assumption}\label{Assumption: Graph}
The digraph $\mathcal{G}$ has a united spanning tree rooted at the leaders.
\end{assumption}

Under Assumption~\ref{Assumption: Graph}, $H_F$ is a nonsingular M-matrix \cite{neumann1980m}. Consequently, $H_F^{-1}\ge 0$, $-H_F^{-1}L_{FL}\ge0$, and $-H_F^{-1}L_{FL}\mathbf 1_N=\mathbf 1_M$, which implies that $-H_F^{-1}L_{FL}$ is row stochastic \cite{cao2012distributed}. It should be noted that the existence of a united spanning tree rooted at the leaders is a necessary topological condition for containment tracking \cite{cao2012distributed}.

\subsection{Followers' Dynamics}
Each follower in this work is modeled as a linear MIMO system with a control input $u_i \in \mathbb{R}^{m}$ and an output $y_{i}\in \mathbb{R}^{m}$ with a vector relative degree denoted by $\mathbf{d}_i = [d_1^i, \ldots, d_m^i]^\top$, with respect to the output $y_{i}$. The followers' dynamics are expressed in the normal-form coordinates \cite{mueller2009normal} as,
\begin{equation}\label{eq: follower_dynamics_lcss}
\begin{cases}
\dot{x}_i=A_ix_i+B_i(L_i\eta_i+\psi_i u_i)\\
\dot{\eta}_i=\Gamma_i \eta_i+\Lambda_i y_{i}\\
y_{i}=C_ix_i
\end{cases}
\end{equation}
where the external-state $x_i\in\mathbb{R}^{\sum_{k=1}^m d_k^i}$ contains the outputs and their derivatives up to order \(d_k^i-1\) in each output channel and $\eta_i\in\mathbb{R}^{n_i-\sum_{k=1}^m d_k^i}$ contains the zero-dynamics coordinates. The matrices $A_i$ and $B_i$ are in the normal controllable canonical form; $C_i$ is a selector matrix that extracts the output $y_i$ from $x_i$; and $L_i$, $\Gamma_i$, and $\Lambda_i$ describe the coupling between the external dynamics and the zero dynamics and are known to designers. The matrix $\psi_i\in\mathbb{R}^{m\times m}$ is known and invertible. We refer the reader to \cite{mueller2009normal} for more details on the structure of matrices.

\begin{assumption}\label{Assumption: Minimum-phase}
The matrix \(\Gamma_i\) associated with the followers' zero dynamics is Hurwitz.
\end{assumption}

Since the followers do not have access to the leaders' model to embed a leader exosystem and solve the regulator equations for that model, the local tracking problem must remain solvable for every admissible bounded leader trajectory, which requires this minimum-phase condition. In this setup, the same condition also rules out unstable zero dynamics, which are known to create vulnerabilities to zero-dynamic attacks in cyber-physical systems \cite{weerakkody2016information}. 

\begin{assumption}\label{Assumption: Measurements}
The variables $x_i$ are available for control implementation.
\end{assumption}

For the present normal-form model, this assumption does not require measurement of the zero-dynamics state $\eta_i$, in contrast to full-state implementations commonly used in resilient containment designs \cite{xiao2020distributed, zuo2023resilient}. An output-feedback extension can be obtained under the standard regularity assumptions required by exact differentiators \cite{oliveira2018generalized, oliveira2017global}, but this extension is not pursued here. 

\subsection{Leaders' Dynamics}
For each leader $\ell\in\mathcal T$, the state $\phi_\ell$ evolves according to
\begin{equation}\label{Eq: leader-dynamics}
\dot{\phi}_\ell(t)=\nu_\ell(t),\quad \phi_\ell(t)\in\mathbb R^m
\end{equation}

\begin{assumption}\label{Assumption: Leaders bound}
For each leader $\ell\in\mathcal T$, the signal $\phi_\ell:[0,\infty)\to\mathbb R^m$ is defined for all $t\ge0$ and is locally absolutely continuous. Moreover, for every admissible realization of the leader motion, there exist finite constants $\bar\phi_\ell>0$ and $\bar\nu_\ell>0$, possibly depending on the realized trajectory and its initial condition, such that $\|\phi_\ell(t)\|\le \bar\phi_\ell,\quad \|\dot\phi_\ell(t)\|\le \bar\nu_\ell$ for almost all $t\ge0$. These constants, however, are unknown. 
\end{assumption}

Assumption~\ref{Assumption: Leaders bound} is imposed at the trajectory level. It does not require followers to know a leader model, an exosystem realization, or bounds on motion. It includes trajectories generated by the classical marginally stable LTI exosystems $\dot\omega_\ell=S_\ell\omega_\ell$, $\phi_\ell=C_\ell\omega_\ell$, whenever the eigenvalues of $S_\ell$ on the imaginary axis are semisimple. More generally, nonlinear exosystems are also covered whenever the realized trajectory remains in a compact set, and the generated output and its derivative are bounded along that trajectory.

The information received from a leader-neighbor is limited to the instantaneous task-space signal \(\phi_\ell(t)\), or equivalently to the relative signal used in the distributed protocol. No follower is assumed to know \(\nu_\ell(t)\), a bound on \(\|\nu_\ell\|\), a leader exosystem realization, a motion envelope, or any future value of \(\phi_\ell\). The boundedness constants in Assumption~\ref{Assumption: Leaders bound} are used only for analysis and are unavailable for tuning the controller parameters.

With $\Phi=\operatorname{col}(\phi_1,\ldots,\phi_N), \Phi_L(t)=\{\phi_\ell(t):\ell\in\mathcal T\}$ and $\nu_L=\dot\Phi$, one has $\Phi\in W^{1,\infty}([0,\infty);\mathbb R^{Nm})$ and $\nu_L\in L_\infty([0,\infty);\mathbb R^{Nm})$, with unknown finite essential bounds. We also denote by $\operatorname{co}(\Phi_L(t))$ the convex hull of the leaders' positions at each instant, with point-to-set distance defined as $\operatorname{dist}(y, S)=\inf_{q\in S}\|y-q\|$.

\subsection{Actuator Attacks}
Followers are subject to actuator attacks modeled as
\begin{equation}\label{Eq: AttackActuator}
u_i(t)=u_{i,\mathrm{nom}}(t)+w_{a_i}(t),
\qquad
w_{a_i}(t)=w_{a_i}^{c}(t)+w_{a_i}^{uc}(t).
\end{equation}
The correlated component $w_{a_i}^{c}$ is generated by
\begin{equation}\label{Eq:CorrelatedAttackRKHS}
w_{a_i}^{c}(t)
= K_{a_i}^x(x_i,t)x_i(t) + K_{a_i}^u(t)u_{i,\mathrm{nom}}(t),
\end{equation}
where \(K_{a_i}^x(x_i,t)x_i(t)\) denotes the state-correlated actuator-side component and \(K_{a_i}^u(t)u_{i,\text{nom}}(t)\) denotes the input-correlated part. The term \(w_{a_i}^{uc}(t)\) is an exogenous false-data injection.

\begin{assumption}\label{Assumption: state-correlated attack envelope}
For each follower \(i\), the state-correlated attack coefficient \(K_{a_i}^x(x_i,t)\) is Carathéodory in \((x_i,t)\) and locally Lipschitz in \(x\) on compact sets, uniformly on finite time intervals up to a locally integrable Lipschitz modulus. Moreover, there exist an unknown constant \(\kappa_x^i>0\) and a known envelope \(\omega_{x_i}:\mathbb R^{\sum_{k=1}^m d_k^i}\times[0,\infty)\to[0,\infty)\) such that \(\omega_{x_i}\) is continuous, locally Lipschitz in \(x\), and satisfies, for every \(R>0\),
\begin{equation}
\sup_{\|x\|\le R,\ t\ge0}\omega_{x_i}(x,t)<\infty .
\end{equation}
Furthermore,
\begin{equation}\label{Eq:StateAttackEnvelopeBound}
\|K_{a_i}^x(x_i,t)\|
\le
\kappa_x^i\omega_{x_i}(x_i,t)
\end{equation}
for all \(x\) and for almost all \(t\ge0\).
\end{assumption}
Assumption~\ref{Assumption: state-correlated attack envelope} specifies the actuator-attack classes for which recovery guarantees are sought. The defender does not need to know the magnitude of the state-correlated actuator attack, but must choose a known envelope $\omega_{x_i}$ that upper-bounds the growth of the presumed cyber-attacks. This reflects the standard adaptive-robust-control tradeoffs, where unknown uncertainty magnitudes are handled by adaptive gains, whereas the admissible growth structure must be specified through a known envelope \cite{wang2025robust}. 

The envelope \(\omega_{x_i}\) can be selected from a prescribed admissible class for the state-correlated actuator attack. One convenient way to obtain such an envelope is through a reproducing-kernel Hilbert space (RKHS) description \cite{wang2025robust, oesterheld2023model}. Briefly, an RKHS \(\mathcal H_{x_i}\) is a Hilbert space of functions associated with a positive definite kernel \(k_{x_i}\), where point evaluations are represented by the kernel. Thus, for any \(f\in\mathcal H_{x_i}\), we have $f(x)=\langle f,k_{x_i}(\cdot,x)\rangle_{\mathcal H_{x_i}}$, and then Cauchy-Schwarz inequality gives $|f(x)|\le \|f\|_{\mathcal H_{x_i}}\sqrt{k_{x_i}(x,x)}$. Therefore, if each scalar entry of \(K_{a_i}^x(\cdot,t)\) belongs to \(\mathcal H_{x_i}\) for almost all \(t\), with an unknown finite essential supremum of its RKHS norm over time, then Assumption~\ref{Assumption: state-correlated attack envelope} is satisfied, after absorbing fixed dimension-dependent constants into \(\kappa_x^i\), with the known envelope $\omega_{x_i}(x,t)=\sqrt{k_{x_i}(x,x)}$. If \(k_{x_i}\) is a Gaussian kernel, then \(\omega_{x_i}\equiv1\), and the usual bounded state-correlated attack gain is recovered.

\begin{assumption}\label{Assumption: Input Correlated Attack}
For each follower \(i\), the map \(K_{a_i}^u:[0,\infty)\to\mathbb R^{m\times m}\) is continuous and essentially bounded satisfying $\sup_{t\ge0}\big\|K_{a_i}^u(t)\big\|\le \overline{k_u^i}$ for some unknown $\overline{k_u^i}$. Denoting $\Delta_{u_i}(t)=I_m+\psi_iK_{a_i}^u(t)\psi_i^{-1}$, we further assume that there exists an unknown constant \(\chi_i>0\) such that $ s^\top\Delta_{u_i}(t)s\ge \chi_i\|s\|^2$ for all \(s\in\mathbb R^m\) and for almost all \(t\ge0\).
\end{assumption}

\begin{assumption}\label{Assumption: Uncorrelated attack}
For each follower \(i\), the exogenous false-data injection \(w_{a_i}^{uc}(t)\) is continuous and essentially bounded, i.e., there exists an unknown finite constant \(k_w^i>0\) such that $\sup_{t\ge0}\big\|w_{a_i}^{uc}(t)\big\|\le k_w^i$.
\end{assumption}

The condition in Assumption~\ref{Assumption: Input Correlated Attack} preserves the defender's strictly positive effective input authority and prevents the input-correlated attack from canceling the defender's command direction. 

For example, in a UAV application, an actuator-side adversary may attempt to drive the vehicle toward an adversarial reference trajectory and hijack the vehicle while partially rejecting the defender's nominal corrections. If the adversarial reference is generated by a bounded marginally stable exosystem, its feedforward contribution can be represented through \(w_{a_i}^{uc}(t)\). The feedback component induced by the compromised actuator channel can then be represented by \(K_{a_i}^x(x_i,t)x_i\) that can be an unknown nonlinear function of the disclosed variables \(x_i\), with an admissible amplitude class that can be bounded, polynomial-growth, or belonging to a prescribed kernel-induced class. Destabilizing actuator attacks \cite{nematollahi2025cyber} can also be captured by this parameterization whenever their dependence on \(x_i\) admits a known envelope of the form required in Assumption~\ref{Assumption: state-correlated attack envelope}.

With the system description and attack models in place, we now state the control objective. In this setup, we distinguish command-level containment from physical-output containment. In fact, for followers with heterogeneous higher-order relative degrees, exact physical-output tracking of arbitrary moving commands would generally require higher-order command derivatives. However, since the available leader information is limited to locally absolutely continuous trajectories with unknown velocity bounds, and under the information pattern considered in this paper, such information is nonexistent or unavailable. Accordingly, the physical outputs are required to converge to the leaders' convex hull, with a residual determined by the local command-tracking error, whereas the local commands need to converge asymptotically.

\begin{problem}\label{prob:main}
Consider the heterogeneous MAS \eqref{eq: follower_dynamics_lcss} over the graph \eqref{eq: Laplacian_partition}, with leaders \eqref{Eq: leader-dynamics} and actuator attacks \eqref{Eq: AttackActuator}-\eqref{Eq:CorrelatedAttackRKHS}. Under Assumptions~\ref{Assumption: Graph}-\ref{Assumption: Uncorrelated attack}, design continuous local nominal controllers \(u_{i,\mathrm{nom}}\) and a continuous distributed interaction law such that all closed-loop follower signals remain bounded and, for every follower \(i\in\mathcal F\),
\begin{equation}\label{eq:obj}
\limsup_{t\to\infty}
\operatorname{dist}\bigl(y_i(t),\operatorname{co}(\Phi_L(t))\bigr)
\le \varepsilon_i ,
\end{equation}
where \(\varepsilon_i\ge0\) is to be determined by the closed-loop architecture. The local controller may use only the external state \(x_i\) and local adaptive variables. The distributed interaction law may use only neighbor-exchanged network-interface states and instantaneous leader-neighbor task-space signals when leader-neighbor edges exist.
\end{problem}

\section{Control Architecture and Main Results}\label{sec:control_stability}
\subsection{Command Filter, Immersion, and Error Dynamics}

For each follower \(i\in\mathcal F\), let \(r_i\in\mathbb R^m\) denote the task-space command supplied by the network-interface layer to be designed later, and choose a stable local command filter
\begin{equation}\label{eq:local_ref_gen}
\dot z_i=F_iz_i+G_ir_i,
\end{equation}
where \(F_i\) is Hurwitz. The filter \eqref{eq:local_ref_gen} is a local design object and is not a model of the leaders. It is called admissible for follower \(i\) if there exist matrices \(\Pi_{i_1}\), \(\Pi_{i_2}\), \(Q_i\), and \(K_i\), with \(\Pi_i=\operatorname{col}(\Pi_{i_1},\Pi_{i_2})\) and \(H_i=C_i\Pi_{i_1}\), such that
\begin{equation}\label{eq:embed}
\begin{split}
\Pi_iF_i
&=
\begin{bmatrix}
A_i&B_iL_i\\
\Lambda_iC_i&\Gamma_i
\end{bmatrix}\Pi_i
+
\begin{bmatrix}
B_i\\
0
\end{bmatrix}Q_i,\\
\Pi_iG_i
&=
\begin{bmatrix}
B_i\\
0
\end{bmatrix}K_i,\\
\operatorname{rank}H_i&=m,\qquad
\operatorname{rank}\left(H_iF_i^{-1}G_i\right)=m .
\end{split}
\end{equation}
The two matching equations in \eqref{eq:embed} ensure that every trajectory generated by the command filter can be embedded into the nominal follower dynamics. The rank condition on \(H_i\) ensures that the embedded nominal output spans the \(m\)-dimensional task space. The rank condition on \(H_iF_i^{-1}G_i\) ensures that the command-interface map used later is well defined.

We next show that admissible command filters are not an additional restrictive assumption. They can be constructed directly from the follower's normal-form representation. Let
\[
n_{x_i}=\sum_{k=1}^m d_k^i,\qquad
n_{\eta_i}=n_i-n_{x_i}.
\]
For each output channel \(k\), choose a Hurwitz polynomial
\[
p_{ik}(s)=s^{r_k^i}+a_{ik,r_k^i}s^{r_k^i-1}+\cdots+a_{ik,2}s+a_{ik,1},
\qquad a_{ik,1}\ne0 .
\]
Because \(A_i\) and \(B_i\) are in the normal controllable canonical form associated with the vector relative degree \([d_1^i,\ldots,d_m^i]^T\), one can choose an external-chain feedback matrix \(K_{x_i}\) such that
\begin{equation*}\label{eq:Fx_construction}
F_{x_i}=A_i+B_iK_{x_i}
\end{equation*}
has, in channel \(k\), the characteristic polynomial \(p_{ik}\). Hence \(F_{x_i}\) is Hurwitz. More explicitly, for the \(k\)-th chain, the last derivative can be assigned as
\[
\dot \xi_{ik,r_k^i}
=
-a_{ik,1}\xi_{ik,1}
-a_{ik,2}\xi_{ik,2}
-\cdots
-a_{ik,r_k^i}\xi_{ik,r_k^i}
+\tilde r_{ik},
\]
which gives the desired stable chain dynamics from \(\tilde r_{ik}\) to the embedded output \(\xi_{ik,1}\). Consequently,
\begin{equation*}\label{eq:D_i_dc_gain}
D_i=C_iF_{x_i}^{-1}B_i
\end{equation*}
is nonsingular. In the decoupled canonical-chain realization, one has
\[
D_i=-\operatorname{diag}\left(a_{i1,1}^{-1},\ldots,a_{im,1}^{-1}\right),
\]
which is nonsingular because \(a_{ik,1}\ne0\) for all \(k\).

Now choose a nonsingular matrix \(S_i\in\mathbb R^{n_{\eta_i}\times n_{\eta_i}}\) and define the internal filter realization
\begin{equation*}\label{eq:assigned_zero_filter}
F_{\eta_i}^{d}=S_i^{-1}\Gamma_iS_i,\qquad
\Lambda_i^{d}=S_i^{-1}\Lambda_i .
\end{equation*}
Since \(\Gamma_i\) is Hurwitz by Assumption~\ref{Assumption: Minimum-phase}, \(F_{\eta_i}^{d}\) is also Hurwitz. This construction explicitly assigns the zero-dynamics part of the command filter to a prescribed stable realization similar to the follower zero dynamics. In particular, taking \(S_i=I\) gives \(F_{\eta_i}^{d}=\Gamma_i\). This is the appropriate assignment freedom for the internal part, where the zero-dynamics eigenvalues are intrinsic to the plant normal form and cannot be arbitrarily reassigned by a command filter, but any stable coordinate realization similar to \(\Gamma_i\) can be used. Define
\begin{equation}\label{eq:filter_constructive_choice}
F_i=
\begin{bmatrix}
F_{x_i} & 0\\
\Lambda_i^{d}C_i & F_{\eta_i}^{d}
\end{bmatrix},
\qquad
G_i=
\begin{bmatrix}
B_iK_i\\
0
\end{bmatrix},
\end{equation}
where \(K_i\in\mathbb R^{m\times m}\) is chosen nonsingular, for example
\begin{equation}\label{eq:Ki_choice_filter}
K_i=-D_i^{-1}.
\end{equation}
Also choose
\begin{equation}\label{eq:Pi_Q_constructive_choice}
\Pi_{i_1}=\begin{bmatrix}I_{n_{x_i}}&0\end{bmatrix},\qquad
\Pi_{i_2}=\begin{bmatrix}0&S_i\end{bmatrix},\qquad
Q_i=\begin{bmatrix}K_{x_i}&-L_iS_i\end{bmatrix}.
\end{equation}
Then \(\Pi_i=\operatorname{col}(\Pi_{i_1},\Pi_{i_2})=\operatorname{diag}(I_{n_{x_i}},S_i)\). With \eqref{eq:filter_constructive_choice} and \eqref{eq:Pi_Q_constructive_choice},
\[
\Pi_iF_i
=
\begin{bmatrix}
I&0\\
0&S_i
\end{bmatrix}
\begin{bmatrix}
F_{x_i} & 0\\
S_i^{-1}\Lambda_iC_i & S_i^{-1}\Gamma_iS_i
\end{bmatrix}
=
\begin{bmatrix}
F_{x_i} & 0\\
\Lambda_iC_i & \Gamma_iS_i
\end{bmatrix}.
\]
On the other hand,
\[
\begin{split}
&\begin{bmatrix}
A_i&B_iL_i\\
\Lambda_iC_i&\Gamma_i
\end{bmatrix}
\Pi_i
+
\begin{bmatrix}
B_i\\
0
\end{bmatrix}Q_i
\\
&=
\begin{bmatrix}
A_i&B_iL_i\\
\Lambda_iC_i&\Gamma_i
\end{bmatrix}
\begin{bmatrix}
I&0\\
0&S_i
\end{bmatrix}
+
\begin{bmatrix}
B_i\\
0
\end{bmatrix}
\begin{bmatrix}
K_{x_i}&-L_iS_i
\end{bmatrix}.
\end{split}
\]
Therefore,
\[
\begin{bmatrix}
A_i+B_iK_{x_i} & B_iL_iS_i-B_iL_iS_i\\
\Lambda_iC_i & \Gamma_iS_i
\end{bmatrix}
=
\begin{bmatrix}
F_{x_i} & 0\\
\Lambda_iC_i & \Gamma_iS_i
\end{bmatrix}
=
\Pi_iF_i.
\]
Thus the first matching equation in \eqref{eq:embed} holds. Similarly,
\[
\Pi_iG_i
=
\begin{bmatrix}
I&0\\
0&S_i
\end{bmatrix}
\begin{bmatrix}
B_iK_i\\
0
\end{bmatrix}
=
\begin{bmatrix}
B_iK_i\\
0
\end{bmatrix}
=
\begin{bmatrix}
B_i\\
0
\end{bmatrix}K_i,
\]
which proves the second matching equation in \eqref{eq:embed}.

It remains to verify the rank conditions. Since
\[
H_i=C_i\Pi_{i_1}=\begin{bmatrix}C_i&0\end{bmatrix},
\]
and \(C_i\) extracts the \(m\) output coordinates from the external normal-form state, \(\operatorname{rank}H_i=m\). Moreover, \(F_i\) is block lower triangular with diagonal blocks \(F_{x_i}\) and \(F_{\eta_i}^{d}\), both Hurwitz. Hence \(F_i\) is Hurwitz. Since \(G_i=\operatorname{col}(B_iK_i,0)\), the block triangular inverse gives
\[
H_iF_i^{-1}G_i
=
C_iF_{x_i}^{-1}B_iK_i
=
D_iK_i.
\]
With the choice \eqref{eq:Ki_choice_filter}, this becomes
\[
H_iF_i^{-1}G_i=-I_m,
\]
and therefore
\[
\operatorname{rank}\left(H_iF_i^{-1}G_i\right)=m.
\]
This proves that an admissible command filter satisfying \eqref{eq:embed} exists for every follower in normal form under Assumption~\ref{Assumption: Minimum-phase}. The zero-dynamics part of the filter is explicitly included through \(F_{\eta_i}^{d}=S_i^{-1}\Gamma_iS_i\), which can be chosen as any stable coordinate realization of the plant zero dynamics.

Now, choose the nominal actuator command as
\begin{equation}\label{eq:unom}
u_{i,\mathrm{nom}}=\psi_i^{-1}\bigl(Q_iz_i+K_ir_i+u_i^r\bigr),
\end{equation}
where \(u_i^r\in\mathbb R^m\) is the recovery input to be designed. Define the local embedding errors
\begin{equation}\label{eq:local_errors}
e_{x_i}=x_i-\Pi_{i_1}z_i,\qquad
e_{\eta_i}=\eta_i-\Pi_{i_2}z_i.
\end{equation}
For compactness, set
\begin{equation*}\label{eq:Delta_d0_def}
\Delta_{x_i}(x_i,t)=\psi_iK_{a_i}^x(x_i,t),\qquad
d_{0_i}(t)=\psi_iw_{a_i}^{uc}(t).
\end{equation*}
Using \eqref{eq:unom}, the actuator attack model \eqref{Eq: AttackActuator}--\eqref{Eq:CorrelatedAttackRKHS}, and the immersion identities \eqref{eq:embed}, the error dynamics are
\begin{equation}\label{eq:errdyn}
\begin{split}
\dot e_{x_i}={}&(A_i+B_i\Delta_{x_i})e_{x_i}
+B_iL_ie_{\eta_i}
+B_i\Delta_{u_i}u_i^r\\
&+B_i(w_{\kappa_i}+w_{r_i}+d_{0_i}),\\
\dot e_{\eta_i}={}&\Gamma_ie_{\eta_i}+\Lambda_iC_ie_{x_i},
\end{split}
\end{equation}
where
\begin{equation}\label{eq:separated_terms}
\begin{split}
w_{\kappa_i}
&=(\Delta_{x_i}\Pi_{i_1}+\psi_iK_{a_i}^u\psi_i^{-1}Q_i)z_i,\\
w_{r_i}
&=\psi_iK_{a_i}^u\psi_i^{-1}K_ir_i.
\end{split}
\end{equation}
Indeed, if \(w_{a_i}=0\), \(u_i^r=0\), and the follower is initialized on the manifold \(x_i=\Pi_{i_1}z_i\), \(\eta_i=\Pi_{i_2}z_i\), then \eqref{eq:embed} makes this manifold invariant.

We next have the following lemma.
\begin{lemma}\label{lem:normalformdesign}
For each follower satisfying Assumption~\ref{Assumption: Minimum-phase}, fix any \(\Sigma_i=\Sigma_i^T>0\) and \(0<\varepsilon_{a_i}<\lambda_{\min}(\Sigma_i)\). Then there exists \(\bar\alpha_i>0\) such that, for all \(\alpha_i\ge\bar\alpha_i\), the Riccati equation
\begin{equation}\label{eq:are}
A_i^TP_i+P_iA_i-\alpha_iP_iB_iB_i^TP_i+\Sigma_i=0
\end{equation}
has a stabilizing solution \(P_i=P_i^T>0\), and there exist \(R_i=R_i^T>0\) and \(\varsigma_i>0\) such that
\begin{equation}\label{eq:R_lyap}
\Gamma_i^TR_i+R_i\Gamma_i=-I
\end{equation}
and
\begin{equation}\label{eq:Xi}
\Xi_i=
\begin{bmatrix}
\Sigma_i-\varepsilon_{a_i}I & -M_i\\
-M_i^T & \varsigma_i I
\end{bmatrix}>0,
\qquad
M_i=P_iB_iL_i+\varsigma_iC_i^T\Lambda_i^TR_i .
\end{equation}
\end{lemma}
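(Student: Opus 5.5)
The proof will fix the constants in order: first $R_i$, then $\varsigma_i$, and finally $\bar\alpha_i$. It rests on two observations. First, the lower-right block of $\Xi_i$ is a multiple of the identity, so positivity of $\Xi_i$ reduces to a Schur-complement inequality. Second, the coupling term $P_iB_i$ produced by the Riccati equation \eqref{eq:are} becomes small as $\alpha_i$ grows, at the rate $\alpha_i^{-1/2}$.

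\emph{Existence of $P_i$ and $R_i$.} Since $A_i,B_i$ are in the normal controllable canonical form, $(A_i,B_i)$ is controllable. Since $\Sigma_i>0$, the pair $(\Sigma_i^{1/2},A_i)$ is observable. Rewrite \eqref{eq:are} as the standard LQR Riccati equation with state weight $\Sigma_i$ and control weight $\alpha_i^{-1}I$. Standard LQR theory then gives a unique stabilizing solution $P_i=P_i(\alpha_i)=P_i^T>0$ for every $\alpha_i>0$. Since $\Gamma_i$ is Hurwitz by Assumption~\ref{Assumption: Minimum-phase}, the Lyapunov equation \eqref{eq:R_lyap} has a unique solution $R_i=R_i^T>0$, and this $R_i$ does not depend on $\alpha_i$ or $\varsigma_i$.

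\emph{Smallness of $P_iB_i$.} I will use the value-function interpretation $x^TP_i(\alpha)x=\min_u\int_0^\infty(x^T\Sigma_ix+\alpha^{-1}\|u\|^2)\,dt$. This shows that $P_i(\alpha)$ is nonincreasing in $\alpha$. Hence $0<P_i(\alpha)\le P_i(1)$ for $\alpha\ge1$. Evaluating \eqref{eq:are} along any unit vector $x$ gives
\[
\alpha\|B_i^TP_i x\|^2=x^T(A_i^TP_i+P_iA_i+\Sigma_i)x\le 2\|A_i\|\|P_i(1)\|+\|\Sigma_i\|=:c_i^2 .
\]
Therefore $\|P_iB_i\|\le c_i\alpha^{-1/2}$ for all $\alpha\ge1$, where $c_i$ is independent of $\alpha$. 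I expect this to be the main step. One point needs care: the limit of $P_i$ itself does not vanish, since the integrator chains cannot be zeroed instantly. Only the product $P_iB_i$ decays, and the monotonicity argument is exactly what makes that decay uniform in $\alpha$.

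\emph{Schur complement and choice of constants.} Set $\delta_i=\lambda_{\min}(\Sigma_i)-\varepsilon_{a_i}>0$. Since $\varsigma_iI>0$, the condition $\Xi_i>0$ is equivalent to $\Sigma_i-\varepsilon_{a_i}I-\varsigma_i^{-1}M_iM_i^T>0$. A sufficient condition is $\|M_i\|^2<\varsigma_i\delta_i$. Using $\|a+b\|^2\le2\|a\|^2+2\|b\|^2$,
\[
\frac{\|M_i\|^2}{\varsigma_i}\le \frac{2\|P_iB_i\|^2\|L_i\|^2}{\varsigma_i}+2\varsigma_i\|C_i^T\Lambda_i^TR_i\|^2
\le \frac{2c_i^2\|L_i\|^2}{\alpha_i\varsigma_i}+2\varsigma_i\|C_i^T\Lambda_i^TR_i\|^2 .
\]
First choose $\varsigma_i>0$ small enough that $2\varsigma_i\|C_i^T\Lambda_i^TR_i\|^2<\delta_i/2$. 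This is possible because $R_i$ is fixed. Then take
\[
\bar\alpha_i=\max\{1,\,4c_i^2\|L_i\|^2/(\varsigma_i\delta_i)+1\}.
\]
For every $\alpha_i\ge\bar\alpha_i$, the first term is below $\delta_i/2$. Hence $\|M_i\|^2/\varsigma_i<\delta_i$, and so $\Xi_i>0$.

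This choice works uniformly in $\alpha_i$, because $\varsigma_i$ and $R_i$ are fixed before $\alpha_i$ is chosen. The bound on $P_iB_i$ is the only place where $\alpha_i$ enters.
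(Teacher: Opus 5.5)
Your proof is correct and follows the same outline as the paper's: a stabilizing Riccati solution, $R_i$ from the Lyapunov equation, smallness of $P_iB_i$, and a Schur complement. Two steps are handled differently.

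\textbf{Smallness of $P_iB_i$.} The paper does not prove $P_i(\alpha_i)B_i\to0$; it cites cheap-control asymptotics for normal-form chains. You derive it directly. Monotonicity of the LQR value function gives $P_i(\alpha_i)\le P_i(1)$, and evaluating the Riccati equation on unit vectors gives the explicit rate $\|P_iB_i\|\le c_i\alpha_i^{-1/2}$. This uses nothing about the chain structure beyond controllability.

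\textbf{Choice of $\varsigma_i$.} The paper takes the Schur complement on the $(1,1)$ block and ties the weight to the gain through $\varsigma_i=\sqrt{a_i}$, where $a_i=\|P_iB_iL_i\|$, with a separate case for $a_i=0$. So its $\varsigma_i$ changes with $\alpha_i$, and $\bar\alpha_i$ is only shown to exist. You fix $\varsigma_i$ from $R_i$ alone and then give a closed-form $\bar\alpha_i$. The underlying sufficient condition is the same in both: since $\|(\Sigma_i-\varepsilon_{a_i}I)^{-1}\|=1/\delta_i$, the paper's bound also reduces to $\|M_i\|^2<\varsigma_i\delta_i$.

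Your ordering gives a slightly stronger, constructive statement: $\varsigma_i$ is uniform over all $\alpha_i\ge\bar\alpha_i$, the threshold is explicit, and no external citation is needed. The paper's version is shorter but not quantitative. You are also right that the stabilizing solution exists for every $\alpha_i>0$, not only for large $\alpha_i$ as the paper states.

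One small caveat on an aside. $P_i(\alpha_i)$ fails to vanish only when some $d_k^i\ge2$. If all $d_k^i=1$, for example $A_i=0$ and $B_i=I$, then $P_i=\alpha_i^{-1/2}\Sigma_i^{1/2}\to0$. Your argument never uses this remark, so nothing is affected.
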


\begin{proof}
Since \((A_i,B_i)\) is controllable, \eqref{eq:are} admits a stabilizing solution \(P_i(\alpha_i)=P_i^T(\alpha_i)>0\) for sufficiently large \(\alpha_i\). Moreover, by the cheap-control Riccati scaling for controllable normal-form chains \cite{saberi2000control},
\begin{equation}\label{eq:cheap_control_PB}
P_i(\alpha_i)B_i\to0
\qquad
\text{as } \alpha_i\to\infty .
\end{equation}
Consequently,
\begin{equation}\label{eq:PBIL_small}
P_i(\alpha_i)B_iL_i\to0
\qquad
\text{as } \alpha_i\to\infty .
\end{equation}

Since \(\Gamma_i\) is Hurwitz by Assumption~\ref{Assumption: Minimum-phase}, the Lyapunov equation \eqref{eq:R_lyap} has a unique solution \(R_i=R_i^T>0\). Define
\begin{equation*}\label{eq:lemma_aux_defs}
A_{0_i}=\Sigma_i-\varepsilon_{a_i}I,\qquad
A_{c_i}=P_iB_iL_i,\qquad
B_{c_i}=C_i^T\Lambda_i^TR_i .
\end{equation*}
Then \(A_{0_i}>0\), and by \eqref{eq:PBIL_small},
\begin{equation*}\label{eq:Ac_small_again}
a_i=\|A_{c_i}\|\to0
\qquad
\text{as } \alpha_i\to\infty .
\end{equation*}
With these definitions, $M_i=A_{c_i}+\varsigma_iB_{c_i}$. By the Schur complement, \(\Xi_i>0\) is equivalent to
\begin{equation}\label{eq:Schur_condition_Xi}
\varsigma_i I-(A_{c_i}+\varsigma_iB_{c_i})^TA_{0_i}^{-1}(A_{c_i}+\varsigma_iB_{c_i})>0 .
\end{equation}
For every \(\varsigma_i>0\),
\begin{equation*}\label{eq:Schur_norm_bound}
\begin{split}
&(A_{c_i}+\varsigma_iB_{c_i})^TA_{0_i}^{-1}(A_{c_i}+\varsigma_iB_{c_i})\\
&\le
\|A_{0_i}^{-1}\|
\|A_{c_i}+\varsigma_iB_{c_i}\|^2 I\\
&\le
\|A_{0_i}^{-1}\|
\bigl(a_i+\varsigma_i\|B_{c_i}\|\bigr)^2 I .
\end{split}
\end{equation*}
If \(a_i>0\), choose \(\varsigma_i=\sqrt{a_i}\). Then
\[
\|A_{0_i}^{-1}\|
\bigl(a_i+\sqrt{a_i}\|B_{c_i}\|\bigr)^2
=
O(a_i)
=
o(\sqrt{a_i})
=
o(\varsigma_i)
\]
as \(\alpha_i\to\infty\). Hence, for sufficiently large \(\alpha_i\), the positive term \(\varsigma_i I\) dominates the Schur-complement correction in \eqref{eq:Schur_condition_Xi}. If \(a_i=0\), then
\[
\|A_{0_i}^{-1}\|
\bigl(\varsigma_i\|B_{c_i}\|\bigr)^2
=
O(\varsigma_i^2),
\]
and any sufficiently small \(\varsigma_i>0\) makes \eqref{eq:Schur_condition_Xi} hold. Therefore, for all sufficiently large \(\alpha_i\), there exists \(\varsigma_i>0\) such that \(\Xi_i>0\).
\end{proof}

\subsection{Local Virtual-Actuator Recovery}
The local recovery layer is designed to ensure convergence of $e_{x_i}$ and $e_{\eta_i}$ despite the admissible class of actuator attacks. Let us set
\[
s_i=B_i^TP_ie_{x_i},\qquad
\kappa_{\psi_i}=\|\psi_i\|\|\psi_i^{-1}\|,
\]
and write \(\omega_{x_i}=\omega_{x_i}(x_i,t)\) for compactness. Let us define the known nonnegative regressors,
\begin{equation}\label{eq:Omega1}
\begin{split}
\Omega_{i,1}^{(1)}
&=
2\|\psi_i\|\|\Pi_{i_1}z_i\|\omega_{x_i},\\
\Omega_{i,2}^{(1)}
&=
2\kappa_{\psi_i}\|Q_iz_i\|,\\
\Omega_{i,3}^{(1)}
&=
2\kappa_{\psi_i}\|K_ir_i\|,\\
\Omega_{i,4}^{(1)}
&=
2\|\psi_i\|,\\
\Omega_i^{(2)}
&=
1+\|\psi_i\|^2\omega_{x_i}^2 .
\end{split}
\end{equation}
The following lemma collects the algebraic consequences of Assumptions~\ref{Assumption: state-correlated attack envelope}-\ref{Assumption: Uncorrelated attack} in the direction of \(s_i\).

\begin{lemma}\label{lem:attack_regressor_envelope}
Under Assumptions~\ref{Assumption: state-correlated attack envelope}--\ref{Assumption: Uncorrelated attack}, there exist unknown finite constants \(\Theta_i>0\) and \(\theta_{i,q}>0\), \(q=1,\ldots,4\), such that, for all admissible trajectories and for almost all \(t\ge0\),
\begin{equation}\label{eq:statebound_compact}
2s_i^T\Delta_{x_i}e_{x_i}
\le
\varepsilon_{a_i}\|e_{x_i}\|^2+
\frac{\|\psi_i\|^2(\kappa_x^i)^2}{\varepsilon_{a_i}}\omega_{x_i}^2\|s_i\|^2,
\end{equation}
and
\begin{equation}\label{eq:uncertainty_envelope_bound}
\begin{split}
&2s_i^T(w_{\kappa_i}+w_{r_i}+d_{0_i})
+\alpha_i\|s_i\|^2
+\frac{\|\psi_i\|^2(\kappa_x^i)^2}{\varepsilon_{a_i}}\omega_{x_i}^2\|s_i\|^2\\
&\le
\Theta_i\Omega_i^{(2)}\|s_i\|^2
+
\sum_{q=1}^4\theta_{i,q}\Omega_{i,q}^{(1)}\|s_i\|.
\end{split}
\end{equation}
\end{lemma}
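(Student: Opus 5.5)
The argument is elementary and splits along the two displayed inequalities. It uses only Cauchy--Schwarz, Young's inequality, and the norm bounds in Assumptions~\ref{Assumption: state-correlated attack envelope}--\ref{Assumption: Uncorrelated attack}, together with the definitions $\Delta_{x_i}=\psi_iK_{a_i}^x$ and $d_{0_i}=\psi_iw_{a_i}^{uc}$ and the expressions \eqref{eq:separated_terms}.

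For \eqref{eq:statebound_compact}, I will first write
\[
2s_i^T\Delta_{x_i}e_{x_i}\le 2\|s_i\|\,\|\psi_i\|\,\|K_{a_i}^x(x_i,t)\|\,\|e_{x_i}\|\le 2\|\psi_i\|\kappa_x^i\omega_{x_i}\|s_i\|\|e_{x_i}\|.
\]
The second step uses \eqref{Eq:StateAttackEnvelopeBound}, which holds for almost all $t$. I will then apply Young's inequality $2ab\le \varepsilon_{a_i}a^2+b^2/\varepsilon_{a_i}$ with $a=\|e_{x_i}\|$ and $b=\|\psi_i\|\kappa_x^i\omega_{x_i}\|s_i\|$. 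This yields exactly the stated right-hand side.

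For \eqref{eq:uncertainty_envelope_bound}, I will bound each summand of $2s_i^T(w_{\kappa_i}+w_{r_i}+d_{0_i})$ by a constant times the matching regressor times $\|s_i\|$.
\begin{itemize}
\item The term $2s_i^T\Delta_{x_i}\Pi_{i_1}z_i$ is at most $2\|\psi_i\|\kappa_x^i\omega_{x_i}\|\Pi_{i_1}z_i\|\|s_i\|=\kappa_x^i\Omega_{i,1}^{(1)}\|s_i\|$.
\item The term $2s_i^T\psi_iK_{a_i}^u\psi_i^{-1}Q_iz_i$ is at most $2\kappa_{\psi_i}\overline{k_u^i}\|Q_iz_i\|\|s_i\|=\overline{k_u^i}\Omega_{i,2}^{(1)}\|s_i\|$.
\item Likewise, the $w_{r_i}$ contribution is at most $\overline{k_u^i}\Omega_{i,3}^{(1)}\|s_i\|$.
\item The $d_{0_i}$ contribution is at most $k_w^i\Omega_{i,4}^{(1)}\|s_i\|$.
\end{itemize}
This suggests choosing $\theta_{i,1}=\kappa_x^i$, $\theta_{i,2}=\theta_{i,3}=\overline{k_u^i}$, and $\theta_{i,4}=k_w^i$. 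If strict positivity is required, I will replace each by $\max\{\cdot,1\}$, which is harmless because the regressors are nonnegative.

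The remaining quadratic terms are $\alpha_i\|s_i\|^2$ and $\frac{\|\psi_i\|^2(\kappa_x^i)^2}{\varepsilon_{a_i}}\omega_{x_i}^2\|s_i\|^2$. Both are dominated by $\Theta_i\Omega_i^{(2)}\|s_i\|^2$ with
\[
\Theta_i=\max\Bigl\{\alpha_i,\ \frac{(\kappa_x^i)^2}{\varepsilon_{a_i}}\Bigr\}.
\]
This works because $\Omega_i^{(2)}=1+\|\psi_i\|^2\omega_{x_i}^2$: the constant part $1$ absorbs $\alpha_i$, and the part $\|\psi_i\|^2\omega_{x_i}^2$ absorbs the envelope term.

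There is no genuine obstacle here. The only points needing care are bookkeeping ones.
\begin{itemize}
\item The assumptions hold for almost all $t$, so the inequalities are claimed only almost everywhere.
\item The constants $\Theta_i$ and $\theta_{i,q}$ must depend only on the unknown attack bounds and on the fixed design quantities $\alpha_i$ and $\varepsilon_{a_i}$. They must not depend on the trajectory, since the regressors carry all state dependence.
\end{itemize}
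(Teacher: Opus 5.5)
Your proposal is correct and matches the paper's proof essentially step for step. It uses the same Cauchy--Schwarz/Young bound for \eqref{eq:statebound_compact}, the same term-by-term regressor bounds with $\theta_{i,1}=\kappa_x^i$, $\theta_{i,2}=\theta_{i,3}=\overline{k_u^i}$, $\theta_{i,4}=k_w^i$, and the same choice $\Theta_i\ge\max\{\alpha_i,(\kappa_x^i)^2/\varepsilon_{a_i}\}$. Your $\max\{\cdot,1\}$ safeguard is a small but useful addition, because Assumption~\ref{Assumption: Input Correlated Attack} does not explicitly guarantee $\overline{k_u^i}>0$, which the lemma's requirement $\theta_{i,q}>0$ needs.
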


\begin{proof}
By \eqref{Eq:StateAttackEnvelopeBound} and the definition \(\Delta_{x_i}=\psi_iK_{a_i}^x\),
\[
\|\Delta_{x_i}(x_i,t)\|
=
\|\psi_iK_{a_i}^x(x_i,t)\|
\le
\|\psi_i\|\kappa_x^i\omega_{x_i}(x_i,t).
\]
Therefore, by Cauchy--Schwarz and Young's inequality,
\[
\begin{split}
2s_i^T\Delta_{x_i}e_{x_i}
&\le
2\|s_i\|\|\Delta_{x_i}\|\|e_{x_i}\|\\
&\le
2\|\psi_i\|\kappa_x^i\omega_{x_i}\|s_i\|\|e_{x_i}\|\\
&\le
\varepsilon_{a_i}\|e_{x_i}\|^2+
\frac{\|\psi_i\|^2(\kappa_x^i)^2}{\varepsilon_{a_i}}\omega_{x_i}^2\|s_i\|^2,
\end{split}
\]
which proves \eqref{eq:statebound_compact}.

Next, using \eqref{eq:separated_terms}, the first component of \(w_{\kappa_i}\) satisfies
\begin{equation*}\label{eq:matched_bound_1}
\begin{split}
2s_i^T\Delta_{x_i}\Pi_{i_1}z_i
&\le
2\|s_i\|\|\Delta_{x_i}\|\|\Pi_{i_1}z_i\|\\
&\le
2\|s_i\|\|\psi_i\|\kappa_x^i\omega_{x_i}\|\Pi_{i_1}z_i\|\\
&=
\kappa_x^i\Omega_{i,1}^{(1)}\|s_i\|.
\end{split}
\end{equation*}
By Assumption~\ref{Assumption: Input Correlated Attack},
\[
\|K_{a_i}^u(t)\|\le \overline{k_u^i}.
\]
Hence the second component of \(w_{\kappa_i}\) satisfies
\begin{equation*}\label{eq:matched_bound_2}
\begin{split}
2s_i^T\psi_iK_{a_i}^u\psi_i^{-1}Q_iz_i
&\le
2\|s_i\|\|\psi_i\|\|K_{a_i}^u\|\|\psi_i^{-1}\|\|Q_iz_i\|\\
&\le
\overline{k_u^i}\Omega_{i,2}^{(1)}\|s_i\|.
\end{split}
\end{equation*}
Similarly, the \(r_i\)-dependent input-correlated term satisfies
\begin{equation*}\label{eq:matched_bound_3}
\begin{split}
2s_i^T\psi_iK_{a_i}^u\psi_i^{-1}K_ir_i
&\le
2\|s_i\|\|\psi_i\|\|K_{a_i}^u\|\|\psi_i^{-1}\|\|K_ir_i\|\\
&\le
\overline{k_u^i}\Omega_{i,3}^{(1)}\|s_i\|.
\end{split}
\end{equation*}
Finally, by Assumption~\ref{Assumption: Uncorrelated attack},
\[
\|w_{a_i}^{uc}(t)\|\le k_w^i,
\]
and therefore
\begin{equation*}\label{eq:matched_bound_4}
2s_i^Td_{0_i}
=
2s_i^T\psi_iw_{a_i}^{uc}
\le
2\|s_i\|\|\psi_i\|k_w^i
=
k_w^i\Omega_{i,4}^{(1)}\|s_i\|.
\end{equation*}

Combining these bounds gives
\[
2s_i^T(w_{\kappa_i}+w_{r_i}+d_{0_i})
\le
\sum_{q=1}^4\theta_{i,q}\Omega_{i,q}^{(1)}\|s_i\|,
\]
where one admissible choice is
\[
\theta_{i,1}=\kappa_x^i,\qquad
\theta_{i,2}=\overline{k_u^i},\qquad
\theta_{i,3}=\overline{k_u^i},\qquad
\theta_{i,4}=k_w^i.
\]
It remains to combine the purely quadratic terms. Since
\[
\Omega_i^{(2)}=1+\|\psi_i\|^2\omega_{x_i}^2,
\]
we have
\[
\alpha_i\|s_i\|^2
+
\frac{\|\psi_i\|^2(\kappa_x^i)^2}{\varepsilon_{a_i}}\omega_{x_i}^2\|s_i\|^2
\le
\Theta_i\Omega_i^{(2)}\|s_i\|^2
\]
for any finite constant satisfying
\[
\Theta_i
\ge
\max\left\{
\alpha_i,\frac{(\kappa_x^i)^2}{\varepsilon_{a_i}}
\right\}.
\]
Adding the linear and quadratic bounds proves \eqref{eq:uncertainty_envelope_bound} and completes the proof of lemma.
\end{proof}

Consider now the continuous virtual-actuator law,
\begin{equation}\label{eq:va}
\begin{cases}
u_i^r
=-\displaystyle\sum_{q=1}^4\dfrac{\beta_{i,q}^2(\Omega_{i,q}^{(1)})^2s_i} {\beta_{i,q}\Omega_{i,q}^{(1)}\|s_i\|+\mu_i(t)} -\dfrac{1}{2}\rho_i\Omega_i^{(2)}s_i,\\[3mm]
\dot\beta_{i,q}
=
-b_{\beta i,q}\mu_i(t)\beta_{i,q}
+b_{\beta i,q}\Omega_{i,q}^{(1)}\|s_i\|,
\quad q=1,\ldots,4,\\[1mm]
\dot\rho_i
=
-b_{\rho i}\mu_i(t)\rho_i
+b_{\rho i}\Omega_i^{(2)}\|s_i\|^2,
\end{cases}
\end{equation}
where \(\beta_{i,q}(0)>0\), \(\rho_i(0)>0\), \(b_{\beta i,q}>0\), \(b_{\rho i}>0\), and \(\mu_i:[0,\infty)\to(0,\infty)\) is continuous and satisfies \(\mu_i\in L_1[0,\infty)\).

\begin{theorem}\label{thm:local}
Under Assumptions~\ref{Assumption: Minimum-phase}, \ref{Assumption: Measurements}, and~\ref{Assumption: state-correlated attack envelope}--\ref{Assumption: Uncorrelated attack}, let the local command filter be admissible. Choose \(P_i\), \(R_i\), and \(\varsigma_i\) as in Lemma~\ref{lem:normalformdesign} with \(\alpha_i\ge\bar\alpha_i\). If
\[
r_i\in C([0,\infty),\mathbb R^m)\cap L_\infty[0,\infty),
\]
then, for every initial condition satisfying \(\beta_{i,q}(0)>0\) and \(\rho_i(0)>0\), the local closed-loop system \eqref{eq:local_ref_gen}, \eqref{eq:errdyn}, and \eqref{eq:va} has a unique complete Carathéodory solution. All local signals are bounded, \(u_i^r\) is continuous along the system trajectory, and
\begin{equation}\label{eq:local_conv_statement}
\lim_{t\to\infty}e_{x_i}(t)=0,\qquad
\lim_{t\to\infty}e_{\eta_i}(t)=0 .
\end{equation}
\end{theorem}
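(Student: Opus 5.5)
Let $e_i=\operatorname{col}(e_{x_i},e_{\eta_i})$. The plan is a composite Lyapunov function on $e_i$ plus quadratic penalties on the adaptive gain errors, followed by a Barbalat/integrability argument that uses $\mu_i\in L_1$. The first step is well-posedness. Since $r_i$ is continuous and bounded and $F_i$ is Hurwitz, $z_i$ is a bounded $C^1$ signal, so every regressor $\Omega^{(1)}_{i,q}$, $\Omega^{(2)}_i$ is continuous in $(x_i,t)$ and locally Lipschitz in $x_i$. The fraction terms in \eqref{eq:va} are locally Lipschitz because the denominators are bounded below by $\mu_i(t)>0$. By the Carathéodory and local-Lipschitz hypotheses of Assumption~\ref{Assumption: state-correlated attack envelope} and the continuity in Assumptions~\ref{Assumption: Input Correlated Attack}--\ref{Assumption: Uncorrelated attack}, the closed loop therefore has a unique maximal Carathéodory solution, and $u_i^r$ is continuous along it. The adaptation laws keep $\beta_{i,q},\rho_i>0$, since each is a linear ODE with a nonnegative forcing term. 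Completeness will follow from the boundedness established next.

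The Lyapunov candidate is
\[
V_i=e_{x_i}^TP_ie_{x_i}+\varsigma_ie_{\eta_i}^TR_ie_{\eta_i}+\frac{1}{2b_{\rho i}}\tilde\rho_i^2\chi_i+\sum_{q=1}^4\frac{1}{b_{\beta i,q}}\tilde\beta_{i,q}^2\chi_i ,
\]
where $\tilde\rho_i=\rho_i-\rho_i^\ast$ and $\tilde\beta_{i,q}=\beta_{i,q}-\beta_{i,q}^\ast$, with ideal values $\rho_i^\ast=2\Theta_i/\chi_i$ and $\beta_{i,q}^\ast=\theta_{i,q}/\chi_i$. The exact scaling constants will be fixed during the calculation.

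The derivative is estimated term by term.
\begin{itemize}
\item The Riccati equation \eqref{eq:are} gives $e_{x_i}^T(A_i^TP_i+P_iA_i)e_{x_i}=-e_{x_i}^T\Sigma_ie_{x_i}+\alpha_i\|s_i\|^2$.
\item The state-attack term is bounded by \eqref{eq:statebound_compact}.
\item The cross terms $2e_{x_i}^TP_iB_iL_ie_{\eta_i}$ and $2\varsigma_ie_{\eta_i}^TR_i\Lambda_iC_ie_{x_i}$ combine with $-\varsigma_i\|e_{\eta_i}\|^2$ from \eqref{eq:R_lyap} to give $-\operatorname{col}(e_{x_i},e_{\eta_i})^T\Xi_i\operatorname{col}(e_{x_i},e_{\eta_i})$, which is negative definite by Lemma~\ref{lem:normalformdesign}.
\item The remaining matched terms are bounded by \eqref{eq:uncertainty_envelope_bound}.
\item For the recovery input, Assumption~\ref{Assumption: Input Correlated Attack} gives $2s_i^T\Delta_{u_i}u_i^r\le 2\chi_i s_i^Tu_i^r$, because $u_i^r$ is a nonnegative scalar multiple of $-s_i$.
\end{itemize}
The standard identity
\[
\theta\Omega\|s\|-\frac{\theta\beta^2\Omega^2\|s\|^2}{\beta\Omega\|s\|+\mu}\le \theta\mu
\]
then cancels each linear regressor term up to a residual $\theta_{i,q}\mu_i$. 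The quadratic term $\rho_i\Omega_i^{(2)}\|s_i\|^2$ cancels $\Theta_i\Omega_i^{(2)}\|s_i\|^2$ up to the mismatch $\tilde\rho_i$. The adaptation laws are designed so that the gradient parts remove these mismatch terms. The leakage parts contribute
\[
-\mu_i\tilde\rho_i\rho_i\le \tfrac{\mu_i}{2}\bigl((\rho_i^\ast)^2-\tilde\rho_i^2\bigr),
\]
and similarly for each $\beta_{i,q}$.

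The target inequality is
\[
\dot V_i\le -c_i\|e_i\|^2+\bar c_i\,\mu_i(t)\quad\text{a.e.},
\]
with $c_i=\lambda_{\min}(\Xi_i)$ (up to scaling) and $\bar c_i$ a finite unknown constant. Integrating and using $\mu_i\in L_1$ gives $V_i(t)\le V_i(0)+\bar c_i\|\mu_i\|_{L_1}$. Hence $e_{x_i}$, $e_{\eta_i}$, $\beta_{i,q}$ and $\rho_i$ are bounded, so $x_i=e_{x_i}+\Pi_{i_1}z_i$ and $\eta_i$ are bounded, the regressors are bounded, and $u_i^r$ is bounded. This excludes finite escape, so the solution is complete. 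The same integration gives $e_i\in L_2$. Since $\dot e_i$ is essentially bounded, $e_i$ is uniformly continuous, and Barbalat's lemma yields \eqref{eq:local_conv_statement}.

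I expect the main obstacle to be the bookkeeping that makes the $\alpha_i\|s_i\|^2$ Riccati term and the unknown $\chi_i$ scaling line up with the adaptive gains. Specifically, the constants in the candidate and in $\rho_i^\ast,\beta_{i,q}^\ast$ must be chosen so that:
\begin{itemize}
\item the factor $\tfrac12$ in the $\rho_i$ term, multiplied by $2\chi_i$, exactly absorbs the gradient term $\Omega_i^{(2)}\|s_i\|^2$ from the $\rho_i$ adaptation law;
\item the gradient terms from the $\beta_{i,q}$ adaptation laws cancel the cross terms $\chi_i\tilde\beta_{i,q}\Omega_{i,q}^{(1)}\|s_i\|$ and are not double-counted.
\end{itemize}
A secondary technical point is the a.e. chain rule for $V_i$ along Carathéodory solutions, which is immediate here because $V_i$ is smooth.
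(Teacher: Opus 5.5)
Your route is the paper's proof: the same composite Lyapunov function, the same $-e_i^T\Xi_ie_i$ dissipation with $\alpha_i\|s_i\|^2$ absorbed into $\Theta_i\Omega_i^{(2)}\|s_i\|^2$, leakage residuals proportional to $\mu_i$, integration using $\mu_i\in L_1$, and Barbalat. Your constants differ from the paper's. You use the weight $\chi_i/b_{\beta i,q}$ instead of $\chi_i/(2b_{\beta i,q})$, and the ideal values $\beta_{i,q}^\ast=\theta_{i,q}/\chi_i$ and $\rho_i^\ast=2\Theta_i/\chi_i$ instead of $\chi_i\beta_{i,q}^\ast\ge\theta_{i,q}$ and $\chi_i\rho_i^\ast\ge\Theta_i$. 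These choices also close, because $2s_i^T\Delta_{u_i}u_i^r$ contributes $-2\chi_i$ times each fraction term.

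The one step that fails as written is your ``standard identity'' $\theta\Omega\|s\|-\theta\beta^2\Omega^2\|s\|^2/(\beta\Omega\|s\|+\mu)\le\theta\mu$. It is false for $\beta<1$: as $\beta\to0$ the left side tends to $\theta\Omega\|s\|$, which exceeds $\theta\mu$ whenever $\Omega\|s\|>\mu$. In any case, the control port does not deliver $\theta$ times the fraction but $2\chi_i$ times it, so the unknown $\theta_{i,q}$ cannot be matched directly against the adaptive fraction.

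The order of the steps must be reversed. First add the gradient part of the $\beta_{i,q}$-law and use $\chi_i\beta_{i,q}^\ast\ge\theta_{i,q}$. With the paper's weight this gives $\theta_{i,q}\Omega_{i,q}^{(1)}\|s_i\|+\chi_i\tilde\beta_{i,q}\Omega_{i,q}^{(1)}\|s_i\|\le\chi_i x$, where $x=\beta_{i,q}\Omega_{i,q}^{(1)}\|s_i\|$. Only then apply $\chi_i\bigl(x-2x^2/(x+\mu_i)\bigr)\le\chi_i\mu_i$, so that the adaptive gain appears on both terms. With your weights the same steps give $2\chi_i\bigl(x-x^2/(x+\mu_i)\bigr)\le2\chi_i\mu_i$. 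The residual per channel is therefore a multiple of $\chi_i\mu_i$, not $\theta_{i,q}\mu_i$. With that correction the rest of your argument goes through exactly as in the paper.
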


\begin{proof}
Consider the comparison inequalities for $q=1,\ldots,4$
\[
\dot\beta_{i,q}
=
-b_{\beta i,q}\mu_i(t)\beta_{i,q}
+
b_{\beta i,q}\Omega_{i,q}^{(1)}\|s_i\|
\ge
-b_{\beta i,q}\mu_i(t)\beta_{i,q},
\]
and
\[
\dot\rho_i
=
-b_{\rho i}\mu_i(t)\rho_i
+
b_{\rho i}\Omega_i^{(2)}\|s_i\|^2
\ge
-b_{\rho i}\mu_i(t)\rho_i.
\]
With positive initial conditions, these inequalities imply that
\[
\beta_{i,q}(t)>0,\qquad \rho_i(t)>0
\]
on every finite interval on which the solution exists. Hence, the denominators in \eqref{eq:va} are strictly positive. Since \(r_i\) and \(\mu_i\) are continuous, and since \(K_{a_i}^x\) satisfies the Carathéodory and local Lipschitz conditions in Assumption~\ref{Assumption: state-correlated attack envelope}, the augmented local closed-loop dynamics have a unique maximal Carathéodory solution on some interval \([0,T_{\max})\).

Because \(F_i\) is Hurwitz and \(r_i\in L_\infty[0,\infty)\), the command filter \eqref{eq:local_ref_gen} satisfies
\begin{equation*}\label{eq:z_bounded_local_proof}
z_i\in L_\infty[0,T_{\max})
\end{equation*}
on every maximal finite interval. Define
\[
\zeta_i=\operatorname{col}(e_{x_i},e_{\eta_i})
\]
and consider
\begin{equation}\label{eq:V0}
V_{0_i}=e_{x_i}^TP_ie_{x_i}+\varsigma_ie_{\eta_i}^TR_ie_{\eta_i}.
\end{equation}
Since \(P_i=P_i^T>0\), \(R_i=R_i^T>0\), and \(\varsigma_i>0\), \(V_{0_i}\) is positive definite in \(\zeta_i\). Differentiating \(V_{0_i}\) along \eqref{eq:errdyn} gives, for almost all \(t\in[0,T_{\max})\),
\begin{equation}\label{eq:V0_derivative_raw}
\begin{split}
\dot V_{0_i}
={}&
e_{x_i}^T(A_i^TP_i+P_iA_i)e_{x_i}
+
2e_{x_i}^TP_iB_iL_ie_{\eta_i}\\
&+
2s_i^T\Delta_{x_i}e_{x_i}
+
2s_i^T\Delta_{u_i}u_i^r\\
&+
2s_i^Tw_{\kappa_i}
+
2s_i^Tw_{r_i}
+
2s_i^Td_{0_i}\\
&+
\varsigma_ie_{\eta_i}^T(\Gamma_i^TR_i+R_i\Gamma_i)e_{\eta_i}
+
2\varsigma_ie_{\eta_i}^TR_i\Lambda_iC_ie_{x_i}.
\end{split}
\end{equation}
Using \eqref{eq:are},
\[
A_i^TP_i+P_iA_i
=
-\Sigma_i+\alpha_iP_iB_iB_i^TP_i,
\]
and using \(s_i=B_i^TP_ie_{x_i}\), we obtain
\begin{equation}\label{eq:ARE_term}
e_{x_i}^T(A_i^TP_i+P_iA_i)e_{x_i}
=
-e_{x_i}^T\Sigma_ie_{x_i}
+
\alpha_i\|s_i\|^2.
\end{equation}
Moreover, by \eqref{eq:R_lyap},
\begin{equation}\label{eq:zero_term}
\varsigma_ie_{\eta_i}^T(\Gamma_i^TR_i+R_i\Gamma_i)e_{\eta_i}
=
-\varsigma_i\|e_{\eta_i}\|^2,
\end{equation}
and
\begin{equation}\label{eq:cross_term}
\begin{split}
&2e_{x_i}^TP_iB_iL_ie_{\eta_i}
+
2\varsigma_ie_{\eta_i}^TR_i\Lambda_iC_ie_{x_i}\\
&=
2e_{x_i}^T
\left(
P_iB_iL_i+\varsigma_iC_i^T\Lambda_i^TR_i
\right)e_{\eta_i}
=
2e_{x_i}^TM_ie_{\eta_i}.
\end{split}
\end{equation}
Substituting \eqref{eq:ARE_term}--\eqref{eq:cross_term} into \eqref{eq:V0_derivative_raw} gives
\begin{equation}\label{eq:V0_derivative_expanded}
\begin{split}
\dot V_{0_i}
={}&
-e_{x_i}^T\Sigma_ie_{x_i}
+
2e_{x_i}^TM_ie_{\eta_i}
-
\varsigma_i\|e_{\eta_i}\|^2\\
&+
\alpha_i\|s_i\|^2
+
2s_i^T\Delta_{x_i}e_{x_i}
+
2s_i^T\Delta_{u_i}u_i^r\\
&+
2s_i^T(w_{\kappa_i}+w_{r_i}+d_{0_i}).
\end{split}
\end{equation}
By Lemma~\ref{lem:attack_regressor_envelope},
\[
2s_i^T\Delta_{x_i}e_{x_i}
\le
\varepsilon_{a_i}\|e_{x_i}\|^2+
\frac{\|\psi_i\|^2(\kappa_x^i)^2}{\varepsilon_{a_i}}\omega_{x_i}^2\|s_i\|^2.
\]
Using this inequality and \eqref{eq:Xi}, we obtain
\begin{equation}\label{eq:Xi_dissipation}
\begin{split}
&-e_{x_i}^T\Sigma_ie_{x_i}
+
2e_{x_i}^TM_ie_{\eta_i}
-
\varsigma_i\|e_{\eta_i}\|^2
+
\varepsilon_{a_i}\|e_{x_i}\|^2\\
&=
-\zeta_i^T
\begin{bmatrix}
\Sigma_i-\varepsilon_{a_i}I & -M_i\\
-M_i^T & \varsigma_iI
\end{bmatrix}
\zeta_i
=
-\zeta_i^T\Xi_i\zeta_i
\le
-c_i\|\zeta_i\|^2,
\end{split}
\end{equation}
where
\[
c_i=\lambda_{\min}(\Xi_i)>0.
\]
Combining \eqref{eq:V0_derivative_expanded}, \eqref{eq:Xi_dissipation}, and the envelope inequality \eqref{eq:uncertainty_envelope_bound} in Lemma~\ref{lem:attack_regressor_envelope}, we get
\begin{equation}\label{eq:V0dot_compact}
\dot V_{0_i}
\le
-c_i\|\zeta_i\|^2
+
2s_i^T\Delta_{u_i}u_i^r
+
\Theta_i\Omega_i^{(2)}\|s_i\|^2
+
\sum_{q=1}^4\theta_{i,q}\Omega_{i,q}^{(1)}\|s_i\|.
\end{equation}

Choose proof constants \(\beta_{i,q}^*>0\), \(q=1,\ldots,4\), and \(\rho_i^*>0\) such that
\begin{equation}\label{eq:ideal_gain_choice}
\chi_i\beta_{i,q}^*\ge\theta_{i,q},
\qquad
\chi_i\rho_i^*\ge\Theta_i.
\end{equation}
Let us define
\begin{equation}\label{eq:Va_adaptive_part}
V_{a_i}
=
\sum_{q=1}^{4}
\frac{\chi_i}{2b_{\beta i,q}}
(\beta_{i,q}-\beta_{i,q}^*)^2
+
\frac{\chi_i}{2b_{\rho i}}
(\rho_i-\rho_i^*)^2
\end{equation}
and
\begin{equation}\label{eq:Va}
\mathcal V_i=V_{0_i}+V_{a_i}.
\end{equation}
By Assumption~\ref{Assumption: Input Correlated Attack},
\begin{equation}\label{eq:input_authority_local}
s_i^T\Delta_{u_i}(t)s_i\ge\chi_i\|s_i\|^2.
\end{equation}
Using \eqref{eq:va}, the control-port term satisfies
\begin{equation}\label{eq:controlport_full}
\begin{split}
2s_i^T\Delta_{u_i}u_i^r
={}&
-2
\sum_{q=1}^4
\frac{\beta_{i,q}^2(\Omega_{i,q}^{(1)})^2}
{\beta_{i,q}\Omega_{i,q}^{(1)}\|s_i\|+\mu_i(t)}
s_i^T\Delta_{u_i}s_i\\
&-
\rho_i\Omega_i^{(2)}
s_i^T\Delta_{u_i}s_i\\
\le{}&
-2\chi_i
\sum_{q=1}^4
\frac{\beta_{i,q}^2(\Omega_{i,q}^{(1)})^2\|s_i\|^2}
{\beta_{i,q}\Omega_{i,q}^{(1)}\|s_i\|+\mu_i(t)}\\
&-
\chi_i\rho_i\Omega_i^{(2)}\|s_i\|^2.
\end{split}
\end{equation}
Differentiating \(V_{a_i}\) along the adaptive laws in \eqref{eq:va} gives
\begin{equation}\label{eq:Va_derivative_full}
\begin{split}
\dot V_{a_i}
={}&
\sum_{q=1}^{4}
\chi_i(\beta_{i,q}-\beta_{i,q}^*)
\left(
-\mu_i\beta_{i,q}
+
\Omega_{i,q}^{(1)}\|s_i\|
\right)\\
&+
\chi_i(\rho_i-\rho_i^*)
\left(
-\mu_i\rho_i
+
\Omega_i^{(2)}\|s_i\|^2
\right).
\end{split}
\end{equation}
Combining \eqref{eq:V0dot_compact}, \eqref{eq:controlport_full}, and \eqref{eq:Va_derivative_full}, we obtain
\begin{equation}\label{eq:Vdot_grouped_before_cancellation}
\begin{split}
\dot{\mathcal V}_i
\le{}&
-c_i\|\zeta_i\|^2\\
&+
\Theta_i\Omega_i^{(2)}\|s_i\|^2
-
\chi_i\rho_i\Omega_i^{(2)}\|s_i\|^2
+
\chi_i(\rho_i-\rho_i^*)\Omega_i^{(2)}\|s_i\|^2\\
&-
\chi_i\mu_i(\rho_i-\rho_i^*)\rho_i\\
&+
\sum_{q=1}^{4}
\Bigg[
\theta_{i,q}\Omega_{i,q}^{(1)}\|s_i\|
-
2\chi_i
\frac{\beta_{i,q}^2(\Omega_{i,q}^{(1)})^2\|s_i\|^2}
{\beta_{i,q}\Omega_{i,q}^{(1)}\|s_i\|+\mu_i}\\
&+\chi_i(\beta_{i,q}-\beta_{i,q}^*)\Omega_{i,q}^{(1)}\|s_i\|
-
\chi_i\mu_i(\beta_{i,q}-\beta_{i,q}^*)\beta_{i,q}
\Bigg].
\end{split}
\end{equation}
The quadratic terms satisfy
\begin{equation}\label{eq:rho_cancellation_full}
\begin{split}
&
\Theta_i\Omega_i^{(2)}\|s_i\|^2
-
\chi_i\rho_i\Omega_i^{(2)}\|s_i\|^2
+
\chi_i(\rho_i-\rho_i^*)\Omega_i^{(2)}\|s_i\|^2\\
&=
\left(\Theta_i-\chi_i\rho_i^*\right)\Omega_i^{(2)}\|s_i\|^2
\le 0,
\end{split}
\end{equation}
where the last inequality follows from \(\chi_i\rho_i^*\ge\Theta_i\). The leakage term generated by the \(\rho_i\)-adaptation satisfies
\begin{equation}\label{eq:rho_leakage_full}
\begin{split}
-\chi_i\mu_i(\rho_i-\rho_i^*)\rho_i
&=
-\chi_i\mu_i\rho_i^2
+
\chi_i\mu_i\rho_i^*\rho_i\\
&=
-\chi_i\mu_i
\left(\rho_i-\frac{\rho_i^*}{2}\right)^2
+
\frac{\chi_i(\rho_i^*)^2}{4}\mu_i\\
&\le
\frac{\chi_i(\rho_i^*)^2}{4}\mu_i.
\end{split}
\end{equation}

Next, fix any \(q\in\{1,\ldots,4\}\) and consider the corresponding \(\beta_{i,q}\)-dependent terms. First,
\begin{equation}\label{eq:beta_cancellation_start_full}
\begin{split}
&
\theta_{i,q}\Omega_{i,q}^{(1)}\|s_i\|
-
2\chi_i
\frac{\beta_{i,q}^2(\Omega_{i,q}^{(1)})^2\|s_i\|^2}
{\beta_{i,q}\Omega_{i,q}^{(1)}\|s_i\|+\mu_i}\\
&\qquad
+
\chi_i(\beta_{i,q}-\beta_{i,q}^*)\Omega_{i,q}^{(1)}\|s_i\|\\
&=
(\theta_{i,q}-\chi_i\beta_{i,q}^*)\Omega_{i,q}^{(1)}\|s_i\|
+
\chi_i\beta_{i,q}\Omega_{i,q}^{(1)}\|s_i\|\\
&\qquad
-
2\chi_i
\frac{\beta_{i,q}^2(\Omega_{i,q}^{(1)})^2\|s_i\|^2}
{\beta_{i,q}\Omega_{i,q}^{(1)}\|s_i\|+\mu_i}\\
&\le
\chi_i\beta_{i,q}\Omega_{i,q}^{(1)}\|s_i\|
-
2\chi_i
\frac{\beta_{i,q}^2(\Omega_{i,q}^{(1)})^2\|s_i\|^2}
{\beta_{i,q}\Omega_{i,q}^{(1)}\|s_i\|+\mu_i},
\end{split}
\end{equation}
because \(\theta_{i,q}-\chi_i\beta_{i,q}^*\le0\). Define
\begin{equation}\label{eq:xbeta_definition_full}
x_{i,q}(t)=\beta_{i,q}(t)\Omega_{i,q}^{(1)}(t)\|s_i(t)\|.
\end{equation}
Since \(\beta_{i,q}(t)>0\), \(\Omega_{i,q}^{(1)}(t)\ge0\), and \(\|s_i(t)\|\ge0\), we have \(x_{i,q}(t)\ge0\). Therefore,
\begin{equation}\label{eq:beta_fraction_full}
x_{i,q}
-
\frac{2x_{i,q}^2}{x_{i,q}+\mu_i}
=
\frac{x_{i,q}(\mu_i-x_{i,q})}{x_{i,q}+\mu_i}
\le
\mu_i.
\end{equation}
Indeed, if \(x_{i,q}\ge\mu_i\), then the left-hand side is nonpositive; if \(0\le x_{i,q}<\mu_i\), then \(x_{i,q}(\mu_i-x_{i,q})/(x_{i,q}+\mu_i)\le\mu_i\). Hence \eqref{eq:beta_cancellation_start_full} gives
\begin{equation}\label{eq:beta_cancellation_end_full}
\begin{split}
&
\theta_{i,q}\Omega_{i,q}^{(1)}\|s_i\|
-
2\chi_i
\frac{\beta_{i,q}^2(\Omega_{i,q}^{(1)})^2\|s_i\|^2}
{\beta_{i,q}\Omega_{i,q}^{(1)}\|s_i\|+\mu_i}\\
&\qquad
+
\chi_i(\beta_{i,q}-\beta_{i,q}^*)\Omega_{i,q}^{(1)}\|s_i\|
\le
\chi_i\mu_i.
\end{split}
\end{equation}
The leakage term generated by the \(\beta_{i,q}\)-adaptation satisfies
\begin{equation}\label{eq:beta_leakage_full}
\begin{split}
-\chi_i\mu_i(\beta_{i,q}-\beta_{i,q}^*)\beta_{i,q}
&=
-\chi_i\mu_i\beta_{i,q}^2
+
\chi_i\mu_i\beta_{i,q}^*\beta_{i,q}\\
&=
-\chi_i\mu_i
\left(\beta_{i,q}-\frac{\beta_{i,q}^*}{2}\right)^2
+
\frac{\chi_i(\beta_{i,q}^*)^2}{4}\mu_i\\
&\le
\frac{\chi_i(\beta_{i,q}^*)^2}{4}\mu_i.
\end{split}
\end{equation}

Substituting \eqref{eq:rho_cancellation_full}--\eqref{eq:beta_leakage_full} into \eqref{eq:Vdot_grouped_before_cancellation}, we obtain
\begin{equation}\label{eq:Vdotlocal}
\dot{\mathcal V}_i
\le
-c_i\|\zeta_i\|^2+\mathcal C_i\mu_i(t)
\end{equation}
for almost all \(t\in[0,T_{\max})\), where the finite constant
\begin{equation}\label{eq:C_local_full}
\mathcal C_i
=
4\chi_i
+
\frac{\chi_i}{4}\sum_{q=1}^{4}(\beta_{i,q}^*)^2
+
\frac{\chi_i(\rho_i^*)^2}{4}
\end{equation}
depends only on proof constants and unknown attack bounds, not on time.

Since \(\mu_i\in L_1[0,\infty)\), integration of \eqref{eq:Vdotlocal} over \([0,t]\subset[0,T_{\max})\) gives
\begin{equation}\label{eq:local_integral_bound}
\mathcal V_i(t)
+
c_i\int_0^t\|\zeta_i(\tau)\|^2d\tau
\le
\mathcal V_i(0)
+
\mathcal C_i\int_0^\infty\mu_i(\tau)d\tau .
\end{equation}
Therefore,
\begin{equation}\label{eq:V_zeta_adaptive_bounds_full}
\mathcal V_i\in L_\infty[0,T_{\max}),\qquad
\zeta_i\in L_2[0,T_{\max})\cap L_\infty[0,T_{\max}),
\end{equation}
and
\begin{equation}\label{eq:adaptive_bounds_full}
\beta_{i,q}\in L_\infty[0,T_{\max}),\qquad
\rho_i\in L_\infty[0,T_{\max}).
\end{equation}
Since \(z_i\in L_\infty[0,T_{\max})\), the relation
\[
x_i=e_{x_i}+\Pi_{i_1}z_i
\]
implies
\begin{equation}\label{eq:x_bound_local_full}
x_i\in L_\infty[0,T_{\max}).
\end{equation}
By Assumption~\ref{Assumption: state-correlated attack envelope}, \(\omega_{x_i}(x_i,t)\) is bounded on bounded \(x_i\)-sets uniformly in time. Hence
\begin{equation}\label{eq:omega_bound_local_full}
\omega_{x_i}(x_i,t)\in L_\infty[0,T_{\max}).
\end{equation}
Using \eqref{eq:Omega1}, together with \(z_i\in L_\infty\), \(r_i\in L_\infty\), and \eqref{eq:omega_bound_local_full}, we obtain
\begin{equation}\label{eq:Omega_bounds_local_full}
\Omega_{i,q}^{(1)}\in L_\infty[0,T_{\max}),\qquad
\Omega_i^{(2)}\in L_\infty[0,T_{\max}),\qquad q=1,\ldots,4.
\end{equation}
Furthermore, for each \(q\),
\begin{equation}\label{eq:va_fraction_bound_compact}
\frac{\beta_{i,q}^2(\Omega_{i,q}^{(1)})^2\|s_i\|}
{\beta_{i,q}\Omega_{i,q}^{(1)}\|s_i\|+\mu_i(t)}
\le
\beta_{i,q}\Omega_{i,q}^{(1)}.
\end{equation}
Indeed, if \(s_i=0\), the left-hand side is zero; otherwise the denominator is at least \(\beta_{i,q}\Omega_{i,q}^{(1)}\|s_i\|\). Therefore, by \eqref{eq:va}, \eqref{eq:adaptive_bounds_full}, and \eqref{eq:Omega_bounds_local_full},
\begin{equation}\label{eq:ur_bound_local_full}
u_i^r\in L_\infty[0,T_{\max}).
\end{equation}
From \eqref{eq:unom}, and using \(z_i,r_i,u_i^r\in L_\infty[0,T_{\max})\), we get
\begin{equation}\label{eq:unom_bound_local_full}
u_{i,\mathrm{nom}}\in L_\infty[0,T_{\max}).
\end{equation}
Assumptions~\ref{Assumption: Input Correlated Attack} and~\ref{Assumption: Uncorrelated attack}, together with \eqref{eq:x_bound_local_full}, \eqref{eq:omega_bound_local_full}, and \eqref{eq:unom_bound_local_full}, imply that
\[
\Delta_{x_i}(x_i,t),\qquad
\Delta_{u_i}(t),\qquad
d_{0_i}(t),\qquad
w_{\kappa_i},\qquad
w_{r_i}
\]
are bounded along the solution. Therefore the right-hand side of \eqref{eq:errdyn} is bounded on every bounded time interval contained in \([0,T_{\max})\). Also, the adaptive right-hand sides in \eqref{eq:va} are bounded on every such interval because \(\mu_i\) is continuous and the involved signals are bounded. Consequently, the maximal solution cannot escape to infinity in finite time. Since \(\beta_{i,q}(t)\) and \(\rho_i(t)\) remain strictly positive, the solution also cannot leave the domain on which \eqref{eq:va} is well defined. Hence
\[
T_{\max}=\infty.
\]

From \eqref{eq:errdyn}, write
\[
\dot\zeta_i
=
\begin{bmatrix}
(A_i+B_i\Delta_{x_i})e_{x_i}
+B_iL_ie_{\eta_i}
+B_i\Delta_{u_i}u_i^r
+B_i(w_{\kappa_i}+w_{r_i}+d_{0_i})\\
\Gamma_ie_{\eta_i}+\Lambda_iC_ie_{x_i}
\end{bmatrix}.
\]
All terms on the right-hand side are bounded along the complete solution. Hence
\begin{equation}\label{eq:zetadot_bound_local_full}
\dot\zeta_i\in L_\infty[0,\infty).
\end{equation}
From \eqref{eq:local_integral_bound}, we also have
\[
\zeta_i\in L_2[0,\infty).
\]
Applying Barbalat's lemma gives
\[
\lim_{t\to\infty}\zeta_i(t)=0.
\]
Therefore,
\[
\lim_{t\to\infty}e_{x_i}(t)=0,\qquad
\lim_{t\to\infty}e_{\eta_i}(t)=0.
\]

Finally, all local signals are bounded. Indeed, \(z_i\), \(e_{x_i}\), and \(e_{\eta_i}\) are bounded; hence \(x_i=e_{x_i}+\Pi_{i_1}z_i\) and \(\eta_i=e_{\eta_i}+\Pi_{i_2}z_i\) are bounded. The signals \(\beta_{i,q}\), \(\rho_i\), \(\Omega_{i,q}^{(1)}\), \(\Omega_i^{(2)}\), \(u_i^r\), and \(u_{i,\mathrm{nom}}\) are bounded by the preceding arguments. The proof is complete.
\end{proof}

\subsection{Network Interface and Command Containment}

For each follower \(i\in\mathcal F\), consider the network-interface protocol
\begin{equation}\label{eq:net}
\begin{cases}
\vartheta_i
=
\displaystyle\sum_{j\in\mathcal F}a_{ij}(\sigma_i-\sigma_j)
+
\sum_{\ell\in\mathcal T}a_{i\ell}(\sigma_i-\phi_\ell),\\[2mm]
\dot\gamma_i
=
-b_{\gamma i}\varpi_i(t)\gamma_i
+
b_{\gamma i}\|\vartheta_i\|,\\[1mm]
\dot\sigma_i
=
-\vartheta_i
-
\dfrac{\gamma_i^2\vartheta_i}
{\gamma_i\|\vartheta_i\|+\varpi_i(t)},
\end{cases}
\end{equation}
where \(\gamma_i(0)>0\), \(b_{\gamma i}>0\), and \(\varpi_i:[0,\infty)\to(0,\infty)\) is continuous and satisfies \(\varpi_i\in L_1[0,\infty)\).

\begin{theorem}\label{thm:network}
Under Assumptions~\ref{Assumption: Graph} and~\ref{Assumption: Leaders bound}, the protocol \eqref{eq:net} admits a unique complete Carathéodory solution for every initial condition with \(\gamma_i(0)>0\). Moreover, \(\sigma_i(t)\) is bounded and
\begin{equation}\label{eq:sigma_cont}
\lim_{t\to\infty}
\dist\bigl(\sigma_i(t),\co(\Phi_L(t))\bigr)=0,
\qquad i\in\mathcal F .
\end{equation}
\end{theorem}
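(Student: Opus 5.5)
The plan is to work in the stacked variables and to reduce containment of the \(\sigma_i\) to convergence of the local disagreement \(\vartheta_i\) to zero. Let \(\sigma=\operatorname{col}(\sigma_1,\ldots,\sigma_M)\) and \(\vartheta=\operatorname{col}(\vartheta_1,\ldots,\vartheta_M)\). From \eqref{eq:net}, \(\vartheta=(H_F\otimes I_m)\sigma+(L_{FL}\otimes I_m)\Phi\). Under Assumption~\ref{Assumption: Graph}, \(H_F\) is a nonsingular M-matrix, so
\[
\sigma-\sigma^\star=(H_F^{-1}\otimes I_m)\vartheta,\qquad \sigma^\star=-(H_F^{-1}L_{FL}\otimes I_m)\Phi .
\]
Since \(-H_F^{-1}L_{FL}\) is row stochastic, each block \(\sigma_i^\star(t)\) lies in \(\co(\Phi_L(t))\). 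Hence \(\dist(\sigma_i,\co(\Phi_L))\le\|\sigma_i-\sigma_i^\star\|\le\|H_F^{-1}\|\,\|\vartheta\|\), and \eqref{eq:sigma_cont} follows once \(\vartheta(t)\to0\). Boundedness of \(\sigma\) then also follows from boundedness of \(\vartheta\) and of \(\Phi\) (Assumption~\ref{Assumption: Leaders bound}).

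\textbf{Well-posedness.} The right-hand side of \eqref{eq:net} is continuous in \((\sigma,\gamma)\), locally Lipschitz wherever \(\gamma_i>0\), and measurable in \(t\), because \(\Phi\) is locally absolutely continuous and \(\varpi_i\) is continuous and positive. This gives a unique maximal Carathéodory solution. Positivity \(\gamma_i(t)>0\) follows from the comparison \(\dot\gamma_i\ge-b_{\gamma i}\varpi_i\gamma_i\), exactly as for \(\beta_{i,q}\) in the proof of Theorem~\ref{thm:local}. Completeness will follow from the boundedness established by the Lyapunov argument below.

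\textbf{Nonsmooth Lyapunov function.} Write \(\dot\sigma_i=-h_i\,\vartheta_i/\|\vartheta_i\|\), where the scalar gain is
\[
h_i=\|\vartheta_i\|+\frac{\gamma_i^2\|\vartheta_i\|}{\gamma_i\|\vartheta_i\|+\varpi_i}\ \ge 0 .
\]
Since \(\dot\vartheta=(H_F\otimes I)\dot\sigma+(L_{FL}\otimes I)\nu_L\), for almost all \(t\) the derivative of \(\|\vartheta_i\|\) satisfies
\[
\frac{d}{dt}\|\vartheta_i\|\le-(H_F)_{ii}h_i+\sum_{j\ne i}|(H_F)_{ij}|\,h_j+\sum_{\ell}a_{i\ell}\bar\nu_\ell .
\]
This holds in the Clarke sense, including at \(\vartheta_i=0\), where \(h_i=0\). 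Take \(p=H_F^{-T}\mathbf 1_M>0\), which is positive because \(H_F^{-1}\ge0\) and is nonsingular. Then \(V_1=\sum_ip_i\|\vartheta_i\|\) satisfies, by the column-weighted diagonal dominance \(p^TH_F=\mathbf 1^T\),
\[
\dot V_1\le-\sum_jh_j+\bar\delta,\qquad \bar\delta=\sum_ip_i\sum_\ell a_{i\ell}\bar\nu_\ell .
\]
This is the step where directedness is absorbed without any symmetry assumption, and it is why a nonsmooth (weighted \(\ell_1\)) function is used instead of a quadratic one.

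\textbf{Adaptive part and main obstacle.} Using \(h_j\ge\|\vartheta_j\|+\gamma_j-\varpi_j/\|\vartheta_j\|\) (equivalently \(\gamma_j\|\vartheta_j\|h_j'\ge\gamma_j^2\|\vartheta_j\|^2-\gamma_j\varpi_j\|\vartheta_j\|\) in the scaled form), the continuous term acts as an adaptive approximation of \(\gamma_j\operatorname{sign}\). The intended conclusion is \(\dot V_1\le-\sum_j\|\vartheta_j\|-\sum_j(\gamma_j-\gamma^\star)+O(\varpi)\), with \(\gamma^\star\ge\bar\delta\) unknown. The hard step is closing the loop with the adaptation \(\dot\gamma_i=b_{\gamma i}(\|\vartheta_i\|-\varpi_i\gamma_i)\). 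The drift \(\bar\delta\) is a constant rather than proportional to \(\|\vartheta\|\), so a plain sum \(V_1+\sum_i(\gamma_i-\gamma^\star)^2/(2b_{\gamma i})\) does not cancel directly.

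My first attempt will be to combine \(V_1\) with a suitably weighted quadratic \(V_2=\tfrac12\sum_ip_i'\|\vartheta_i\|^2\) (or to square \(V_1\)), so that the disturbance term becomes \(\bar\delta\|\vartheta\|\). This matches the adaptation integrand \(\|\vartheta_i\|\) and allows the same cancellation \(x-2x^2/(x+\varpi)\le\varpi\) as in \eqref{eq:beta_fraction_full}. I will handle the cross terms \(\|\vartheta_i\|h_j\) via the M-matrix weights. The target is
\[
\dot{\mathcal V}\le-c\sum_i\|\vartheta_i\|^2+C\sum_i\varpi_i .
\]
If that works, integrating over \([0,t]\) and using \(\varpi_i\in L_1\) gives \(\vartheta\in L_2\cap L_\infty\) and \(\gamma_i\in L_\infty\), hence \(T_{\max}=\infty\). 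Boundedness of \(\dot\vartheta\) (from bounded \(\nu_L\), \(\gamma_i\) and \(\vartheta\)) then allows Barbalat's lemma, giving \(\vartheta\to0\) and, by the first paragraph, \eqref{eq:sigma_cont}.
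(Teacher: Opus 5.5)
\textbf{There is a genuine gap, but your parenthetical option fills it.} Your scaffolding is the paper's own. The reduction $\sigma-\sigma^\star=(H_F^{-1}\otimes I_m)\vartheta$ with $-H_F^{-1}L_{FL}$ row stochastic, positivity of $\gamma_i$, the gauge $V_1=\sum_ip_i\|\vartheta_i\|$ with $p=H_F^{-T}\mathbf 1_M$, and the column identity $p^TH_F=\mathbf 1_M^T$ giving $D^+V_1\le-\sum_jh_j+\bar\delta$ all coincide with the paper's estimate for its $J$ and $G_\Phi$.

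The gap is that the decisive step, closing the loop with the gain adaptation, is only conjectured (``if that works''). The option you list first would not close it:
\begin{itemize}
\item In $\frac{d}{dt}\tfrac12\sum_iq_i\|\vartheta_i\|^2$, the off-diagonal terms $q_ia_{ij}\vartheta_i^T\upsilon_j$, with $\upsilon_j=\gamma_j^2\vartheta_j/(\gamma_j\|\vartheta_j\|+\varpi_j)$, are sign-indefinite and of size roughly $q_ia_{ij}\|\vartheta_i\|\gamma_j$.
\item The diagonal dissipation $q_i(H_F)_{ii}\|\vartheta_i\|\delta_i$ cannot dominate them, because the gains $\gamma_j$ adapt independently. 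Removing exactly these terms is why the weighted $\ell_1$ gauge is used on a directed graph.
\item Keeping $V_1$ linearly alongside a quadratic does not help either. The constant drift $\bar\delta$ in $D^+V_1$ cannot be offset by terms that vanish at $\vartheta=0$, and apart from integrable leakage that is all the adaptation supplies.
\end{itemize}

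\textbf{How the squaring option closes.} Squaring $V_1$ works, and more easily than you anticipate: no cross-term handling is needed. Take $W=\tfrac12V_1^2+\sum_i\frac{p_i}{2b_{\gamma i}}(\gamma_i-\gamma^\star)^2$; the adaptive part must carry the weights $p_i$. Write $h_j=\|\vartheta_j\|+\delta_j$ with $\delta_j=\|\upsilon_j\|$. Then
\[
\begin{split}
D^+W\le{}&-V_1\sum_j\|\vartheta_j\|-V_1\sum_j\delta_j+(\bar\delta-\gamma^\star)V_1\\
&+\sum_jp_j\gamma_j\|\vartheta_j\|-\sum_jp_j\varpi_j(\gamma_j-\gamma^\star)\gamma_j .
\end{split}
\]
The remaining steps are:
\begin{itemize}
\item Since $V_1\ge p_j\|\vartheta_j\|$ and $\delta_j\ge0$, every off-diagonal product in $V_1\sum_j\delta_j$ can simply be discarded: $-V_1\sum_j\delta_j\le-\sum_jp_j\|\vartheta_j\|\delta_j$.
\item The cancellation you need is $\|\vartheta_j\|(\gamma_j-\delta_j)=\gamma_j\|\vartheta_j\|\varpi_j/(\gamma_j\|\vartheta_j\|+\varpi_j)\le\varpi_j$. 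There is no factor $2$ here, unlike the inequality you quote from the local layer.
\item Choose $\gamma^\star\ge\bar\delta$, use $\sum_j\|\vartheta_j\|\ge V_1/p_{\max}$ with $p_{\max}=\max_ip_i$, and bound the leakage by $-(\gamma_j-\gamma^\star)\gamma_j\le(\gamma^\star)^2/4$.
\end{itemize}
This gives
\[
D^+W\le-\frac{1}{p_{\max}}V_1^2+\Bigl(1+\frac{(\gamma^\star)^2}{4}\Bigr)\sum_jp_j\varpi_j .
\]
That is your target, since $V_1^2\ge(\min_ip_i)^2\|\vartheta\|^2$.

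This $W$ is exactly the paper's. The paper takes $\gamma^\star=G_\Phi+1$ to obtain an extra $-V_1$ term, hence $V_1\in L_1$, and applies Barbalat's lemma to $V_1$. Your finish via $L_2$ plus bounded $\dot\vartheta$ works equally well.
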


\begin{proof}
We first establish well-posedness and positivity of the adaptive gains. Let
\[
\sigma=\operatorname{col}(\sigma_1,\ldots,\sigma_M),
\qquad
\gamma=\operatorname{col}(\gamma_1,\ldots,\gamma_M),
\]
and define the open domain
\[
\mathcal D=\mathbb R^{Mm}\times(0,\infty)^M .
\]
For each fixed \(t\), the right-hand side of \eqref{eq:net} is continuous in \((\sigma,\gamma)\in\mathcal D\). Since \(\varpi_i(t)>0\) and is continuous, the map
\[
(\sigma,\gamma)\mapsto
\frac{\gamma_i^2\vartheta_i}
{\gamma_i\|\vartheta_i\|+\varpi_i(t)}
\]
is locally Lipschitz on compact subsets of \(\mathcal D\), uniformly on finite time intervals. Moreover, Assumption~\ref{Assumption: Leaders bound} implies that the leader signals are locally absolutely continuous and locally bounded on finite intervals. Hence the right-hand side of \eqref{eq:net} satisfies the Carathéodory conditions and is locally Lipschitz in the state on compact subsets of \(\mathcal D\). Therefore, for every initial condition \(\sigma(0)\in\mathbb R^{Mm}\), \(\gamma_i(0)>0\), there exists a unique maximal Carathéodory solution on an interval \([0,T_{\max})\).

Along a solution, the gain equation can be written as
\[
\dot\gamma_i+b_{\gamma i}\varpi_i(t)\gamma_i
=
b_{\gamma i}\|\vartheta_i\|.
\]
The variation-of-constants formula gives, for \(0\le t<T_{\max}\),
\begin{equation}\label{eq:gamma_variation_formula}
\begin{split}
\gamma_i(t)
={}&
e^{-b_{\gamma i}\int_0^t\varpi_i(s)ds}\gamma_i(0)\\
&+
b_{\gamma i}
\int_0^t
e^{-b_{\gamma i}\int_s^t\varpi_i(r)dr}
\|\vartheta_i(s)\|ds .
\end{split}
\end{equation}
Both terms on the right-hand side are nonnegative, and the first one is strictly positive. Thus
\begin{equation}\label{eq:gamma_positive}
\gamma_i(t)>0,\qquad 0\le t<T_{\max}.
\end{equation}
This proves that the solution cannot leave \(\mathcal D\) through \(\gamma_i=0\).

Next define
\[
\Phi=\operatorname{col}(\phi_1,\ldots,\phi_N),
\qquad
\vartheta=\operatorname{col}(\vartheta_1,\ldots,\vartheta_M).
\]
From the graph partition,
\begin{equation}\label{eq:vartheta_stack}
\vartheta=(H_F\otimes I_m)\sigma+(L_{FL}\otimes I_m)\Phi .
\end{equation}
Let
\[
p=H_F^{-T}\mathbf 1_M .
\]
Under Assumption~\ref{Assumption: Graph}, \(H_F\) is a nonsingular \(M\)-matrix. Hence
\begin{equation}\label{eq:p_properties}
p_i>0,\qquad p^TH_F=\mathbf 1_M^T .
\end{equation}
Define
\[
g_\Phi(t)=(L_{FL}\otimes I_m)\dot\Phi(t).
\]
Assumption~\ref{Assumption: Leaders bound} gives \(g_\Phi\in L_\infty[0,\infty)\). Also define
\begin{equation}\label{eq:upsilon_def}
\upsilon_i
=
\frac{\gamma_i^2\vartheta_i}
{\gamma_i\|\vartheta_i\|+\varpi_i(t)},
\qquad
\upsilon=\operatorname{col}(\upsilon_1,\ldots,\upsilon_M).
\end{equation}
Differentiating \eqref{eq:vartheta_stack} along \eqref{eq:net} gives, for almost all \(t\),
\begin{equation}\label{eq:vartheta_dyn}
\dot\vartheta
=
-(H_F\otimes I_m)\vartheta
-
(H_F\otimes I_m)\upsilon
+
g_\Phi(t).
\end{equation}

Consider the nonsmooth gauge
\begin{equation}\label{eq:Jnet_def}
J(\vartheta)=\sum_{i=1}^Mp_i\|\vartheta_i\|.
\end{equation}
The function \(J\) is locally Lipschitz. Since \(\vartheta\) is absolutely continuous on every finite interval, \(\dot\vartheta(t)\) exists for almost all \(t\). For such \(t\), the upper Dini derivative of the Euclidean norm satisfies
\[
D^+\|\vartheta_i(t)\|
=
\max_{\xi_i\in\partial\|\vartheta_i(t)\|}
\xi_i^T\dot\vartheta_i(t),
\]
where
\[
\partial\|\vartheta_i\|=
\begin{cases}
\left\{\dfrac{\vartheta_i}{\|\vartheta_i\|}\right\}, & \vartheta_i\ne0,\\[2mm]
\{\xi\in\mathbb R^m:\|\xi\|\le1\}, & \vartheta_i=0.
\end{cases}
\]
Choose \(\xi_i(t)\in\partial\|\vartheta_i(t)\|\) to attain the maximum. Then
\[
\|\xi_i(t)\|\le1,\qquad
\xi_i^T(t)\vartheta_i(t)=\|\vartheta_i(t)\|,
\]
and
\begin{equation}\label{eq:Dini_J}
D^+J(\vartheta(t))
=
\sum_{i=1}^Mp_i\xi_i^T(t)\dot\vartheta_i(t)
\end{equation}
for almost all \(t\). We now estimate the three terms in \eqref{eq:vartheta_dyn}.

First, since \(H_F\) is an \(M\)-matrix, its off-diagonal entries satisfy \((H_F)_{ij}\le0\) for \(i\ne j\). Therefore, for each \(j\),
\[
\sum_{i=1}^Mp_i(H_F)_{ij}=1
\]
by \eqref{eq:p_properties}. By construction, \(\xi_j^T\vartheta_j=\|\vartheta_j\|\) for all \(j\), including the case \(\vartheta_j=0\), and \(\xi_i^T\vartheta_j\le\|\vartheta_j\|\) for all \(i,j\). Hence,
\begin{equation}\label{eq:nominal_gauge_bound}
\begin{split}
&\sum_{i=1}^Mp_i\xi_i^T[-(H_F\vartheta)_i]
=
-\sum_{i=1}^M\sum_{j=1}^Mp_i(H_F)_{ij}\xi_i^T\vartheta_j\\
&=
-\sum_{j=1}^M p_j(H_F)_{jj}\xi_j^T\vartheta_j
-\sum_{j=1}^M\sum_{\substack{i=1\\i\ne j}}^Mp_i(H_F)_{ij}\xi_i^T\vartheta_j\\
&\le
-\sum_{j=1}^M p_j(H_F)_{jj}\|\vartheta_j\|
+
\sum_{j=1}^M\sum_{\substack{i=1\\i\ne j}}^Mp_i[-(H_F)_{ij}]\|\vartheta_j\|\\
&=
-\sum_{j=1}^M
\left(
p_j(H_F)_{jj}
+
\sum_{\substack{i=1\\i\ne j}}^Mp_i(H_F)_{ij}
\right)\|\vartheta_j\|\\
&=
-\sum_{j=1}^M\|\vartheta_j\|.
\end{split}
\end{equation}
Second, define
\begin{equation}\label{eq:delta_net_def}
\delta_j=
\begin{cases}
\dfrac{\gamma_j^2\|\vartheta_j\|}
{\gamma_j\|\vartheta_j\|+\varpi_j(t)}, & \vartheta_j\ne0,\\[3mm]
0, & \vartheta_j=0.
\end{cases}
\end{equation}
Then \(\delta_j=\|\upsilon_j\|\), and if \(\vartheta_j\ne0\), \(\upsilon_j=\delta_j\xi_j\). Repeating the same \(M\)-matrix column argument gives
\begin{equation}\label{eq:robust_gauge_bound}
\sum_{i=1}^Mp_i\xi_i^T[-(H_F\upsilon)_i]
\le
-\sum_{j=1}^M\delta_j.
\end{equation}
Third, define
\begin{equation}\label{eq:Gphi_def}
G_\Phi
=
\operatorname*{ess\,sup}_{t\ge0}
\sum_{i=1}^Mp_i\|g_{\Phi i}(t)\|<\infty .
\end{equation}
Then
\begin{equation}\label{eq:leader_term_bound}
\sum_{i=1}^Mp_i\xi_i^Tg_{\Phi i}(t)
\le
\sum_{i=1}^Mp_i\|g_{\Phi i}(t)\|
\le
G_\Phi
\end{equation}
for almost all \(t\). Combining \eqref{eq:vartheta_dyn}--\eqref{eq:leader_term_bound}, we obtain
\begin{equation}\label{eq:J_derivative_main}
D^+J
\le
-\sum_{i=1}^M\|\vartheta_i\|
-\sum_{i=1}^M\delta_i
+
G_\Phi.
\end{equation}
Since
\[
J=\sum_{i=1}^Mp_i\|\vartheta_i\|
\le
p_{\max}\sum_{i=1}^M\|\vartheta_i\|,
\qquad
p_{\max}=\max_i p_i,
\]
we have
\[
\sum_{i=1}^M\|\vartheta_i\|\ge \frac{1}{p_{\max}}J.
\]
Let
\begin{equation}\label{eq:c0_definition}
c_0=\frac{1}{p_{\max}}>0.
\end{equation}
Then
\begin{equation}\label{eq:J_derivative_simplified}
D^+J
\le
-c_0J
-\sum_{i=1}^M\delta_i
+
G_\Phi .
\end{equation}

Set
\begin{equation}\label{eq:gamma_star_def}
\gamma^\star=G_\Phi+1
\end{equation}
and define
\begin{equation}\label{eq:Wnet}
W
=
\frac12J^2
+
\sum_{i=1}^M
\frac{p_i}{2b_{\gamma i}}(\gamma_i-\gamma^\star)^2 .
\end{equation}
Using \eqref{eq:J_derivative_simplified} and the adaptive law for \(\gamma_i\),
\begin{equation}\label{eq:Wdot_pre_net}
\begin{split}
D^+W
&=
JD^+J
+
\sum_{i=1}^M
p_i(\gamma_i-\gamma^\star)
\left(
-\varpi_i(t)\gamma_i+\|\vartheta_i\|
\right)\\
&\le
-c_0J^2
-
J\sum_{i=1}^M\delta_i
+
G_\Phi J\\
&\quad
+
\sum_{i=1}^Mp_i\gamma_i\|\vartheta_i\|
-
\gamma^\star\sum_{i=1}^Mp_i\|\vartheta_i\|\\
&\quad
-
\sum_{i=1}^Mp_i\varpi_i(t)(\gamma_i-\gamma^\star)\gamma_i .
\end{split}
\end{equation}
Since \(J=\sum_i p_i\|\vartheta_i\|\), the leader-bound term and the ideal gain term combine as
\[
G_\Phi J-\gamma^\star\sum_{i=1}^Mp_i\|\vartheta_i\|
=
-(\gamma^\star-G_\Phi)J
=
-J.
\]
Thus
\begin{equation}\label{eq:Wdot_after_gamma}
\begin{split}
D^+W
\le{}&
-c_0J^2
-J
-
J\sum_{i=1}^M\delta_i
+
\sum_{i=1}^Mp_i\gamma_i\|\vartheta_i\|\\
&-
\sum_{i=1}^Mp_i\varpi_i(t)(\gamma_i-\gamma^\star)\gamma_i .
\end{split}
\end{equation}
Since \(J\ge p_i\|\vartheta_i\|\) and \(\delta_i\ge0\),
\begin{equation}\label{eq:mixed_delta_bound}
-J\sum_{i=1}^M\delta_i
\le
-\sum_{i=1}^Mp_i\|\vartheta_i\|\delta_i .
\end{equation}
Moreover, by the definition of \(\delta_i\),
\begin{equation}\label{eq:gamma_delta_identity}
\begin{split}
\|\vartheta_i\|(\gamma_i-\delta_i)
&=
\|\vartheta_i\|
\left(
\gamma_i
-
\frac{\gamma_i^2\|\vartheta_i\|}
{\gamma_i\|\vartheta_i\|+\varpi_i(t)}
\right)\\
&=
\frac{\gamma_i\|\vartheta_i\|\varpi_i(t)}
{\gamma_i\|\vartheta_i\|+\varpi_i(t)}
\le
\varpi_i(t).
\end{split}
\end{equation}
Using \eqref{eq:mixed_delta_bound} and \eqref{eq:gamma_delta_identity},
\begin{equation}\label{eq:mixed_term_final}
-J\sum_{i=1}^M\delta_i
+
\sum_{i=1}^Mp_i\gamma_i\|\vartheta_i\|
\le
\sum_{i=1}^Mp_i\varpi_i(t).
\end{equation}
Finally,
\begin{equation}\label{eq:gamma_leak_bound}
-\varpi_i(t)(\gamma_i-\gamma^\star)\gamma_i
\le
\frac{(\gamma^\star)^2}{4}\varpi_i(t),
\end{equation}
because
\[
-(\gamma_i-\gamma^\star)\gamma_i
=
-\left(\gamma_i-\frac{\gamma^\star}{2}\right)^2
+
\frac{(\gamma^\star)^2}{4}
\le
\frac{(\gamma^\star)^2}{4}.
\]
Combining \eqref{eq:Wdot_after_gamma}--\eqref{eq:gamma_leak_bound}, we obtain
\begin{equation}\label{eq:Wdot_net}
D^+W
\le
-c_0J^2
-J
+
C_N\sum_{i=1}^Mp_i\varpi_i(t),
\qquad
C_N=1+\frac{(\gamma^\star)^2}{4}.
\end{equation}
Since \(\varpi_i\in L_1[0,\infty)\), integration of \eqref{eq:Wdot_net} gives
\begin{equation}\label{eq:network_integral_bound}
\begin{split}
&W(t)
+
c_0\int_0^tJ^2(\tau)d\tau
+
\int_0^tJ(\tau)d\tau
\\
&\le
W(0)
+
C_N\sum_{i=1}^Mp_i\int_0^\infty\varpi_i(\tau)d\tau .
\end{split}
\end{equation}
Hence
\[
W\in L_\infty,\qquad
J\in L_1[0,\infty)\cap L_2[0,\infty),\qquad
\gamma_i\in L_\infty[0,\infty).
\]
Since \(J=\sum_i p_i\|\vartheta_i\|\) and \(p_i>0\), it follows that \(\vartheta\in L_\infty\). From \eqref{eq:vartheta_stack},
\begin{equation}\label{eq:sigma_from_vartheta}
\sigma
=
(H_F^{-1}\otimes I_m)\vartheta
-
(H_F^{-1}L_{FL}\otimes I_m)\Phi .
\end{equation}
Assumption~\ref{Assumption: Leaders bound} gives \(\Phi\in L_\infty\), and hence \(\sigma\in L_\infty\). Also,
\[
\|\upsilon_i\|
=
\frac{\gamma_i^2\|\vartheta_i\|}
{\gamma_i\|\vartheta_i\|+\varpi_i(t)}
\le
\gamma_i,
\]
so \(\upsilon\in L_\infty\).

The right-hand side of \eqref{eq:net} is therefore bounded on every finite interval on which the solution exists. Since \(\gamma_i(t)\) remains positive by \eqref{eq:gamma_positive}, and since all components of \((\sigma,\gamma)\) are bounded on finite intervals, the maximal solution cannot escape in finite time. Thus \(T_{\max}=\infty\).

It remains to prove convergence. From \eqref{eq:vartheta_dyn}, and using \(\vartheta,\upsilon,g_\Phi\in L_\infty\), we have \(\dot\vartheta\in L_\infty\) almost everywhere. Hence \(\vartheta\) is uniformly continuous on \([0,\infty)\). Since \(\vartheta\) is bounded and \(J\) is Lipschitz on bounded subsets of \(\mathbb R^{Mm}\), the scalar function \(t\mapsto J(\vartheta(t))\) is uniformly continuous. Because \(J\ge0\) and \(J\in L_1[0,\infty)\), Barbalat's lemma gives
\[
J(t)\to0.
\]
Since \(p_i>0\) for all \(i\), this implies \(\vartheta_i(t)\to0\) for all \(i\). Define
\begin{equation}\label{eq:sigma_star_net}
\sigma^*
=
-(H_F^{-1}L_{FL}\otimes I_m)\Phi .
\end{equation}
Then \eqref{eq:vartheta_stack} gives
\[
\vartheta=(H_F\otimes I_m)(\sigma-\sigma^*).
\]
Since \(H_F\) is nonsingular, \(\vartheta(t)\to0\) implies
\[
\sigma(t)-\sigma^*(t)\to0.
\]
Under Assumption~\ref{Assumption: Graph}, the matrix \(-H_F^{-1}L_{FL}\) is row stochastic. Hence each \(\sigma_i^*(t)\) is a convex combination of \(\phi_1(t),\ldots,\phi_N(t)\), and therefore
\[
\sigma_i^*(t)\in\co(\Phi_L(t)).
\]
Consequently,
\[
\dist(\sigma_i(t),\co(\Phi_L(t)))
\le
\|\sigma_i(t)-\sigma_i^*(t)\|
\to0,
\]
which proves \eqref{eq:sigma_cont}.
\end{proof}

\subsection{Interconnection and Physical-Output Containment}
We now interconnect the network interface with the local recovery layer by setting
\begin{equation}\label{eq:T_sigma_interface}
r_i=T_i\sigma_i,
\qquad
T_i=-(H_iF_i^{-1}G_i)^{-1}.
\end{equation}
The inverse exists by the admissibility condition \(\operatorname{rank}(H_iF_i^{-1}G_i)=m\) in \eqref{eq:embed}. Since Theorem~\ref{thm:network} gives \(\sigma_i\in L_\infty[0,\infty)\), the command \(r_i\) is also bounded and therefore satisfies the input requirement of Theorem~\ref{thm:local}. Define
\begin{equation}\label{eq:Xi_filter_def}
X_i=F_i^{-1}G_i(H_iF_i^{-1}G_i)^{-1}.
\end{equation}
Then, by \eqref{eq:T_sigma_interface},
\begin{equation}\label{eq:filter_interface_identities}
F_iX_i+G_iT_i=0,\qquad H_iX_i=I_m.
\end{equation}
The following result completes the solution of Problem~\ref{prob:main}.

\begin{corollary}\label{cor:main}
Under Assumptions~\ref{Assumption: Graph}--\ref{Assumption: Uncorrelated attack}, consider the interconnected design composed of \eqref{eq:local_ref_gen}, \eqref{eq:T_sigma_interface}, \eqref{eq:unom}, \eqref{eq:va}, and \eqref{eq:net}. Then all closed-loop signals are bounded,
\[
e_{x_i}(t)\to0,\qquad e_{\eta_i}(t)\to0,
\]
and, for every \(i\in\mathcal F\),
\begin{equation}\label{eq:practical}
\limsup_{t\to\infty}
\dist\bigl(y_i(t),\co(\Phi_L(t))\bigr)
\le
\varepsilon_i,
\end{equation}
where
\begin{equation}\label{eq:eps_bound}
\varepsilon_i=
\left(
\int_0^\infty
\|H_ie^{F_is}X_i\|\,ds
\right)
\limsup_{t\to\infty}\|\dot\sigma_i(t)\|.
\end{equation}
\end{corollary}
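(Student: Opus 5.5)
The plan is to chain the two layers and then add a filter-level tracking estimate. First, invoke Theorem~\ref{thm:network}. Under Assumptions~\ref{Assumption: Graph} and~\ref{Assumption: Leaders bound}, the network layer \eqref{eq:net} is autonomous with respect to the local layer, since it uses only $\sigma_j$ and $\phi_\ell$. It therefore has a complete solution with $\sigma_i\in L_\infty$, $\dist(\sigma_i,\co(\Phi_L))\to0$, and, from the proof, $\vartheta_i,\upsilon_i\in L_\infty$. Hence $\dot\sigma_i=-\vartheta_i-\upsilon_i$ is continuous and bounded, so $\limsup\|\dot\sigma_i\|<\infty$. The command $r_i=T_i\sigma_i$ is continuous and bounded. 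Theorem~\ref{thm:local} then applies verbatim and gives existence, boundedness of all local signals, and $e_{x_i},e_{\eta_i}\to0$. Since the coupling is cascade (network to local), the augmented system is well posed and all closed-loop signals are bounded.

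Next, decompose the output. Write $y_i=C_ix_i=C_ie_{x_i}+C_i\Pi_{i_1}z_i=C_ie_{x_i}+H_iz_i$, so that
\[
\dist(y_i,\co(\Phi_L))\le \|C_ie_{x_i}\|+\|H_iz_i-\sigma_i\|+\dist(\sigma_i,\co(\Phi_L)),
\]
where the distance is 1-Lipschitz in its argument. The first and third terms vanish asymptotically. It then suffices to show $\limsup\|H_iz_i-\sigma_i\|\le\varepsilon_i$.

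For this, introduce $\tilde z_i=z_i-X_i\sigma_i$. Since $\sigma_i$ is $C^1$ (its right-hand side is continuous in $t$ because $\phi_\ell$ is continuous), \eqref{eq:filter_interface_identities} yields
\[
\dot{\tilde z}_i=F_iz_i+G_iT_i\sigma_i-X_i\dot\sigma_i=F_i\tilde z_i-X_i\dot\sigma_i,
\]
and $H_iz_i-\sigma_i=H_i\tilde z_i$ by $H_iX_i=I_m$. Variation of constants then gives
\[
H_i\tilde z_i(t)=H_ie^{F_it}\tilde z_i(0)-\int_0^tH_ie^{F_i(t-s)}X_i\dot\sigma_i(s)\,ds.
\]
The first term decays because $F_i$ is Hurwitz. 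For the convolution, fix $\epsilon>0$ and choose $T$ so that $\|\dot\sigma_i(s)\|\le\limsup\|\dot\sigma_i\|+\epsilon$ for $s\ge T$. Split the integral at $s=T$. The part over $[0,T]$ is bounded by $\sup\|\dot\sigma_i\|\int_{t-T}^{t}\|H_ie^{F_i\tau}X_i\|d\tau$ and tends to $0$. The remainder is at most $(\limsup\|\dot\sigma_i\|+\epsilon)\int_0^\infty\|H_ie^{F_i\tau}X_i\|d\tau$. Letting $\epsilon\to0$ gives \eqref{eq:eps_bound}.

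The main obstacle is minor. One must confirm that $\sigma_i$ is differentiable with a bounded derivative, so that the $\tilde z_i$ dynamics hold and $\limsup\|\dot\sigma_i\|$ is finite. One must also confirm that $r_i$ meets the continuity hypothesis of Theorem~\ref{thm:local}. Both follow from the boundedness of $\vartheta_i$ and $\gamma_i$ in the proof of Theorem~\ref{thm:network}, together with $\|\upsilon_i\|\le\gamma_i$.
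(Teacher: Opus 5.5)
Your proposal is correct and follows the paper's proof essentially step for step: you cascade Theorem~\ref{thm:network} into Theorem~\ref{thm:local} via $r_i=T_i\sigma_i$, split $\dist(y_i,\co(\Phi_L))$ by the triangle inequality, and bound $H_iz_i-\sigma_i=H_i\tilde z_i$ by variation of constants, splitting the convolution at a time after which $\|\dot\sigma_i\|$ is within $\epsilon$ of its upper limit. Your explicit checks that $\sigma_i$ is $C^1$ and that $\dot\sigma_i=-\vartheta_i-\upsilon_i$ is bounded (so $\varepsilon_i<\infty$) are details the paper leaves implicit, and they are consistent with its argument.
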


\begin{proof}
By Theorem~\ref{thm:network}, the network-interface signals are bounded, \(\sigma_i\in C([0,\infty),\mathbb R^m)\cap L_\infty[0,\infty)\), and
\[
\dist(\sigma_i(t),\co(\Phi_L(t)))\to0 .
\]
Hence \(r_i=T_i\sigma_i\in C([0,\infty),\mathbb R^m)\cap L_\infty[0,\infty)\). Applying Theorem~\ref{thm:local} gives bounded local closed-loop signals and
\[
e_{x_i}(t)\to0,\qquad e_{\eta_i}(t)\to0.
\]
Since
\[
y_i=C_ix_i=C_i(e_{x_i}+\Pi_{i_1}z_i)=C_ie_{x_i}+H_iz_i,
\]
the point-to-set distance satisfies
\begin{equation}\label{eq:distance_split_short}
\dist(y_i,\co(\Phi_L))
\le
\|H_iz_i-\sigma_i\|
+
\dist(\sigma_i,\co(\Phi_L))
+
\|C_ie_{x_i}\|.
\end{equation}
The second term on the right-hand side of \eqref{eq:distance_split_short} converges to zero by Theorem~\ref{thm:network}, and the third term converges to zero by Theorem~\ref{thm:local}. It remains to bound the command-filter mismatch.

Set
\[
\tilde z_i=z_i-X_i\sigma_i .
\]
Using \eqref{eq:local_ref_gen}, \eqref{eq:T_sigma_interface}, and \eqref{eq:filter_interface_identities}, one obtains
\begin{equation}\label{eq:realization_error_dyn}
\dot{\tilde z}_i
=
F_i\tilde z_i-X_i\dot\sigma_i,
\qquad
H_i\tilde z_i=H_iz_i-\sigma_i .
\end{equation}
By the variation-of-constants formula,
\[
H_i\tilde z_i(t)
=
H_ie^{F_it}\tilde z_i(0)
-
\int_0^t
H_ie^{F_i(t-\tau)}X_i\dot\sigma_i(\tau)d\tau .
\]
Since \(F_i\) is Hurwitz, the kernel \(K_i(s)=H_ie^{F_is}X_i\) belongs to \(L_1[0,\infty)\), and the homogeneous term \(H_ie^{F_it}\tilde z_i(0)\) converges to zero. For any \(\delta>0\), there exists \(T_\delta>0\) such that
\[
\|\dot\sigma_i(t)\|
\le
\limsup_{\tau\to\infty}\|\dot\sigma_i(\tau)\|+\delta,
\qquad t\ge T_\delta .
\]
In the convolution above, the contribution over \([0,T_\delta]\) converges to zero because the interval is fixed and \(e^{F_i(t-\tau)}\) decays exponentially. The tail over \([T_\delta,t]\) is bounded by
\[
\left(
\limsup_{\tau\to\infty}\|\dot\sigma_i(\tau)\|+\delta
\right)
\int_0^\infty\|H_ie^{F_is}X_i\|\,ds .
\]
Taking the upper limit as \(t\to\infty\), and then letting \(\delta\to0^+\), gives
\[
\begin{split}
&\limsup_{t\to\infty}\|H_iz_i(t)-\sigma_i(t)\|
\le\\
&\left(
\int_0^\infty\|H_ie^{F_is}X_i\|\,ds
\right)
\limsup_{t\to\infty}\|\dot\sigma_i(t)\|.
\end{split}
\]
Combining this bound with \eqref{eq:distance_split_short} proves \eqref{eq:practical}--\eqref{eq:eps_bound}.
\end{proof}

\begin{remark}
The residual \(\varepsilon_i\) quantifies the mismatch between the generated task-space command \(\sigma_i\) and the realizable filtered command \(H_iz_i\). This residual is induced by the stable local command filter and by the fact that the leader trajectories are only assumed to be locally absolutely continuous with unknown velocity bounds. For heterogeneous followers with relative degrees larger than one, exact physical-output tracking of an arbitrary moving command would generally require higher-order command derivatives, which are not available under the information pattern considered here.
\end{remark}

Figure~\ref{fig:two_layer_architecture} illustrates an overview of the proposed two-layer resilient output-containment architecture for each follower \(i\in\mathcal F\). The first layer is the local execution and recovery layer. It converts the network command \(\sigma_i\) into a realizable local command \(r_i=T_i\sigma_i\), processes it through the admissible stable command filter \(\dot z_i=F_iz_i+G_ir_i\), and applies the nominal embedding controller \(u_{i,\mathrm{nom}}=\psi_i^{-1}(Q_iz_i+K_ir_i+u_i^r)\). The adaptive virtual-actuator recovery term \(u_i^r\) is then generated from the local tracking variable \(s_i=B_i^TP_ie_{x_i}\) to compensate for state-correlated, input-correlated, and bounded exogenous actuator attacks. Importantly, the local recovery layer uses only the available external normal-form state \(x_i\), and it does not require measurement of the zero-dynamics state \(\eta_i\).

The second layer is the network-interface layer, which generates the task-space command \(\sigma_i\) using only neighbor-exchanged interface states \(\sigma_j\) and instantaneous leader-neighbor task-space signals \(\phi_\ell\), when such leader-neighbor links are present. This layer does not require knowledge of the leaders' dynamics, leader velocity bounds, leader motion envelopes, or global graph parameters to tune the controller. Its role is to enforce command-level containment by driving the generated command \(\sigma_i\) to the convex hull of the leaders' task-space signals.

This two-layer separation also prevents actuator-side compromises from entering the network-interface dynamics, while the local layer ensures resilient realization of the generated command at the physical follower level.
\begin{figure*}[!t]
\centering
\begin{adjustbox}{max totalsize={\textwidth}{\textheight},center}
\begin{tikzpicture}[font=\sffamily,>=Stealth,line cap=rect]
\path[use as bounding box] (0.285,0.825) rectangle (19.710,12.360);

\fill[C3B82F6,fill opacity=0.050,rounded corners=5.5pt] (4.260,8.985) rectangle (14.985,12.180);
\draw[C3B82F6,dashed,line width=0.35pt,rounded corners=5.5pt] (4.260,8.985) rectangle (14.985,12.180);
\node[C3B82F6,anchor=west,font=\fontsize{8pt}{8pt}\selectfont\bfseries\sffamily] at (4.395,12.000) {\MakeUppercase{Layer II: Network Interface Protocol}};

\fill[C10B981,fill opacity=0.040,rounded corners=5.5pt] (2.655,1.005) rectangle (14.985,8.850);
\draw[C10B981,dashed,line width=0.35pt,rounded corners=5.5pt] (2.655,1.005) rectangle (14.985,8.850);
\node[C10B981,anchor=west,font=\fontsize{7.5pt}{8pt}\selectfont\bfseries\sffamily] at (2.790,8.670) {\MakeUppercase{Layer I: Resilient Local Control}};

\fill[CA855F7,fill opacity=0.040,rounded corners=5.5pt] (3.630,3.420) rectangle (9.735,7.260);
\draw[CA855F7,dashed,line width=0.35pt,rounded corners=5.5pt] (3.630,3.420) rectangle (9.735,7.260);
\node[CA855F7,anchor=west,font=\fontsize{7.5pt}{8pt}\selectfont\bfseries\sffamily] at (3.765,7.080) {\MakeUppercase{Recovery Subsystem}};

\node[
draw=C10B981,
fill=CECFDF5,
rounded corners=2.5pt,
text width=3.095cm,
minimum width=3.375cm,
minimum height=1.770cm,
align=center,
inner sep=4.0pt
] (L1) at (2.152,10.500) {
{\fontsize{6.5pt}{7.5pt}\selectfont\bfseries\color{C065F46}\MakeUppercase{Network}}\\[1.5pt]
{\fontsize{9pt}{9pt}\selectfont\color{C065F46}\(\phi_\ell(t),\sigma_j(t)\)}\\[1pt]
};

\node[
draw=C3B82F6,
fill=CEFF6FF,
rounded corners=2.5pt,
text width=3.920cm,
minimum width=4.200cm,
minimum height=2.415cm,
align=left,
inner sep=4.0pt
] (PR) at (6.990,10.482) {
{\fontsize{6.5pt}{7.5pt}\selectfont\bfseries\color{C1E3A8A}\MakeUppercase{Adaptive Network Protocol}}\\[1pt]
{\fontsize{7pt}{9pt}\selectfont\color{C1D4ED8}\(\vartheta_i=\sum_j a_{ij}(\sigma_i{-}\sigma_j)+\sum_\ell a_{i\ell}(\sigma_i{-}\phi_\ell)\)}\\[1.5pt]
{\fontsize{7pt}{9pt}\selectfont\color{C1D4ED8}\(\dot\gamma_i=-b_{\gamma i}\varpi_i\gamma_i+b_{\gamma i}\|\vartheta_i\|\)}\\[1.5pt]
{\fontsize{7pt}{9pt}\selectfont\color{C1D4ED8}\(\dot\sigma_i=-\vartheta_i-\dfrac{\gamma_i^2\vartheta_i}{\gamma_i\|\vartheta_i\|+\varpi_i}\)}
};

\node[
draw=C0EA5E9,
fill=CF0F9FF,
rounded corners=2.5pt,
text width=1.745cm,
minimum width=2.025cm,
minimum height=0.975cm,
align=left,
inner sep=4.0pt
] (TI) at (10.793,10.282) {
{\fontsize{8pt}{9pt}\selectfont\color{C0C4A6E}\(r_i=T_i\sigma_i\)}\\[1pt]
{\fontsize{5.5pt}{6.5pt}\selectfont\color{C64748B}\(T_i=-(H_iF_i^{-1}G_i)^{-1}\)}
};

\node[
draw=C06B6D4,
fill=CECFEFF,
rounded corners=2.5pt,
text width=3.020cm,
minimum width=3.300cm,
minimum height=1.890cm,
align=left,
inner sep=4.0pt
] (CF) at (12.510,8.400) {
{\fontsize{6.5pt}{7.5pt}\selectfont\bfseries\color{C164E63}\MakeUppercase{Local Command Filter}}\\[1pt]
{\fontsize{8pt}{9pt}\selectfont\color{C0E7490}\(\dot z_i=F_iz_i+G_ir_i\)}\\[1pt]
{\fontsize{6.5pt}{7.5pt}\selectfont\color{C4B5563}\(H_iz_i\approx\sigma_i\), \(\operatorname{rank}(H_iF_i^{-1}G_i)=m\)}
};

\node[
draw=C0EA5E9,
fill=CF0F9FF,
rounded corners=2.5pt,
text width=3.380cm,
minimum width=3.660cm,
minimum height=1.785cm,
align=center,
inner sep=4.0pt
] (NC) at (12.510,5.197) {
{\fontsize{7.5pt}{7.5pt}\selectfont\bfseries\color{C075985}\MakeUppercase{Nominal Controller}}\\[1.5pt]
{\fontsize{8pt}{9pt}\selectfont\color{C075985}\(u_{i,\mathrm{nom}}=\psi_i^{-1}(Q_iz_i+K_ir_i+u_i^r)\)}\\[1pt]
};

\node[
draw=CA855F7,
fill=CFAF5FF,
rounded corners=2.5pt,
text width=5.330cm,
minimum width=5.610cm,
minimum height=2.850cm,
align=center,
inner sep=4.0pt
] (VA) at (6.705,5.115) {
{\fontsize{7.5pt}{7.5pt}\selectfont\bfseries\color{C581C87}\MakeUppercase{Virtual Actuator Recovery}}\\[6pt]
{\fontsize{8.2pt}{7.4pt}\selectfont\color{C7E22CE}
\(\begin{aligned}
&u_i^r=-\sum_{q=1}^{4}\frac{\beta_{i,q}^{2}(\Omega_{i,q}^{(1)})^{2}s_i}{\beta_{i,q}\Omega_{i,q}^{(1)}\|s_i\|+\mu_i(t)}-\frac12\rho_i\Omega_i^{(2)}s_i,\\
&\dot\beta_{i,q}=-b_{\beta i,q}\mu_i(t)\beta_{i,q}+b_{\beta i,q}\Omega_{i,q}^{(1)}\|s_i\|,\\
&\dot\rho_i=-b_{\rho i}\mu_i(t)\rho_i+b_{\rho i}\Omega_i^{(2)}\|s_i\|^2 .
\end{aligned}\)}
};

\node[
draw=C260AF5,
fill=CEBEBFF,
rounded corners=2.5pt,
text width=2.720cm,
minimum width=3.000cm,
minimum height=1.770cm,
align=left,
inner sep=4.0pt
] (EC) at (12.930,2.385) {
{\fontsize{6.5pt}{7.5pt}\selectfont\bfseries\color{C160D91}\MakeUppercase{Error Computation}}\\[1.5pt]
{\fontsize{8.5pt}{9pt}\selectfont\color{C160D91}\(e_{x_i}=x_i-\Pi_{i_1}z_i\)}\\[1.5pt]
{\fontsize{8.5pt}{9pt}\selectfont\color{C1D0D91}\(s_i=B_i^\top P_ie_{x_i}\)}\\[1pt]
};

\node[
draw=CF43F5E,
fill=CFFF1F2,
rounded corners=2.5pt,
text width=3.590cm,
minimum width=3.870cm,
minimum height=1.635cm,
align=center,
inner sep=4.0pt
] (AT) at (17.595,5.197) {
{\fontsize{7.5pt}{7.5pt}\selectfont\bfseries\color{C9F1239}\MakeUppercase{Actuator Attack}}\\[1pt]
{\fontsize{8pt}{9pt}\selectfont\color{CBE123C}\(w_{a_i}=K_{a_i}^xx_i+K_{a_i}^u u_{i,\mathrm{nom}}+w_{a_i}^{uc}\)}\\[1pt]
};

\node[
draw=CF97316,
fill=CFFFBEB,
rounded corners=2.5pt,
text width=3.410cm,
minimum width=3.690cm,
minimum height=2.160cm,
align=center,
inner sep=4.0pt
] (PL) at (17.580,2.595) {
{\fontsize{7.5pt}{7.5pt}\selectfont\bfseries\color{C7C2D12}Follower Plant \(i\in\mathcal F\)}\\[6pt]
{\fontsize{8.5pt}{9pt}\selectfont\color{CC2410C}
\(\begin{aligned}
&\dot x_i=A_ix_i+B_i(L_i\eta_i+\psi_iu_i),\\
&\dot\eta_i=\Gamma_i\eta_i+\Lambda_i y_i,\\
&y_i=C_ix_i .
\end{aligned}\)}
};

\draw[->,C10B981,line width=0.5pt] (3.840,10.500) -- (4.890,10.500);
\node[C10B981,font=\fontsize{8pt}{8pt}\selectfont\sffamily] at (4.20,10.70) {\(\phi_\ell,\sigma_j\)};

\draw[->,C3B82F6,line width=0.5pt] (9.090,10.292) -- (9.780,10.282);
\node[C3B82F6,font=\fontsize{8pt}{8pt}\selectfont\sffamily] at (9.480,10.473) {\(\sigma_i\)};

\draw[->,C0EA5E9,line width=0.5pt] (11.805,10.282) -- (12.510,10.282) -- (12.510,9.345);
\node[C0EA5E9,font=\fontsize{9pt}{9pt}\selectfont\sffamily] at (12.555,10.373) {\(r_i\)};

\draw[->,C06B6D4,line width=0.5pt] (12.510,7.455) -- (12.510,6.090);
\node[C06B6D4,font=\fontsize{9pt}{9pt}\selectfont\sffamily] at (12.255,6.280) {\(z_i\)};

\draw[->,C06B6D4,line width=0.5pt] (10.860,8.310) -- (3.285,8.310) -- (3.285,2.040) -- (11.430,2.040);
\node[C06B6D4,font=\fontsize{9pt}{9pt}\selectfont\sffamily] at (4.330,2.130) {\(z_i\)};

\draw[->,C0E0AF5,line width=0.5pt] (11.430,2.685) -- (6.540,2.685) -- (6.540,3.690);
\node[C0E0AF5,font=\fontsize{9pt}{9pt}\selectfont\sffamily] at (8.585,2.775) {\(s_i\)};

\draw[->,CA855F7,line width=0.5pt] (9.510,5.007) -- (10.680,5.010);
\node[CA855F7,font=\fontsize{9pt}{9pt}\selectfont\sffamily] at (10.140,5.300) {\(u_i^r\)};

\draw[->,C0EA5E9,line width=0.5pt] (14.340,5.197) -- (15.660,5.197);
\node[C0EA5E9,font=\fontsize{9pt}{9pt}\selectfont\sffamily] at (15.045,5.387) {\(u_{i,\mathrm{nom}}\)};

\draw[->,CF43F5E,line width=0.5pt] (17.595,4.380) -- (17.580,3.675);
\node[CF43F5E,font=\fontsize{9pt}{9pt}\selectfont\sffamily] at (17.735,4.070) {\(u_i\)};

\draw[->,CF97316,line width=0.5pt] (15.735,2.385) -- (14.430,2.385);
\node[CF97316,font=\fontsize{9pt}{9pt}\selectfont\sffamily] at (15.127,2.575) {\(x_i\)};

\end{tikzpicture}
\end{adjustbox}
\caption{Two-layer resilient output-containment architecture for follower \(i\). The network-interface layer generates the task-space command \(\sigma_i\) using only neighbor-exchanged interface states and, if available, instantaneous leaders' signals, without requiring leader dynamics, leader velocity bounds, leader motion envelopes, or global graph parameters for tuning. The local layer maps \(\sigma_i\) into \(r_i=T_i\sigma_i\), realizes it through an admissible stable command filter, and applies a nominal embedding controller augmented by adaptive virtual-actuator recovery. The recovery term is computed adaptively from the available external normal-form state \(x_i\) through \(s_i=B_i^TP_ie_{x_i}\), and compensates for state-correlated, input-correlated, and bounded exogenous actuator attacks without requiring measurement of the zero-dynamics state \(\eta_i\).}
\label{fig:two_layer_architecture}
\vspace{-1mm}
\end{figure*}
\section{Numerical Simulation}
\label{sec:simulation}
We validate the proposed architecture on a heterogeneous output-containment problem with six followers and three leaders. Each follower represents the lateral motion of a quadrotor carrying a cable-suspended payload, with controlled output
\[
y_i=\begin{bmatrix}p_{x_i}&p_{y_i}\end{bmatrix}^{T}\in\mathbb R^2 .
\]
The followers have heterogeneous physical parameters, and their outputs have vector relative degree \([4,4]^T\). The leaders generate a moving, rotating, and resizing convex hull whose dynamical model and velocity bounds are not available to the followers. Two of the followers are also subjected to persistent actuator attacks combining state-correlated, input-correlated, and bounded exogenous false-data components. The simulation therefore verifies command-level containment of the generated interface states \(\sigma_i\), resilient local realization of the admissible command-filter trajectories, and practical physical-output containment with the residual characterized in Corollary~\ref{cor:main}.

\subsection{Heterogeneous Follower Model}
\label{subsec:sim_model}
The follower model is obtained from a small-angle near-hover linearization of a quadrotor with a cable-suspended payload. The yaw angle is fixed over the lateral-control time scale, and the roll and pitch angles as well as the load swing angles are small. For follower \(i\), let \(m_{q_i}\) denote the quadrotor mass, \(m_{\ell_i}\) the payload mass, \(\ell_{a_i}\) the arm length, \(J_{x_i}\) and \(J_{y_i}\) the roll and pitch inertias, and \(\ell_{x_i}\), \(\ell_{y_i}\) the effective cable lengths in the two lateral planes. The loaded hover thrust is taken as
\[
T_{0_i}=(m_{q_i}+m_{\ell_i})g_0,\qquad g_0=9.807\,\mathrm{m/s^2}.
\]
For the \(x\)-axis, the linearized translational-pendulum equations are
\[
(m_{q_i}+m_{\ell_i})\ddot p_{x_i}+m_{\ell_i}\ell_{x_i}\ddot\alpha_{x_i}=T_{0_i}\theta_i,
\]
\[
\ddot p_{x_i}+\ell_{x_i}\ddot\alpha_{x_i}+2\zeta_{x_i}w_{x_i}\ell_{x_i}\dot\alpha_{x_i}+g_0\alpha_{x_i}=0,
\qquad
w_{x_i}=\sqrt{g_0/\ell_{x_i}} .
\]
Eliminating \(\ddot\alpha_{x_i}\) gives
\[
\ddot p_{x_i}=a_{\theta i}\theta_i+a_{\alpha x,i}\alpha_{x_i}+a_{\beta x,i}\dot\alpha_{x_i},
\]
where
\[
a_{\theta i}=\frac{T_{0_i}}{m_{q_i}},\qquad
a_{\alpha x,i}=\frac{m_{\ell_i}g_0}{m_{q_i}},\qquad
a_{\beta x,i}=\frac{m_{\ell_i}\ell_{x_i}(2\zeta_{x_i}w_{x_i})}{m_{q_i}}.
\]
Similarly, after absorbing the roll-channel sign into the \(y\)-axis input convention,
\[
\ddot p_{y_i}=a_{\phi i}\phi_i+a_{\alpha y,i}\alpha_{y_i}+a_{\beta y,i}\dot\alpha_{y_i},
\]
with \(w_{y_i}=\sqrt{g_0/\ell_{y_i}}\), we have
\[
a_{\phi i}=\frac{T_{0_i}}{m_{q_i}},\qquad
a_{\alpha y,i}=\frac{m_{\ell_i}g_0}{m_{q_i}},\qquad
a_{\beta y,i}=\frac{m_{\ell_i}\ell_{y_i}(2\zeta_{y_i}w_{y_i})}{m_{q_i}}.
\]
The attitude channels are modeled by
\[
\ddot\theta_i=b_{\theta i}u_{x_i},\qquad
\ddot\phi_i=b_{\phi i}u_{y_i},
\qquad
b_{\theta i}=\frac{\ell_{a_i}}{J_{y_i}},\qquad
b_{\phi i}=\frac{\ell_{a_i}}{J_{x_i}}.
\]
Thus the physical high-frequency matrix from the actuator input to the fourth output derivative is
\[
\psi_i^0=
\begin{bmatrix}
a_{\theta i}b_{\theta i}&0\\
0&a_{\phi i}b_{\phi i}
\end{bmatrix}
=
\begin{bmatrix}
\dfrac{T_{0_i}\ell_{a_i}}{m_{q_i}J_{y_i}}&0\\[1mm]
0&\dfrac{T_{0_i}\ell_{a_i}}{m_{q_i}J_{x_i}}
\end{bmatrix},
\]
which is nonsingular. Hence the physical input-output model has vector relative degree \([4,4]^T\).

The raw physical state is chosen as
\[
X_i^0=\operatorname{col}
\bigl(
p_{x_i},\dot p_{x_i},\theta_i,\dot\theta_i,\alpha_{x_i},\dot\alpha_{x_i},
p_{y_i},\dot p_{y_i},\phi_i,\dot\phi_i,\alpha_{y_i},\dot\alpha_{y_i}
\bigr).
\]
It satisfies
\[
\dot X_i^0=A_i^0X_i^0+B_i^0u_i,\qquad y_i=C_i^0X_i^0,
\]
where \(u_i=\operatorname{col}(u_{x_i},u_{y_i})\), \(C_i^0\) selects \(p_{x_i}\) and \(p_{y_i}\), and the nonzero entries of \(A_i^0\) and \(B_i^0\) are
\[
\begin{split}
&A_i^0(1,2)=1, A_i^0(2,3)=a_{\theta i}, A_i^0(2,5)=a_{\alpha x,i},\\
&A_i^0(2,6)=a_{\beta x,i},A_i^0(3,4)=1,B_i^0(4,1)=b_{\theta i},A_i^0(5,6)=1,\\
&A_i^0(6,3)=-a_{\theta i}/\ell_{x_i},A_i^0(6,5)=-(w_{x_i}^2+a_{\alpha x,i}/\ell_{x_i}),\\
&A_i^0(6,6)=-(2\zeta_{x_i}w_{x_i}+a_{\beta x,i}/\ell_{x_i}),
\end{split}
\]
and
\[
\begin{split}
&A_i^0(7,8)=1,A_i^0(8,9)=a_{\phi i},A_i^0(8,11)=a_{\alpha y,i},\\
&A_i^0(8,12)=a_{\beta y,i},A_i^0(9,10)=1,B_i^0(10,2)=b_{\phi i},\\
&A_i^0(11,12)=1,A_i^0(12,9)=-a_{\phi i}/\ell_{y_i},\\
&A_i^0(12,11)=-(w_{y_i}^2+a_{\alpha y,i}/\ell_{y_i}),\\
&A_i^0(12,12)=-(2\zeta_{y_i}w_{y_i}+a_{\beta y,i}/\ell_{y_i}).
\end{split}
\]
All other entries are zero.

The physical parameters are heterogeneous across the six followers. The quadrotor masses satisfy \(m_{q_i}\in[1.44,1.56]\,\mathrm{kg}\), the payload masses satisfy \(m_{\ell_i}\in[0.301,0.346]\,\mathrm{kg}\), and the arm lengths satisfy \(\ell_{a_i}\in[0.218,0.232]\,\mathrm{m}\). The inertias are \(J_{x_i}\in[2.068,2.332]\times10^{-2}\,\mathrm{kg\,m^2}\) and \(J_{y_i}\in[2.090,2.332]\times10^{-2}\,\mathrm{kg\,m^2}\). The cable lengths are \(\ell_{x_i}\in[0.576,0.636]\,\mathrm{m}\) and \(\ell_{y_i}\in[0.523,0.578]\,\mathrm{m}\), and the damping ratios are \(\zeta_{x_i}\in[0.523,0.594]\) and \(\zeta_{y_i}\in[0.528,0.583]\).

The physical model is then transformed into the normal form \eqref{eq: follower_dynamics_lcss} using the algorithm of \cite{mueller2009normal}, leading to
\[
\bar A_i=
\begin{bmatrix}
A_i&B_iL_i\\
\Lambda_iC_i&\Gamma_i
\end{bmatrix},
\qquad
\bar B_i=
\begin{bmatrix}
B_i\\
0
\end{bmatrix}.
\]
These matrices are the same block matrices that appear in \eqref{eq:embed}. The transformation was verified numerically, and the largest real part among the internal-dynamics eigenvalues was \(-2.156\). Hence, all transformed internal dynamics are Hurwitz, consistent with Assumption~\ref{Assumption: Minimum-phase}.

\subsection{Command-Filter Construction}
\label{subsec:sim_command_filter}
The command filter is constructed after the physical model has been transformed into the normal form \eqref{eq: follower_dynamics_lcss}. Since each follower has vector relative degree \([4,4]^T\), the external state dimension is \(8\), the internal dimension is \(4\), and the local filter state is selected as \(z_i\in\mathbb R^{12}\).

The external part of the filter assigns stable fourth-order chains in the two output channels. Let
\[
A_{\mathrm c}=
\begin{bmatrix}
0&1&0&0\\
0&0&1&0\\
0&0&0&1\\
0&0&0&0
\end{bmatrix},
\qquad
b_{\mathrm c}=
\begin{bmatrix}
0\\0\\0\\1
\end{bmatrix}.
\]
For the command filters, we use a common bandwidth \(\omega_{\mathrm{cf}}=10.5\,\mathrm{rad/s}\) for all followers, together with a follower-dependent scaling factor to induce heterogeneity at the local controller level:
\[
\kappa_i^{\mathrm{cf}}=0.94+0.025(i-1),
\qquad i=1,\ldots,6 .
\]
The assigned \(x\)- and \(y\)-channel pole sets are therefore
\[
\lambda_{x_i}^{\mathrm{cf}}
=
-\omega_{\mathrm{cf}}\kappa_i^{\mathrm{cf}}
\begin{bmatrix}
0.45&0.55&0.65&0.75
\end{bmatrix},
\]
\[
\lambda_{y_i}^{\mathrm{cf}}
=
-\omega_{\mathrm{cf}}\kappa_i^{\mathrm{cf}}
\begin{bmatrix}
0.47&0.57&0.67&0.73
\end{bmatrix}.
\]
Thus, across the six followers, the assigned \(x\)-chain poles lie in \([-8.39,-4.44]\), and the assigned \(y\)-chain poles lie in \([-8.16,-4.64]\).

Let \(k_{x_i}^{\mathrm{cf}}\in\mathbb R^{1\times4}\) and \(k_{y_i}^{\mathrm{cf}}\in\mathbb R^{1\times4}\) be the pole-placement rows satisfying
\[
\operatorname{spec}\bigl(A_{\mathrm c}-b_{\mathrm c}k_{x_i}^{\mathrm{cf}}\bigr)
=
\lambda_{x_i}^{\mathrm{cf}},
\qquad
\operatorname{spec}\bigl(A_{\mathrm c}-b_{\mathrm c}k_{y_i}^{\mathrm{cf}}\bigr)
=
\lambda_{y_i}^{\mathrm{cf}} .
\]
With the normal-form input-selector convention used for \(B_i\), the external-chain feedback is
\[
Q_i^{\mathrm{cf}}
=
-
\begin{bmatrix}
k_{x_i}^{\mathrm{cf}}&0_{1\times4}\\
0_{1\times4}&k_{y_i}^{\mathrm{cf}}
\end{bmatrix}.
\]
Based on this choice, we select
\[
Q_i=
\begin{bmatrix}
Q_i^{\mathrm{cf}}&-L_i
\end{bmatrix},
\qquad
F_i=\bar A_i+\bar B_iQ_i .
\]
Equivalently,
\[
F_i=
\begin{bmatrix}
A_i+B_iQ_i^{\mathrm{cf}}&0_{8\times4}\\
\Lambda_iC_i&\Gamma_i
\end{bmatrix}.
\]
Therefore, \(F_i\) is block lower triangular. Its external block has the assigned stable chain poles, and its internal block is \(\Gamma_i\), which is Hurwitz by the verified minimum-phase property of the transformed suspended-load model. Hence, \(F_i\) is Hurwitz for every simulated follower.

The immersion matrices are selected as
\[
\Pi_{i_1}=
\begin{bmatrix}
I_8&0_{8\times4}
\end{bmatrix},
\qquad
\Pi_{i_2}=
\begin{bmatrix}
0_{4\times8}&I_4
\end{bmatrix},
\qquad
H_i=C_i\Pi_{i_1}.
\]
Thus,
\[
\Pi_i=\operatorname{col}(\Pi_{i_1},\Pi_{i_2})=I_{12}.
\]
With the above construction,
\[
\Pi_iF_i-\bar A_i\Pi_i-\bar B_iQ_i=0,
\qquad
\Pi_iG_i-\bar B_iK_i=0
\]
once \(G_i\) is chosen as below. The second identity also gives \(\Pi_{i_2}G_i=0\), which is the implemented full-order admissibility condition for the internal part of the filter.

The input matrix \(G_i\) is selected from the DC map between the filter input and the reconstructed task-space output:
\[
D_i^{\mathrm{cf}}=H_iF_i^{-1}\bar B_i .
\]
The assigned fourth-order chain poles are all nonzero, so this DC map is nonsingular. The simulation code verifies this rank condition for each follower before accepting the model. The command-filter input gain is then
\[
K_i=-(D_i^{\mathrm{cf}})^{-1},
\qquad
G_i=\bar B_iK_i .
\]
Consequently,
\[
H_iF_i^{-1}G_i
=
D_i^{\mathrm{cf}}K_i
=
-I_2 .
\]
The command-interface matrix in \eqref{eq:T_sigma_interface} is therefore
\[
T_i=-(H_iF_i^{-1}G_i)^{-1}=I_2 .
\]
Thus, in the reported implementation, \(r_i=\sigma_i\); nevertheless, the construction keeps \(T_i\) explicitly in the notation because it is part of the general interconnection formula used in Corollary~\ref{cor:main}.

Finally, the matrix used in the filter-residual characterization is
\[
X_i=
F_i^{-1}G_i
\bigl(H_iF_i^{-1}G_i\bigr)^{-1}.
\]
With the above choice of \(T_i\), this matrix satisfies
\[
F_iX_i+G_iT_i=0,
\qquad
H_iX_i=I_2,
\]
which are precisely the identities used to connect the network-interface command \(\sigma_i\) to the stable local command filter.

\subsection{Actuator-Attack Scenario}
\label{subsec:sim_attacks}
In the simulation, only followers \(1\) and \(6\) are subjected to attack. Follower \(1\) is subjected to a combined but moderate state- and input-correlated attack plus an exogenous false-data injection, and follower \(6\) is subjected to a hijacking-type actuator compromise.

For follower \(1\), the injected actuator signal is
\[
w_{a_1}
=
K_{a_1}^{x}x_1+E_1u_{1,\mathrm{nom}}+d_{a_1}(t),
\]
where the input-correlated attack gain \(E_1\equiv K^{u}_{a_1}\) is
\[
E_1
=\begin{bmatrix}
-0.490&0.056\\
-0.042&-0.448
\end{bmatrix}.
\]
The bounded exogenous false-data component is
\[
d_{a_1}(t)=
\begin{bmatrix}
2.00\sin(0.031t+0.40)\\
1.50\cos(0.027t+0.70)
\end{bmatrix}.
\]
The state-correlated component is defined in the normal-form external state \(x_1\in\mathbb R^8\), not in the raw physical state. Let
\[
S_x=
\operatorname{diag}(1200,300,120,60,1200,300,120,60),
\]
and define
\[
\bar k_{1x}=
\begin{bmatrix}
1.00&0.35&0.12&0.04&0.05&0.02&0&0
\end{bmatrix},
\]
\[
\bar k_{1y}=
\begin{bmatrix}
-0.04&0&0.02&0&0.90&0.32&0.11&0.04
\end{bmatrix}.
\]
The shaped state-correlated gain is
\[
K_{\mathrm{sh},1}
=
\begin{bmatrix}
\bar k_{1x}\\
\bar k_{1y}
\end{bmatrix}
S_x^{-1},
\qquad
K_{a_1}^{x}=\psi_1^{-1}K_{\mathrm{sh},1}.
\]
The corresponding effective input matrix
\[
\Delta_{u_1}
=
I_2+\psi_1E_1\psi_1^{-1}
\]
satisfies Assumption~\ref{Assumption: Input Correlated Attack}. The exogenous component \(d_{a_1}\) satisfies Assumption~\ref{Assumption: Uncorrelated attack}, and since \(K_{a_1}^{x}\) is constant, the state-correlated term satisfies Assumption~\ref{Assumption: state-correlated attack envelope} with a constant envelope.

For the follower-6 attack, the input-correlated component is
\[
E_6=-0.95I_2 .
\]
Consequently,
\[
\Delta_{u_6}
=
I_2+\psi_6E_6\psi_6^{-1}
=
0.05I_2,
\]
so the direct input authority is reduced to \(5\%\), but remains strictly positive. Thus Assumption~\ref{Assumption: Input Correlated Attack} holds with \(\chi_6=0.05\).

The attack is designed to force the quadrotor to track the output of
\[
\dot\zeta_{a6}=S_{a6}\zeta_{a6},
\qquad
S_{a6}=
\begin{bmatrix}
0&0.30&0&0\\
-0.30&0&0&0\\
0&0&0&0.24\\
0&0&-0.24&0
\end{bmatrix},
\]
with selector
\[
C_{a6}=
\begin{bmatrix}
1&0&0&0\\
0&0&1&0
\end{bmatrix}.
\]
The two oscillator frequencies are \(0.30\,\mathrm{rad/s}\) and \(0.24\,\mathrm{rad/s}\), corresponding to periods of approximately \(20.9\,\mathrm{s}\) and \(26.2\,\mathrm{s}\). The initial amplitudes are \(A_{6x}=360\) and \(A_{6y}=260\), with phases \(0\) and \(\pi\), respectively. Hence, the injected exogenous component remains persistently time-varying over the simulation interval. Define
\[
d_k(t)=C_{a6}S_{a6}^{k}\zeta_{a6}(t),
\qquad
k=0,1,2,3,4.
\]
The corresponding adversarial external-chain trajectory is
\[
x_{a6}(t)=
\operatorname{col}
\bigl(
d_{0,1},d_{1,1},d_{2,1},d_{3,1},
d_{0,2},d_{1,2},d_{2,2},d_{3,2}
\bigr).
\]
The internal target associated with this adversarial chain is obtained from the Sylvester equation
\[
\Pi_{a6}S_{a6}
=
\Gamma_6\Pi_{a6}
+
\Lambda_6C_{a6},
\]
and is given by
\[
\eta_{a6}^{\star}(t)=\Pi_{a6}\zeta_{a6}(t).
\]
This construction gives
\[
\dot\eta_{a6}^{\star}
=
\Gamma_6\eta_{a6}^{\star}
+
\Lambda_6C_6x_{a6}.
\]

The hijacking feedback matrix \(K_{\mathrm{tr},6}\) is then selected in the external normal-form coordinates. Let
\[
\bar p_{h6}
=
\begin{bmatrix}
0.80&1.10&1.50&2.00&0.76&1.05&1.42&1.90
\end{bmatrix}.
\]
The initialization routine searches over \(\mathcal S_h=[0.2,20]\) and chooses the first \(\lambda_{h6}\in\mathcal S_h\) for which the feedback matrix \(K_{\mathrm{tr},6}\) satisfying
\[
\operatorname{spec}(A_6-B_6K_{\mathrm{tr},6})
=
-\lambda_{h6}\bar p_{h6}
\]
also makes the coupled hijacking-error matrix
\[
A_{h6}=
\begin{bmatrix}
A_6-B_6K_{\mathrm{tr},6}&B_6L_6\\
\Lambda_6C_6&\Gamma_6
\end{bmatrix}
\]
Hurwitz, with \(\max_{\lambda\in\operatorname{spec}(A_{h6})}\operatorname{Re}\lambda<-2.0\times10^{-2}\). The state-correlated hijacking gain is then chosen as
\[
K_{a_6}^{x}
=-\psi_6^{-1}K_{\mathrm{tr},6}.
\]
The bounded exogenous term is
\[
d_{a_6}(t)
=
\psi_6^{-1}
\left(
d_4(t)
-
L_6\eta_{a6}^{\star}(t)
+
K_{\mathrm{tr},6}x_{a6}(t)
\right).
\]
Therefore the actuator attack applied to follower \(6\) is
\[
w_{a_6}
=
K_{a_6}^{x}x_6
+
E_6u_{6,\mathrm{nom}}
+
d_{a_6}(t).
\]

\subsection{Controller Parameters}
\label{subsec:sim_controller_parameters}
The recovery controller is implemented in the transformed normal-form coordinates. The Riccati matrices \(P_i\) solving \eqref{eq:are} are computed from the follower-specific normal-form data with search initialized at
\[
\begin{split}
&\alpha_i=8000,\\
&\Sigma_i=50\operatorname{diag}(\bar\sigma,\bar\sigma),\\
&\bar\sigma=(1.20,0.70,0.35,0.12).
\end{split}
\]
The resulting certificates in Lemma~\ref{lem:normalformdesign} are strictly positive for all followers, with \(c_i=\lambda_{\min}(\Xi_i)\) ranging over \([3.51,5.19]\times10^{-2}\) and \(\min_i c_i=3.51\times10^{-2}\). Therefore, the local dissipation inequality \eqref{eq:Vdotlocal} is certified for every follower. The virtual actuator then uses the embedding errors
\[
e_{x_i}=x_i-\Pi_{i_1}z_i,
\]
to generate \(s_i=B_i^TP_ie_{x_i}\). We set
\[
\mu_i(t)=\mu_{0i}\exp(-\lambda_{\mu i}t),
\qquad
\mu_{0i}=3.0,
\qquad
\lambda_{\mu i}=0.002 .
\]
The adaptation gains are \(b_{\beta i,q}=0.08\), \(q=1,\ldots,4\), and \(b_{\rho i}=0.03\), and the adaptive variables are initialized at \(\beta_{i,q}(0)=\rho_i(0)=0.5>0\), satisfying the positivity condition used in Theorem~\ref{thm:local}.

The network interface \eqref{eq:net} uses \(\gamma_i(0)=100\), \(b_{\gamma i}=0.75\), and
\[
\varpi_i(t)=
\varpi_{0i}
\exp\!\left[
-\lambda_{\varpi i}
\bigl((t+\tau_{\varpi i})^{a_{\varpi i}}-\tau_{\varpi i}^{a_{\varpi i}}\bigr)
\right],
\]
with
\[
\varpi_{0i}=0.5,\qquad
\lambda_{\varpi i}=0.15,\qquad
a_{\varpi i}=0.6,\qquad
\tau_{\varpi i}=1000.
\]
This profile is positive, continuous, decays to zero, and belongs to \(L_1[0,\infty)\), as required in Theorem~\ref{thm:network}.

For comparison on agents \(1\) and \(6\), we also implement a baseline controller without any dedicated resiliency mechanism. The non-resilient comparison controller uses the same command filter, immersion matrices, and nominal embedding terms as the resilient controller, but removes the adaptive virtual-actuator recovery law. Specifically, after defining \(e_{x_i}=x_i-\Pi_{i_1}z_i\) and \(s_i=B_i^TP_ie_{x_i}\), the recovery input is replaced by the fixed feedback
\[
u_{i,\mathrm{nr}}^r=-k_{\mathrm{nom},i}s_i,
\qquad
k_{\mathrm{nom},i}=\frac{\alpha_i}{2}.
\]
Therefore, the non-resilient nominal actuator command is
\[
u_{i,\mathrm{nom}}^{\mathrm{nr}}
=
\psi_i^{-1}
\left(
Q_iz_i+K_ir_i-k_{\mathrm{nom},i}B_i^TP_ie_{x_i}
\right).
\]

\subsection{Leaders and Network Topology}
\label{subsec:sim_leaders_topology}
The three leaders' trajectories are generated by bounded smooth geometric paths \(\bar\phi_\ell(\tau)\in\mathbb R^2\), \(\ell=1,2,3\), with
\[
\phi_\ell(t)=\bar\phi_\ell(\tau(t)),
\qquad
\dot\tau(t)=0.30,
\qquad
\tau(0)=0.
\]

The leaders' paths have spatial scale \(500\,\mathrm{m}\) and are defined by
\[
\bar\phi_\ell(\tau)
=
500\left(c_0(\tau)+R(\theta(\tau))q_\ell(\tau)\right),
\qquad
\ell=1,2,3 .
\]
The center motion is
\[
\begin{split}
&c_0(\tau)=\\
&\begin{bmatrix}
1.10\sin(0.050\tau)+0.18\sin(0.130\tau)\\
0.75\sin(0.085\tau+0.25\sin(0.030\tau))+0.14\cos(0.120\tau)
\end{bmatrix},
\end{split}
\]
and the rotation angle is
\[
\theta(\tau)=0.065\tau+0.18\sin(0.035\tau).
\]
The relative leader offsets are
\[
q_\ell(\tau)
=
c(\tau)S(\tau)r_\ell(\tau)
\begin{bmatrix}
\cos a_\ell(\tau)\\
\sin a_\ell(\tau)
\end{bmatrix},
\]
with resizing factor
\[
c(\tau)=0.18+0.82
\left(
\frac{1+\cos(0.055\tau)}{2}
\right)^2,
\]
the shape matrix
\[
S(\tau)=
\begin{bmatrix}
s_{11}(\tau)&s_{12}(\tau)\\
0&s_{22}(\tau)
\end{bmatrix},
\]
and
\[
\begin{aligned}
s_{11}(\tau)&=1.00+0.18\sin(0.038\tau),\\
s_{12}(\tau)&=0.12\sin(0.050\tau+0.30),\\
s_{22}(\tau)&=0.82+0.16\cos(0.044\tau+0.50).
\end{aligned}
\]
The angular modulations are
\[
\begin{aligned}
a_1(\tau)&=0.20\sin(0.041\tau),\\
a_2(\tau)&=\tfrac{2\pi}{3}+0.28\sin(0.047\tau+0.90),\\
a_3(\tau)&=\tfrac{4\pi}{3}+0.24\cos(0.052\tau+0.40),
\end{aligned}
\]
and the radial modulations are
\[
\begin{aligned}
r_1(\tau)&=0.72+0.18\sin(0.060\tau+0.20),\\
r_2(\tau)&=0.62+0.16\cos(0.073\tau+1.10),\\
r_3(\tau)&=0.68+0.20\sin(0.067\tau+2.00).
\end{aligned}
\]
All path-defining functions and their \(\tau\)-derivatives are bounded. Since \(\dot\tau=0.30\), the generated signals \(\phi_\ell(t)\) are bounded and locally absolutely continuous with bounded derivatives, and therefore satisfy Assumption~\ref{Assumption: Leaders bound}.

The network topology is depicted in Figure~\ref{fig:sim_network_topology}. The directed interaction graph uses leader-to-follower weights
\[
a_{1,1}^L=0.40,\qquad a_{2,2}^L=0.20,\qquad a_{4,3}^L=0.30,
\]
and follower-to-follower weights
\[
\begin{gathered}
a_{2,1}^F=0.35,\quad a_{3,1}^F=0.25,\quad a_{3,2}^F=0.20,\quad a_{4,2}^F=0.30,\\
a_{5,3}^F=0.25,\quad a_{5,4}^F=0.25,\quad a_{6,5}^F=0.35,\quad a_{6,2}^F=0.15 .
\end{gathered}
\]
The graph has a leader-rooted united spanning tree, so Assumption~\ref{Assumption: Graph} holds. The induced matrix \(W_L=-H_F^{-1}L_{FL}\) is nonnegative and row stochastic to numerical precision, with maximum row-sum error \(1.1\times10^{-16}\). Hence the graph-induced target
\[
\sigma_i^\star(t)=\sum_{\ell=1}^{3}W_{i\ell}\phi_\ell(t)
\]
lies in \(\co(\Phi_L(t))\) for every follower and every instant. The gauge vector \(p=H_F^{-\top}\mathbf 1_M\) used in Theorem~\ref{thm:network} is positive entrywise, with \(\min_i p_i=0.50\) and \(\max_i p_i=2.4793\).

\begin{figure}[t]
\centering
\begin{tikzpicture}[scale=0.70, transform shape, >=Stealth, every node/.style={circle, draw, minimum size=8mm}]
\tikzset{
midarrow/.style={
decoration={markings, mark=at position 0.5 with {\arrow[scale=1.7]{Stealth}}},
postaction={decorate}
}
}
\node (L2) at (0,0) {L2};
\node (L1) at (-2,0) {L1};
\node (L3) at (2,0) {L3};
\node (A4) at (0,-2) {F4};
\node (A2) at (-2,-2) {F2};
\node (A1) at (-4,-2) {F1};
\node (A6) at (2,-4) {F6};
\node (A3) at (-2,-4) {F3};
\node (A5) at (0,-4) {F5};
\draw[midarrow] (L2) -- (A2);
\draw[midarrow] (L1) -- (A1);
\draw[midarrow] (L3) -- (A4);
\draw[midarrow] (A1) -- (A2);
\draw[midarrow] (A1) -- (A3);
\draw[midarrow] (A2) -- (A3);
\draw[midarrow] (A2) -- (A4);
\draw[midarrow] (A2) -- (A6);
\draw[midarrow] (A4) -- (A5);
\draw[midarrow] (A3) -- (A5);
\draw[midarrow] (A5) -- (A6);
\end{tikzpicture}
\caption{Directed communication graph linking six followers and three leaders.}
\label{fig:sim_network_topology}
\end{figure}

\subsection{Results and Discussion}
\label{subsec:sim_results}
We run the simulation for \(2000\,\mathrm{s}\), choosing the verification interval
\[
\mathcal T_v=[1800,2000]\,\mathrm{s}
\]
to report the quantitative maxima below. As depicted in Figure~\ref{fig:Leaders}, over the \(2000\,\mathrm{s}\) run, the maximum leader speed is \(24.0\,\mathrm{m/s}\), the maximum leader-position norm is \(1.11\times10^3\,\mathrm{m}\), the hull area ranges over \([2.72\times10^3,1.83\times10^5]\,\mathrm{m}^2\), the minimum pairwise leader distance is \(55.9\,\mathrm{m}\), and the maximum hull diameter is \(706\,\mathrm{m}\).

\begin{figure}[t]
\centering
\includegraphics[width=\linewidth]{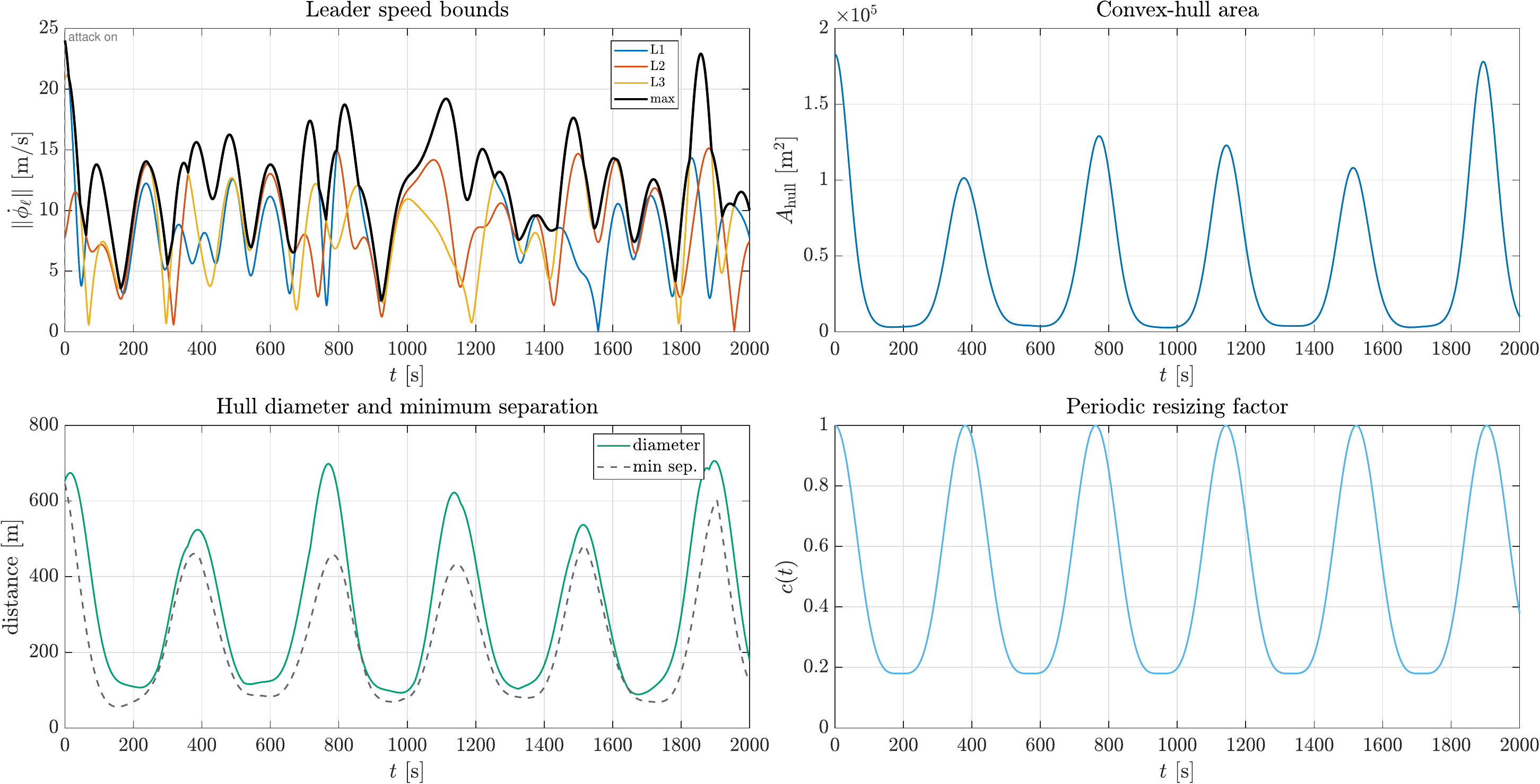}
\caption{Leaders' trajectory characteristics over the \(2000\,\mathrm{s}\) run: leaders' speed (top left), convex-hull area (top right), hull diameter (bottom left-solid), minimum pairwise separation of the leaders' positions (bottom left-dashed), and the leaders' resizing factor \(c(t)\) (bottom right).}
\label{fig:Leaders}
\end{figure}

Figures~\ref{fig:sim_trajectory_y} and~\ref{fig:sim_trajectory_x} show the follower \(y\)- and \(x\)-coordinate trajectories, respectively. The plots distinguish the command states \(\sigma_i\), the admissible-filter outputs \(H_iz_i\), and the physical outputs \(y_i\). The command states approach the leader convex hull asymptotically, whereas the physical outputs inherit the practical residual characterized in Corollary~\ref{cor:main}.

\begin{figure*}[t]
\centering
\includegraphics[width=0.9\textwidth]{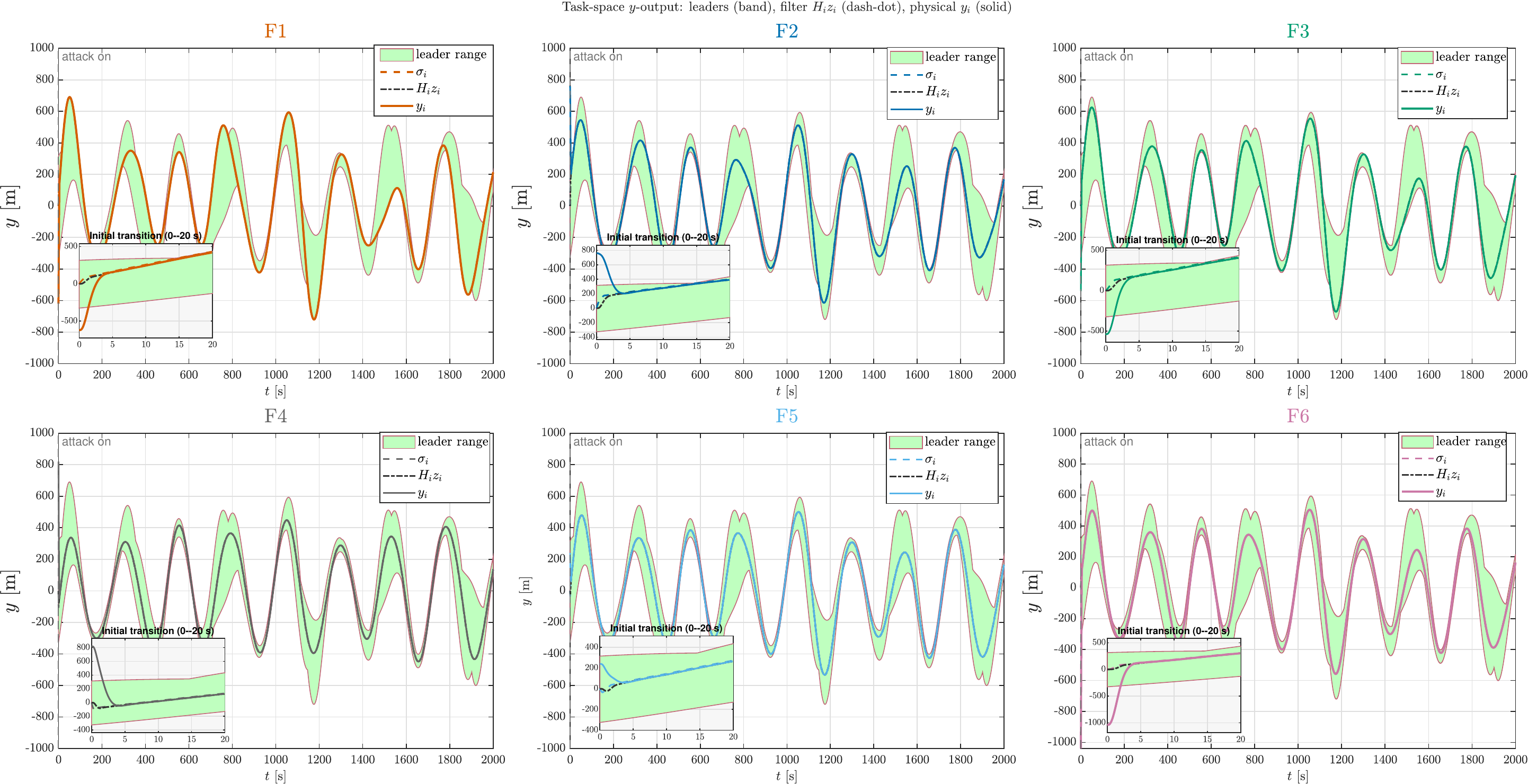}
\caption{Time-domain \(y\)-coordinate trajectories. The shaded band shows the instantaneous leader-coordinate range projected in the corresponding coordinate. The command states $\sigma_i$, filter outputs $H_iz_i$, and physical outputs $y_i$ are plotted to distinguish command containment from physical-output realization.}
\label{fig:sim_trajectory_y}
\end{figure*}

\begin{figure*}[t]
\centering
\includegraphics[width=0.9\textwidth]{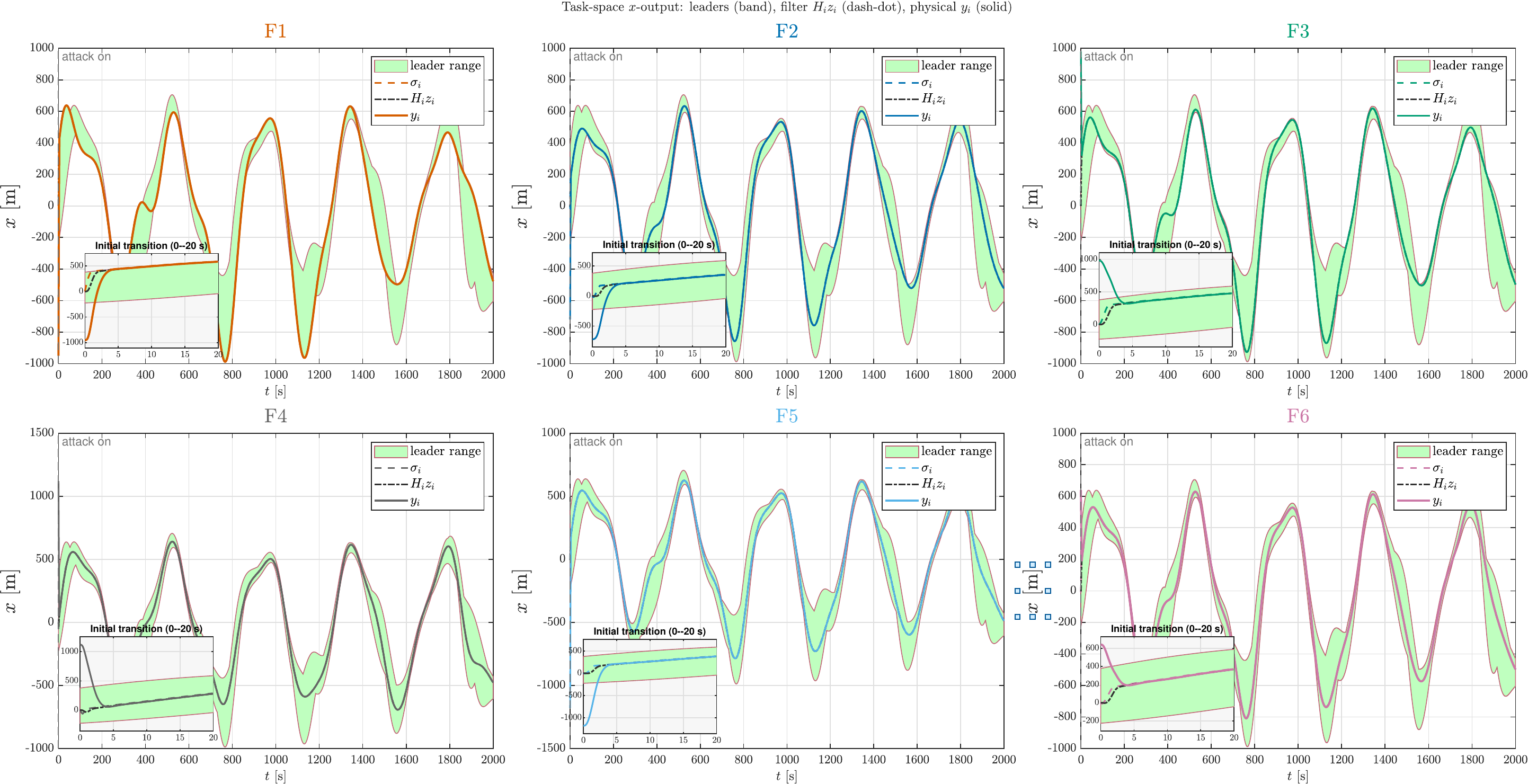}
\caption{Time-domain \(x\)-coordinate trajectories. The shaded band shows the instantaneous leader-coordinate range projected in the corresponding coordinate. The command states $\sigma_i$, filter outputs $H_iz_i$, and physical outputs $y_i$ are plotted to distinguish command containment from physical-output realization.}
\label{fig:sim_trajectory_x}
\end{figure*}

To evaluate the network-interface performance, we use
\[
E_{\sigma_i}(t)=\|\sigma_i(t)-\sigma_i^\star(t)\|,\qquad
E_\sigma(t)=\max_i E_{\sigma_i}(t),
\]
and
\[
D_{\sigma_i}(t)=
\dist\bigl(\sigma_i(t),\co(\Phi_L(t))\bigr),\qquad
D_\sigma(t)=\max_iD_{\sigma_i}(t).
\]
The corresponding values are depicted in Figure~\ref{fig:sim_command}. Since \(\sigma_i^\star(t)\in\co(\Phi_L(t))\), one has
\[
D_\sigma(t)\le E_\sigma(t).
\]
On \(\mathcal T_v\),
\[
\max_{t\in\mathcal T_v}D_\sigma(t)=2.33\times10^{-3}\,\mathrm{m},
\]
which is \(3.30\times10^{-6}\) of the maximum leader-hull diameter. At the final time,
\[
E_\sigma(2000)=1.50\times10^{-5}\,\mathrm{m},
\qquad
D_\sigma(2000)=4.26\times10^{-7}\,\mathrm{m}.
\]
These values support the command-containment conclusion of Theorem~\ref{thm:network} over the finite simulation horizon. Considering the zero initial conditions for \(\sigma_i\) and noting that the initial leaders' convex hull includes the origin, initially \(D_{\sigma_i}(0)=0\) for all the agents; however, since \(\sigma_i(0)\) is different from the dedicated point \(\sigma_i^\star(0)\), \(E_{\sigma_i}(0)\) is non-zero. As the agents start tracking \(\sigma_i^\star(t)\) and the leaders start moving in the task space, \(D_{\sigma_i}\) first increases for the agents, and then, as \(\sigma_i\) approaches \(\sigma_i^\star\) for all the agents, \(E_\sigma\) and \(D_\sigma\) both converge to zero. Moreover, since the adaptive protocol is not tuned using an a priori bound on the leaders' velocities, its gains increase only when the realized motion requires stronger interaction. Consequently, when the leaders reach velocity levels or velocity variations not encountered earlier in the finite simulation horizon, a small adaptation transient can appear in \(E_\sigma\) and \(D_\sigma\). These variations are consistent with the protocol's online nature and do not indicate a loss of command containment. Specifically, these transitions are expected to correlate with the leaders' speed, as illustrated in Figure~\ref{fig:Leaders}.

\begin{figure}[t]
\centering
\includegraphics[width=\linewidth]{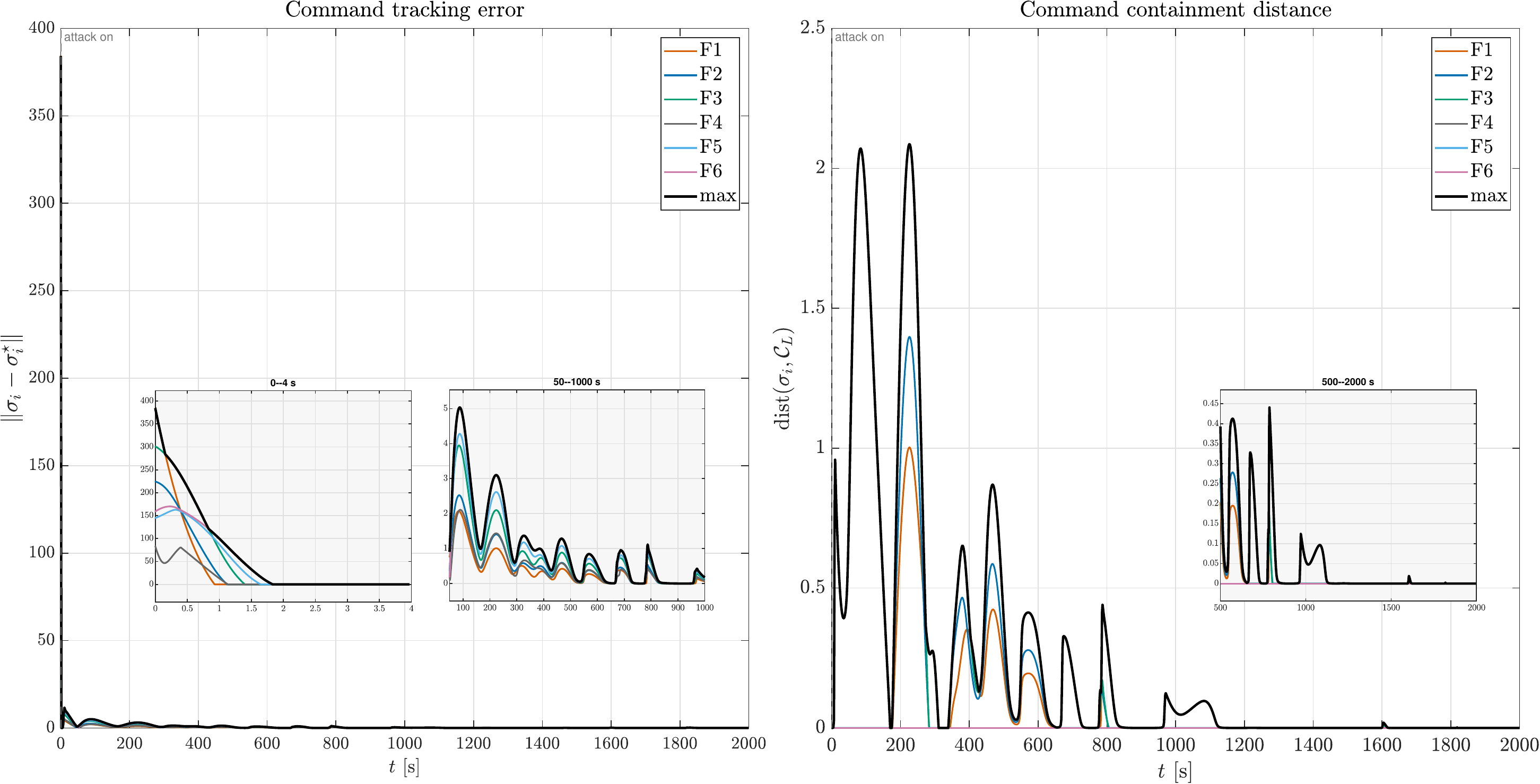}
\caption{The curve \(E_\sigma\) (left) measures convergence to the graph-induced convex-combination targets \(\sigma_i^\star\), while \(D_\sigma\) (right) measures the point-to-set distance from the command states to the leader convex hull.}
\label{fig:sim_command}
\end{figure}

Figure~\ref{fig:sim_network} reports the local disagreement variables \(\vartheta_i\) and the command rates \(\dot\sigma_i\). These quantities have different roles. The variables \(\vartheta_i\) are the local containment disagreements suppressed by the distributed protocol, whereas \(\dot\sigma_i\) measures the task-space motion of the generated commands and need not converge to zero when the leader hull continues to move. At \(t=2000\,\mathrm{s}\), the maximum command rate is \(8.60\,\mathrm{m/s}\), while the maximum leader speed over the simulation is \(24.0\,\mathrm{m/s}\). On \(\mathcal T_v\), the maximum disagreement norm is \(3.38\times10^{-1}\), and the maximum command rate is \(16.3\,\mathrm{m/s}\).

\begin{figure}[t]
\centering
\includegraphics[width=\linewidth]{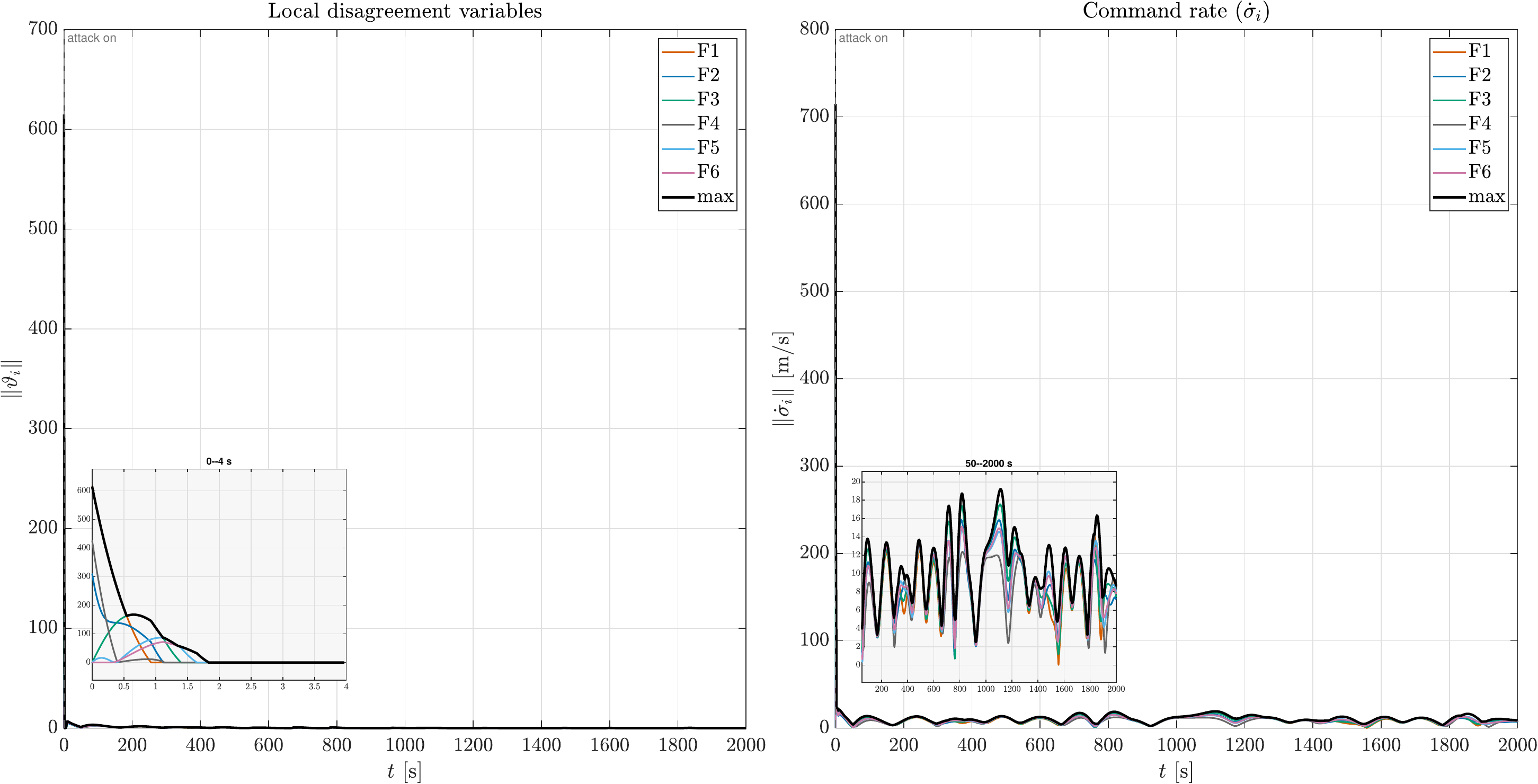}
\caption{The disagreement variables \(\vartheta_i\) (left) are reduced by the distributed protocol, while the command rates \(\dot\sigma_i\) (right) remain time varying because the leader-generated hull continues to move.}
\label{fig:sim_network}
\end{figure}

Figure~\ref{fig:sim_physical} reports the local-level performance through \(D_{y_i}(t)=\dist(y_i(t),\co(\Phi_L(t)))\), \(\|y_i-\sigma_i^\star\|\), \(\|y_i-\sigma_i\|\), and \(\|H_iz_i-\sigma_i\|\). The first two metrics describe the overall containment performance supported by Corollary~\ref{cor:main}. The last two quantify local realization errors: \(\|y_i-\sigma_i\|\) measures the output-command mismatch, while \(\|H_iz_i-\sigma_i\|\) is the command-filter realization residual.

\begin{figure*}[t]
\centering
\includegraphics[width=\textwidth]{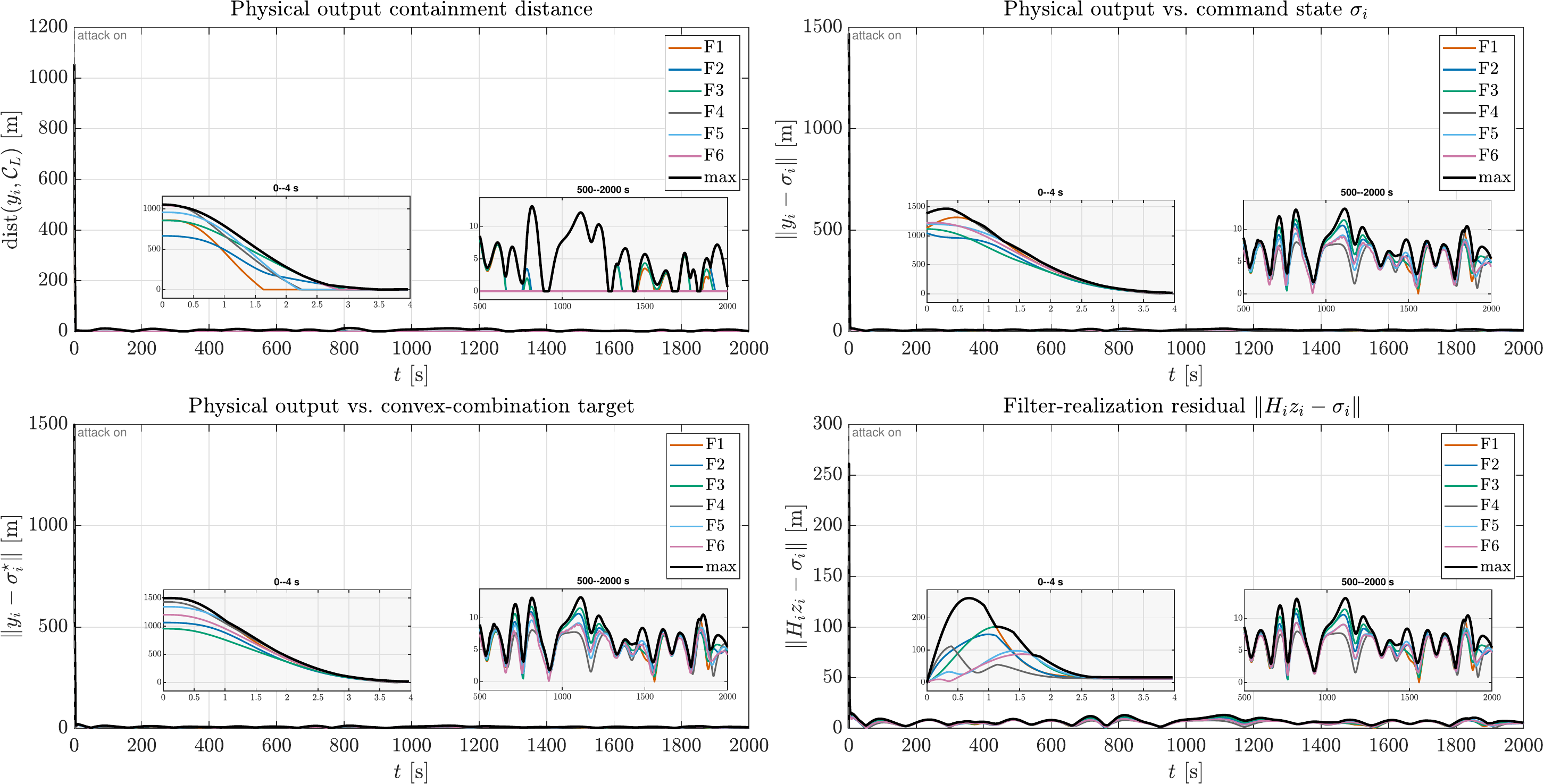}
\caption{The figure shows the distance of \(y_i\) to the leader hull, the output-command error, the error relative to the graph-induced target, and the command-filter realization residual.}
\label{fig:sim_physical}
\end{figure*}

Figure~\ref{fig:sim_decomposition} shows the residual decomposition associated with Corollary~\ref{cor:main}. We use
\[
D_y(t)=\max_i D_{y_i}(t)
\]
as the overall containment-performance measure. Over the interval \(\mathcal T_v\), we have
\[
\max_{t\in\mathcal T_v}D_y(t)=7.23\,\mathrm{m},
\]
which is approximately \(1.02\%\) of the maximum leader-hull diameter \(706\,\mathrm{m}\). On the same interval,
\[
\begin{split}
&\max_{i,\ t\in\mathcal T_v}\|H_i z_i(t)-\sigma_i(t)\|=10.5\,\mathrm{m},\\
&\max_{i,\ t\in\mathcal T_v}\|y_i(t)-H_i z_i(t)\|=0.943\,\mathrm{m}.
\end{split}
\]
Therefore, the dominant term in the containment residual is the command-filter realization residual \(H_i z_i-\sigma_i\), not the command-layer containment error. The plotted bound is
\[
\begin{aligned}
D_y(t)
\le\;&
D_\sigma(t)
+\max_i\|H_i z_i(t)-\sigma_i(t)\|\\
&+\max_i\|y_i(t)-H_i z_i(t)\|.
\end{aligned}
\]
On \(\mathcal T_v\), with \(\max_{t\in\mathcal T_v}D_\sigma(t)=2.33\times10^{-3}\,\mathrm{m}\), this agrees with Corollary~\ref{cor:main}, where the network layer produces near-exact command containment, while the physical outputs retain a practical residual due to the local realization of moving commands by heterogeneous relative-degree-four followers. The estimate in Corollary~\ref{cor:main} holds per follower: with the kernel gain \(\kappa_i=\int_0^\infty\|H_ie^{F_is}X_i\|\,ds\in[0.62,0.70]\), the predicted residual \(\varepsilon_i=\kappa_i\sup_{\mathcal T_v}\|\dot\sigma_i\|\) bounds the realized command-filter residual \(\max_{\mathcal T_v}\|H_iz_i-\sigma_i\|\) for every \(i\); for instance \((\text{realized},\varepsilon_i)=(9.82,10.04)\,\mathrm{m}\) for F1 and \((7.82,7.93)\,\mathrm{m}\) for F6.

\begin{figure}[t]
\centering
\includegraphics[width=\linewidth]{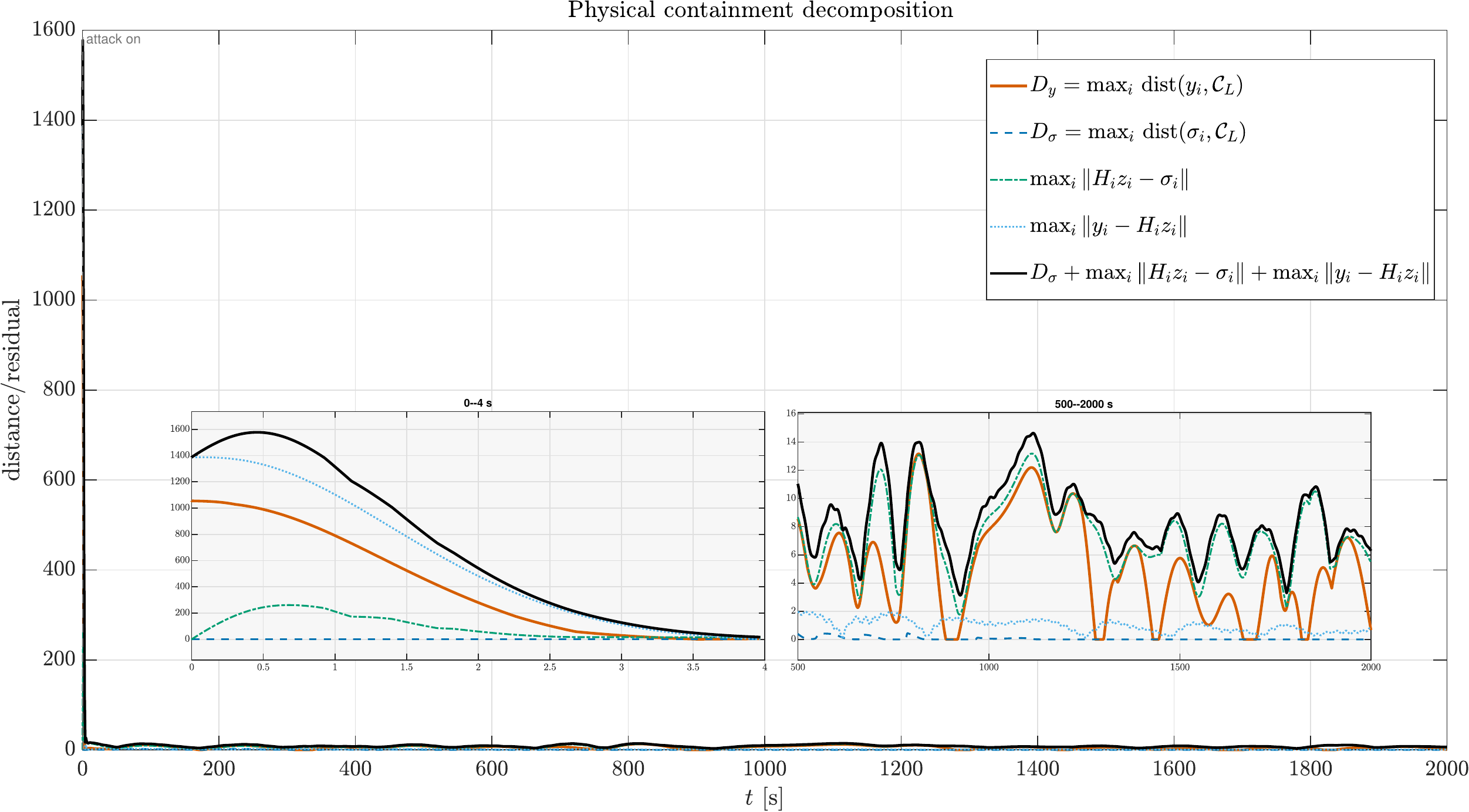}
\caption{Residual decomposition of Corollary~\ref{cor:main}. The physical containment distance is shown together with the command-containment term, the command-filter realization residual, the local tracking residual, and the decomposition upper bound.}
\label{fig:sim_decomposition}
\end{figure}

Figure~\ref{fig:sim_local_recovery} reports the error norms \(\|e_{x_i}\|\) and \(\|e_{\eta_i}\|\), which provide conservative indicators of the local recovery performance under actuator attacks. On the interval \(\mathcal T_v\), we have
\[
\max_{i,\ t\in\mathcal T_v}\|e_{x_i}(t)\|=0.943,
\qquad
\max_{i,\ t\in\mathcal T_v}\|e_{\eta_i}(t)\|=7.73 .
\]
Both maxima occur for follower~6, which is subjected to the most severe local attack. Since \(e_{x_i}\) contains the output coordinates and their derivatives up to order three, \(\|e_{x_i}\|\) is a normal-form tracking-error norm, not a pure position error. The larger value of \(\|e_{\eta_6}\|\) reflects the response of the stable zero-dynamics coordinates.

\begin{figure}[t]
\centering
\includegraphics[width=\linewidth]{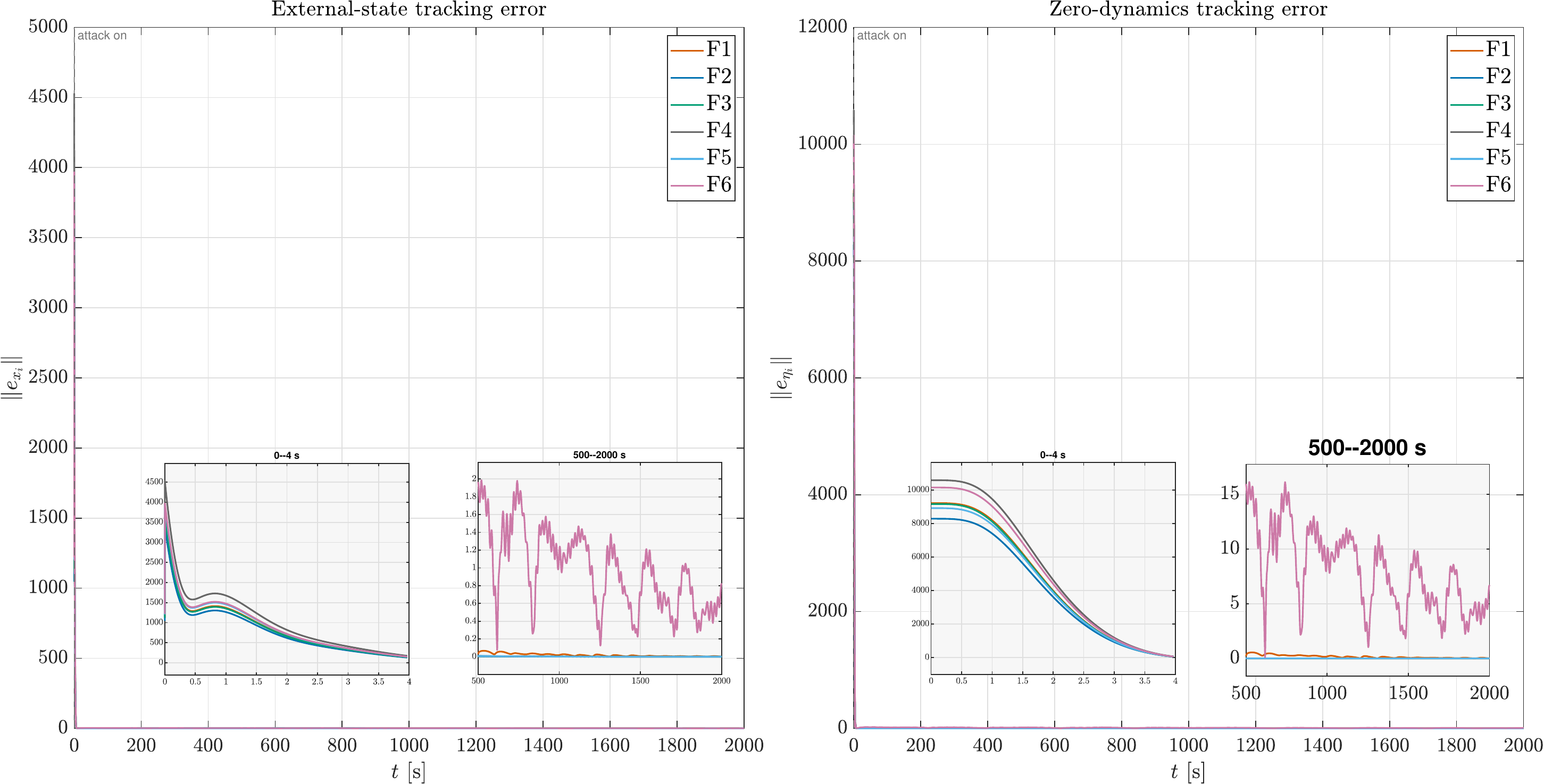}
\caption{Follower-wise local recovery errors. The two panels show \(\|e_{x_i}\|\) and \(\|e_{\eta_i}\|\) for each follower, identifying follower~6 as the source of the largest local residuals.}
\label{fig:sim_local_recovery}
\end{figure}

Finally, Figures~\ref{fig:sim_nonres_y} and~\ref{fig:sim_nonres_x} compare the resilient and non-resilient responses of the two attacked followers, agents \(1\) and \(6\). The figures also show the separation between the trajectories generated with the adaptive virtual actuator and those generated by the non-resilient baseline. Fo follower~1 the maximum resilient versus non-resilient baseline separation is \(2.50\,\mathrm{m}\), and for follower~6 the maximum resilient versus non-resilient separation is \(7.57{\times}10^3\,\mathrm{m}\).

\begin{figure*}[t]
\centering
\includegraphics[width=\textwidth]{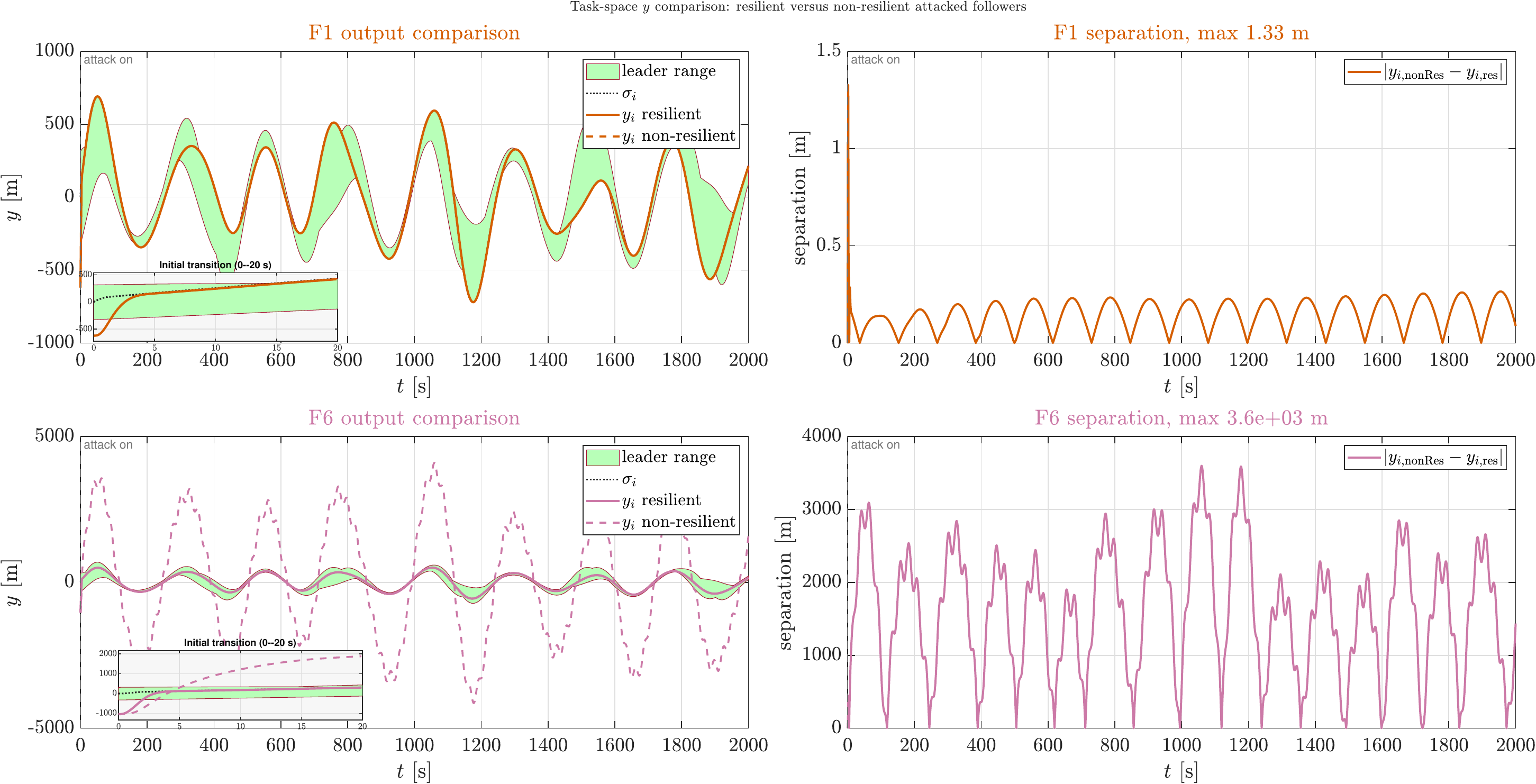}
\caption{Task-space \(y\)-coordinate comparison for the attacked followers. The solid curves show resilient physical outputs, the dashed curves show non-resilient physical outputs, the dotted curves show generated commands \(\sigma_i\), and the green band indicates the instantaneous leader-coordinate range. The right panels show the resilient/non-resilient separation.}
\label{fig:sim_nonres_y}
\end{figure*}

\begin{figure*}[t]
\centering
\includegraphics[width=\textwidth]{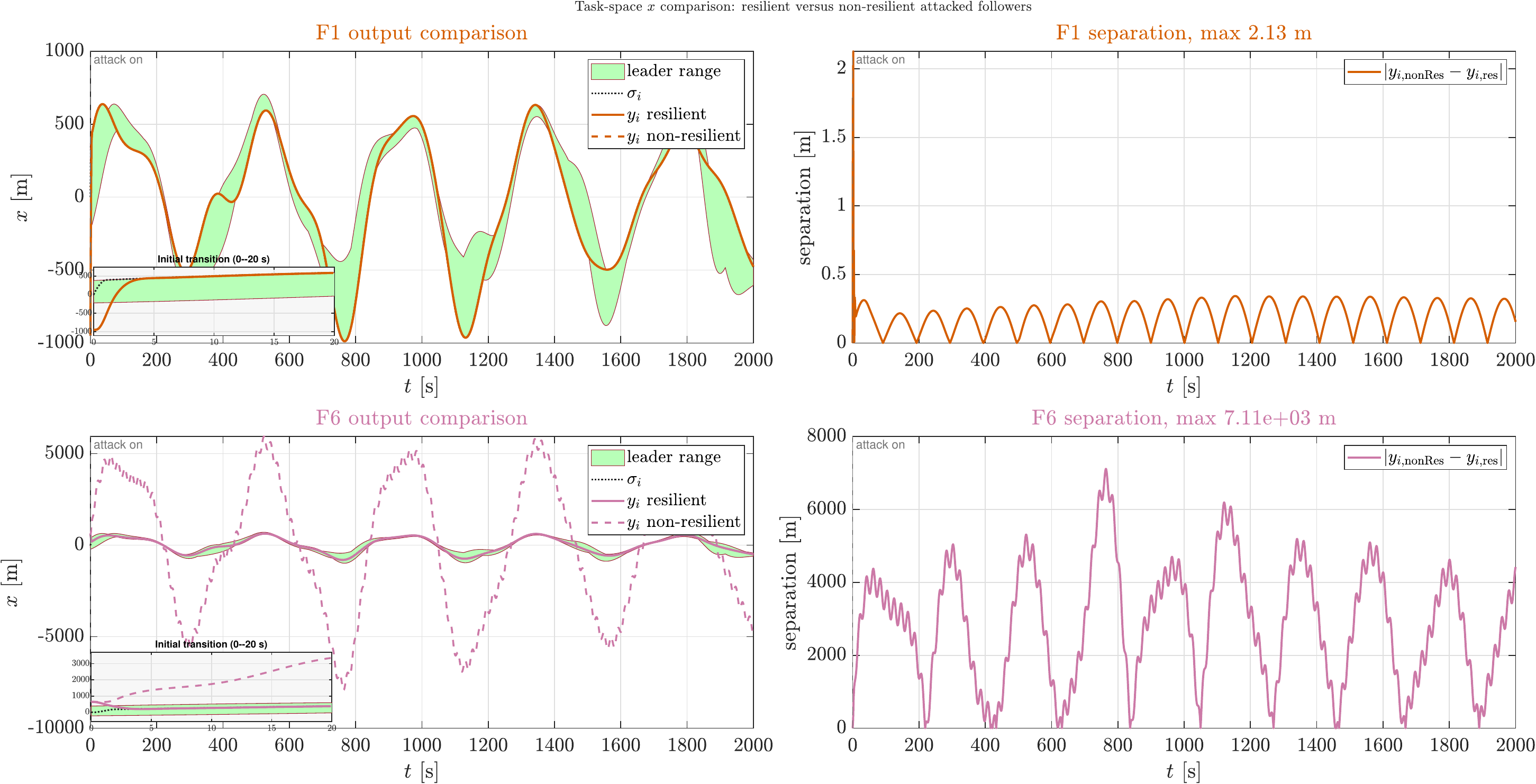}
\caption{Task-space \(x\)-coordinate comparison for the attacked followers. The solid curves show resilient physical outputs, the dashed curves show non-resilient physical outputs, the dotted curves show generated commands \(\sigma_i\), and the green band indicates the instantaneous leader-coordinate range. The right panels show the resilient/non-resilient separation.}
\label{fig:sim_nonres_x}
\end{figure*}

The per-follower metric values over \(\mathcal T_v\) in Table~\ref{tab:sim_per_follower} clarify the roles of the command layer, the command filter, and the local recovery layer. Define
\[
\begin{split}
&d_i^\Phi
=
\max_{t\in\mathcal T_v}
\dist\bigl(y_i(t),\co(\Phi_L(t))\bigr),\ e_{\eta_i}^{\max}
=
\max_{t\in\mathcal T_v}
\|e_{\eta_i}(t)\|,\\
&e_{x_i}^{\max}
=
\max_{t\in\mathcal T_v}
\|e_{x_i}(t)\|,\ \delta_i^{\mathrm{loc}}
=
\max_{t\in\mathcal T_v}
\|y_i(t)-H_i z_i(t)\|.
\end{split}
\]
The largest point-to-set distance \(d_i^\Phi\) occurs for follower~1, whereas the largest local tracking errors occur for follower~6. The quantity \(d_i^\Phi\) depends on where the realized physical output lies relative to the moving leader hull, whereas the local errors measure how well the attacked plant follows its own command and filtered command. For the two attacked followers the external error is concentrated almost entirely in the output coordinates, i.e.\ \(\|C_ie_{x_i}\|\approx\|e_{x_i}\|\) on \(\mathcal T_v\); hence \(\delta_i^{\mathrm{loc}}\) and \(e_{x_i}^{\max}\) coincide to the reported precision for F1 and F6, whereas for the unattacked followers the output component accounts for only about \(8\%\) of \(\|e_{x_i}\|\).

\begin{table}[t]
\centering
\caption{Per-follower maxima on \(\mathcal T_v=[1800,2000]\,\mathrm{s}\).}
\label{tab:sim_per_follower}
\scriptsize
\setlength{\tabcolsep}{3.0pt}
\begin{tabular}{@{}lrrrr@{}}
\hline
Follower
& \(d_i^\Phi\)
& \(\delta_i^{\mathrm{loc}}\)
& \(e_{x_i}^{\max}\)
& \(e_{\eta_i}^{\max}\) \\
\hline
F1 & \(7.23\) & \(7.55{\times}10^{-3}\) & \(7.55{\times}10^{-3}\) & \(6.34{\times}10^{-2}\) \\
F2 & \(5.10\) & \(1.34{\times}10^{-5}\) & \(1.67{\times}10^{-4}\) & \(1.14{\times}10^{-4}\) \\
F3 & \(3.41\) & \(1.14{\times}10^{-5}\) & \(1.43{\times}10^{-4}\) & \(9.39{\times}10^{-5}\) \\
F4 & \(0.00\) & \(1.13{\times}10^{-5}\) & \(1.46{\times}10^{-4}\) & \(9.48{\times}10^{-5}\) \\
F5 & \(0.00\) & \(1.43{\times}10^{-5}\) & \(1.80{\times}10^{-4}\) & \(1.16{\times}10^{-4}\) \\
F6 & \(0.00\) & \(9.43{\times}10^{-1}\) & \(9.43{\times}10^{-1}\) & \(7.73\) \\
\hline
\end{tabular}
\end{table}

Table~\ref{tab:sim_metrics} summarizes the main evaluation metrics. The command layer reaches a final point-to-set distance of \(4.26\times10^{-7}\,\mathrm{m}\), while the physical outputs achieve practical containment with maximum verification-interval distance \(7.23\,\mathrm{m}\). The decomposition shows that the dominant term in the residual bound is the command-filter residual \(H_i z_i-\sigma_i\).

\begin{table}[t]
\centering
\caption{Main evaluation metric values.}
\label{tab:sim_metrics}
\scriptsize
\setlength{\tabcolsep}{3.0pt}
\begin{tabular}{@{}lr@{}}
\hline
Quantity & Value \\
\hline
Simulation horizon & \(2000\,\mathrm{s}\) \\
Verification interval & \([1800,2000]\,\mathrm{s}\) \\
Maximum leader speed & \(24.0\,\mathrm{m/s}\) \\
Maximum leader-hull diameter & \(706\,\mathrm{m}\) \\
Final \(E_\sigma\) & \(1.50{\times}10^{-5}\,\mathrm{m}\) \\
Final \(D_\sigma\) & \(4.26{\times}10^{-7}\,\mathrm{m}\) \\
Max. \(D_\sigma\) on \(\mathcal T_v\) & \(2.33{\times}10^{-3}\,\mathrm{m}\) \\
Max. \(D_y\) on \(\mathcal T_v\) & \(7.23\,\mathrm{m}\) \\
Max. \(\|H_i z_i-\sigma_i\|\) on \(\mathcal T_v\) & \(10.5\,\mathrm{m}\) \\
Max. \(\|y_i-H_i z_i\|\) on \(\mathcal T_v\) & \(0.943\,\mathrm{m}\) \\
\hline
\end{tabular}
\end{table}

\section{Conclusion}
This work developed a continuous two-layer architecture for resilient practical output containment of heterogeneous linear MIMO followers under actuator attacks and leader-model nondisclosure. The local virtual-actuator layer compensates for state-correlated, input-correlated, and bounded exogenous actuator attack effects in the command-filter tracking-error dynamics. The network layer achieves asymptotic containment of the generated commands over directed leader-rooted graphs without leader-dynamics reconstruction, known leader-motion bounds, or global graph knowledge. The physical-output statement remains practical because implementing the generated task-space command through the selected stable local filter induces the residual characterized in Corollary~\ref{cor:main}.
\bibliographystyle{IEEEtran}
\bibliography{Bib4}
\end{document}